\documentclass[runningheads]{llncs}
\usepackage[T1]{fontenc}

\usepackage{amsmath,amssymb,amsfonts,stmaryrd,extarrows,dsfont}
\usepackage{graphicx}
\usepackage{complexity}
\usepackage{graphicx}
\usepackage{todonotes}
\usepackage{bbold}
\usepackage{tikz}
\usepackage{subfigure}
\usepackage{url}
\usepackage{enumerate}
\usepackage{robsafety}
\usepackage{wrapfig}
\usepackage{multirow}
\usepackage{appendix}
\usepackage{thm-restate}
\usepackage{tikz}
\usetikzlibrary{decorations,arrows,shapes,automata, patterns,calc,positioning}
\usepackage{color}

\urlstyle{rm}

\newcommand{\pspace}[0]{\mathsf{PSPACE}}
\renewcommand\paragraph[1]{\smallskip\noindent\textbf{#1.}}

\newcommand\ignore[1]{}

\begin{document}
\title{
Controller Synthesis in Timed B\"uchi Automata: Robustness and Punctual Guards
}
\titlerunning{Robustness and Punctual Guards}
\author{Benoît Barbot\inst{1} \and
Damien Busatto-Gaston\inst{1}\and\\
Catalin Dima\inst{1}\and
Youssouf Oualhadj\inst{1}}
\authorrunning{Barbot et. al.}
\institute{Univ Paris Est Creteil, LACL, F-94010 Creteil, France\\
\email{benoit.barbot@u-pec.fr}\\
\email{damien.busatto-gaston@u-pec.fr}\\
\email{catalin.dima@u-pec.fr}\\
\email{youssouf.oualhadj@u-pec.fr}}
\maketitle              %
\begin{abstract}
We consider the synthesis problem on timed automata with B\"uchi objectives, where delay choices made by a controller are subjected to small perturbations.
Usually, the controller needs to avoid punctual guards, such as testing the equality of a clock to a constant.
In this work, we generalize to a robustness setting that allows for punctual transitions in the automaton to be taken by controller with no perturbation.
In order to characterize cycles that resist perturbations in our setting,
we introduce a new
structural requirement on the reachability relation along an accepting cycle of the automaton.
This property is formulated on the %
region abstraction, and generalizes the existing characterization of winning cycles in the absence of punctual guards.
We show that the problem remains within $\pspace$ despite the presence of punctual guards.

\keywords{Timed systems \and Robustness \and Controller synthesis.}
\end{abstract}

\section{Introduction}

The design and verification of reactive systems usually requires the support of formal techniques to ensure the correction of the underlying model. 
The controller synthesis approach ensures a correct-by-design system since it supplies the designer of the system with formal tools ensuring that the system behaves according to a given specification.
These formal tools consist in a game theoretic modeling where the reactive system is decomposed into controllable and uncontrollable behaviors. Ensuring the correctness of the system boils down to designing a sequence of controllable moves such that the specification holds under these moves regardless of the uncontrollable move that might be taken by the system.
Thanks to the game theoretic toolbox, the sequence of controllable moves is automatically computed as a strategy in this game that ensures the correction of the designed systems. 

When designing a reactive system with timing constraints, the formalism of choice is the one of \emph{timed automata}.
In a nutshell, the reactive system is modelled as a timed automaton over which two entities will compete. 
The first entity (the controller) aims at enforcing an already agreed upon specification while the second one aims at preventing the controller from achieving it. This is known as timed games~\cite{MalerPS95,AsarinMPS98,CassezDFLL05}.  

That being said, timed automata are nothing but a mere mathematical tool that might exhibit unrealistic behaviors. 
They may allow models that rely on infinitely precise behaviors to satisfy a specification, which can sometimes be an unrealistic assumption. 
A series of works focused on integrating some degree of robustness in the semantics of timed automata, i.e.~enforcing behaviors resisting small 
perturbations in the evolution of the clock variables ~\cite{AB-formats11,Puri00,SBMR-concur13,GuptaHenzingerJagadeesan97,WulfDMR04}.

In particular, the controller synthesis problem for timed automata under \emph{conservative semantics of perturbations} and with a Büchi objective is known to be $\PSPACE$-complete \cite{SBMR-concur13,Sankur13} (see also \cite{BMRS19} for a symbolic approach and \cite{OualhadjRS14} for a probabilistic extension).
In this setting, a game is played between $\cont$ and $\pert$ on a timed automaton, with $\cont$ choosing delays and transitions to follow, and $\pert$ modifying every delay by a small amount. %
In particular, $\cont$ must make sure that any transition he picks is still available after the deviated time elapse.
The controller synthesis problem asks for the existence of a $\cont$ strategy in this game that guarantees the objective against any decisions made by $\pert$. 

The perturbation semantics naturally abstract imprecision in the measuring of time and the difficulty of enforcing a precise passage of time between consecutive events. 
To this end, the work in \cite{SBMR-concur13} %
rules out the possibility for $\cont$ to choose \emph{punctual} transitions,
i.e. transitions which can be taken after a unique delay, since 
any perturbation to this delay would disable the transition.
Such transitions correspond to guards in which at least one clock must be tested to have some exact value, for example $x = 2$.

However exact constraints might still occur in models of timed systems \cite{Rodriguez-NavasP13},
e.g.~when some controllers might have sufficiently reliable internal clocks. 
In such systems, some degree of unreliability in the measurement of time is still needed to model other
 uncontrollable components incorporating clocks that might be subjected to perturbations. 
A natural problem would then be to synthesize controllers in a system described as the synchronous composition of reliable components where delays and clock measurements are considered exact and unreliable components where perturbations are applied. 
This does not fit the setting of \cite{SBMR-concur13},
as some transitions would not be subjected to perturbations. At the heart of the issue lie punctual transitions derived from reliable components of the system, that could make their way to the resulting timed automaton despite being forbidden under classical perturbation semantics. 
In this paper, we forgo the compositional framework and directly allow $\cont$ to choose punctual transitions in the robust controller synthesis problem, in which case his choice of delay is not subjected to perturbation.
Our setting, in which edges of the automaton are either reliable and punctual, or unreliable and non-punctual, 
paves the way
towards the more general setting where even some non-punctual transitions could be considered reliable.

We show that this mixing of exact transitions and transitions under perturbation can be analyzed using the 
the so-called (folded) orbit graphs formalism from \cite{AB-formats11,Puri00,SBMR-concur13,DBLP:conf/formats/Stainer12}. 
Our algorithm that checks the existence of a robust controller for a Büchi objective 
involves finding, in the region automaton, a reachable cyclic path satisfying the Büchi condition whose folded orbit graph is a \emph{cluster graph}, i.e., a disjoint union of complete graphs.
The proof that such reachable cyclic paths %
can be robustly repeated 
involves, similarly with \cite{SBMR-concur13,OualhadjRS14},
showing that, starting from any clock valuation $\nu$, 
and no matter what valuation $\nu'$ is reached %
as a cycle under perturbation is followed,
$\cont$ can enforce a run from $\nu'$ to a neighborhood of $\nu$. Therefore, $\cont$ ensures the robustness of his behaviour as he can compensate for any deviation.

In particular, this involves asking the reachability relation from valuation to valuation along a cyclic path of the automaton to be complete on sets of valuations.
But, unlike in the case of \cite{SBMR-concur13}, where these sets form the standard region partition, in our case they will form a finer partition of each region into subsets that we call \emph{slices}.

In this paper, we start by defining in Section~\ref{sec:partRob} our perturbation semantics in the presence of punctual guards, then in Section~\ref{sec:Prel} we recall the folded orbit graphs that represent the reachability relation along a cycle.
In Section~\ref{sec:robust-path}, we show that robustly reaching a state by following some path of the automaton can be guaranteed under a simple syntactic assumption on the guards that are traversed.
In Section~\ref{sec:partRobCharc}, we define a class of folded orbit graphs (so-called cluster graphs), and state our main result: the cycles that can be iterated under our perturbed semantics exactly correspond to those in this class.
In Section~\ref{sec:Slices}, we introduce the slice partition induced by a cluster folded orbit graph in order to represent sets of valuations with a complete reachability relation. 
Finally, in Section~\ref{sec:proofMainThm} we prove our main result by adapting the reasoning of \cite{SBMR-concur13} to slices in the presence of punctual transitions.

\section{Partial robustness semantics}
\label{sec:partRob}
\paragraph{Timed automata}
Given a finite set of clocks $\Clocks$, we~call \emph{valuations} the
elements of~$\Realnn^\Clocks$. For a subset $R\subseteq \Clocks$
and a valuation~$\nu$, ${\nu[R:=0]}$~is the valuation defined by 
${\nu[R:=0](x) = \nu(x)}$ for $x \in \Clocks \setminus R$ and 
${\nu[R:=0](x) = 0}$ for $x \in R$. Given $d \in \Realnn$ 
and a valuation $\nu$,
the valuation $\nu+d$ is called a time-successor of $\nu$ and is defined by $(\nu+d)(x) = \nu(x)+d$ for all $x\in
\Clocks$. We~extend these operations to sets of valuations in the
obvious way, and write $\mathbf{0}$ for the valuation that assigns~$0$ to
every clock.
We will consider the usual $d_\infty$ metric on $\Real^\Clocks$, defined as
$d_\infty(\nu,\nu') = \max_{x \in \Clocks}|\nu(x) - \nu'(x)|$,
and the Manhattan norm $\|\vec{v}\|_1$ of a vector $\vec{v}=(v_1,\dots,v_k)$, defined as $\sum_{i=1}^k v_i$.
We say that a valuation $\nu$ is bounded by $\ClockBound>0$ if %
$\nu\in \Valuations$.

An atomic clock constraint is a formula of the form $k \preceq x \preceq' l$
or $k \preceq x - y \preceq' l$ where $x,y \in \Clocks$, $k,l \in
\mathbb{Z}\cup\{-\infty,\infty\}$ and ${\mathord\preceq,\mathord\preceq' \in
\{\mathord<,\mathord\leq\}}$.  A~\emph{guard} $g$ is a conjunction of
atomic clock constraints.
A~valuation~$\nu$ satisfies~$g$, denoted $\nu \models g$,
if all constraints are satisfied when each $x\in \Clocks$ is replaced
with~$\nu(x)$.
We~write $\Guards$ for the set of guards built on~$\Clocks$.
A guard $g$ is called \emph{non-punctual} if there are at least two valuations $\nu$ and $\nu+d$ with $d\in\Realnn$ that satisfy $g$.
Conversely, a guard $g$ is \emph{punctual} if $\nu\models g\land d>0$ implies  $\nu+d\not\models g$.

A (bounded)~\emph{timed automaton} $\TA$ is a tuple $(\Locs,\Clocks,\ClockBound,\ell_0,E)$, where $\Locs$
is a finite set of locations, $\Clocks$~is a finite set of clocks, 
$\ClockBound\in\Natp$ is an upper bound for clocks,
$E\subseteq \mathcal{L} \times \Guards \times 2^\Clocks
\times \mathcal{L}$ is a set of edges, and $\ell_0\in \Locs$ is the
initial location.  An~edge $e = (\ell,g,R,\ell')$ is also written as
$\ell \xrightarrow{g, R} \ell'$.  
\begin{figure}[t!]
  \begin{center}
  \vspace{-.7cm}
  \begin{minipage}{.55\textwidth}
    \centering
    \begin{tikzpicture}[node distance=2.5cm,auto, scale = .7]
      \tikzstyle{every state}=[thick, circle,minimum size=17pt,inner sep=0pt]
      \node (preA) at (-1,0) {};
      \node[state] at (0,0) (A) {$\ell_0$};
      \node[state, right of=A,color=red!60!lightgray] (B) {$\ell_1$};
      \node[accepting,state,right of=B,color=blue!50] (C) {$\ell_2$};
      \everymath{\scriptstyle}
      \path[-latex'] 
      (preA) edge (A)
      (A) edge node[above] {$x<1, y:=0$} (B)
      (C) edge[bend left] node[below] {$\scriptstyle y = 2, y := 0$} (B)
      (B) edge[bend left] node[above] {$\scriptstyle x <  2, x:=0$} (C);
    \end{tikzpicture}
    \end{minipage}
    \hfill
    \begin{minipage}{.35\textwidth}
    \centering
  \begin{tikzpicture}[scale = .7]
    \foreach \x in {1,2,3} {\everymath{\scriptstyle}
    \draw (\x,0) -- +(-90:1mm) node[below] {$\x$};
    \draw[dotted, thin] (\x,0) -- +(0,3.2);
    \draw (0,\x) -- +(180:1mm) node[left] {$\x$};
    \draw[dotted, thin] (0,\x) -- +(3.2,0);}
    \everymath{\scriptstyle}
    \draw[latex'-latex', thin] (3.2,0) 
    node[right] {$x$} -- (0,0) 
    node[below] {$0$} node [left] {$0$} -- (0,3.2) 
    node[above] {$y$};
    \path[use as bounding box] (0,-1);
    \fill[] (0,0) circle(1.7pt);
    \draw[fill, opacity=.3, line width=1.7pt, shorten <=0.1cm, shorten >=0.1cm] (0,0) -- (1,1);
    \draw[fill, opacity=1, color=red!50, line width=2.2pt, shorten <=0.1cm, shorten >=0.1cm] (0,0) -- (1,0);
    \fill[opacity=.3,rounded corners=1pt,color=red!50] (0.19,0.1) -- (0.9,0.81) -- (0.9,0.1) -- cycle;
    \fill[opacity=.3,rounded corners=1pt,color=red!50] (1.1,0.9) -- (1.81,0.9) -- (1.1,0.19) -- cycle;
    \fill[opacity=1,rounded corners=1pt,color=red!50] (1.19,1.1) -- (1.9,1.81) -- (1.9,1.1) -- cycle;
    
    \draw[fill, opacity=.3, line width=2.2pt, shorten <=0.1cm, shorten >=0.1cm,color=red!50] (1,0) -- (2,0);
    \fill[opacity=.3,rounded corners=1pt,color=red!50] (1.19,0.1) -- (1.9,0.81) -- (1.9,0.1) -- cycle;

    \draw[fill, opacity=.3, line width=2.2pt, shorten <=0.1cm, shorten >=0.1cm,color=blue!50] (0,0) -- (0,1);
    \draw[fill, opacity=1, line width=2.2pt, shorten <=0.1cm, shorten >=0.1cm,color=blue!50] (0,1) -- (0,2);

    \fill[opacity=.3,rounded corners=1pt,color=blue!50] (0.1,0.9) -- (0.81,0.9) -- (0.1,0.19) -- cycle;
    \fill[opacity=.3,rounded corners=1pt,color=blue!50] (0.1,1.9) -- (0.81,1.9) -- (0.1,1.19) -- cycle;
    \fill[opacity=.3,rounded corners=1pt,color=blue!50] (1.1,1.9) -- (1.81,1.9) -- (1.1,1.19) -- cycle;
    \fill[opacity=.3,rounded corners=1pt,color=blue!50] (0.19,1.1) -- (0.9,1.81) -- (0.9,1.1) -- cycle;

    \draw[fill, opacity=1, line width=2.2pt, shorten <=0.1cm, shorten >=0.1cm,color=blue!50] (0,2) -- (1,2);
    \draw[fill, opacity=.3, line width=2.2pt, shorten <=0.1cm, shorten >=0.1cm,color=blue!50] (1,2) -- (2,2);

  \end{tikzpicture}   
  \end{minipage}
  \vspace{-.4cm}
    \caption{\label{fig:taEx}A timed automaton, and some of its reachable regions.}
  \end{center}
  \vspace{-.6cm}
\end{figure}
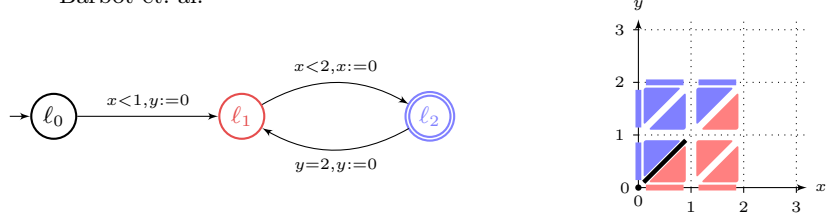

A configuration is a
pair $q=(\ell,\nu)\in \Locs\times \Realnn^\Clocks$.
The set of possible behaviors of a timed automaton can be described by the set
of its runs, as follows.
An infinite~\emph{run} of~$\TA$ is a sequence $q_1e_1q_2e_2\ldots$ where $q_i \in
\Locs\times \Valuations$,
and writing $q_i = (\ell,\nu)$, either $e_i \in \Realnn$, in which case $q_{i+1}
= (\ell,\nu + e_i)$ and $e_i$ is called a delay transition, or
$e_i =(\ell,g,R,\ell')\in E$, in which case 
$\nu \models g$, $q_{i+1} = (\ell',\nu[R:=0])$ and $e_i$ is called an edge transition.
A delay transition is sometimes denoted $(\ell,\nu)\xrightarrow{d}(\ell,\nu+d)$,
and an edge transition is sometimes denoted $(\ell,\nu)\xrightarrow{g,R}(\ell,\nu[R:=0])$.
A finite run of length $k$ is a sequence $q_1e_1\ldots q_ke_kq_{k+1}$ defined similarly. %
Whenever $\rho$ is a finite run that ends in $(\ell,\nu)$ and $\rho'$ is a finite or infinite run starting from $(\ell,\nu)$, we let $\rho\rho'$ denote the concatenations of the two.

\paragraph{Game semantics}
In order to define the robust controller synthesis problem, we introduce a game played on a timed automaton $\TA$.
Two players called \emph{$\cont$} and \emph{$\pert$} compete in an arena induced by 
$\TA$ and a parameter $\delta > 0$ in a two-player zero-sum game $\thegame$.
The interaction between $\cont$ and $\pert$ in $\thegame$ follows the following rules to build an infinite run:
given an initial configuration $(\ell,\nu)$, or a finite run ending in $(\ell,\nu)$,
\begin{itemize}
\item either $\cont$ picks a delay $d\geq 0$ and an edge $e=(\ell,g,R,\ell')$ with a \emph{punctual} guard $g$ to extend the run so that $\nu+d\models g$, in which case the run is extended with delay $d$ and edge $e$ without any perturbation,
\item or $\cont$ picks a delay $d\geq \delta$ and an edge $e=(\ell,g,R,\ell')$ with a \emph{non-punctual} guard $g$ to extend the run, so that $g$ is satisfied after any delay in the set $[d-\delta,d+\delta]$. In this case, $\pert$ chooses an actual delay $d' \in [d-\delta,d+\delta]$ to extend the run, after which the edge $e$ is taken. 
\end{itemize}
It is then again $\cont$'s turn to play from the new configuration.
We say that a run is \emph{well-formed} if it is compatible with these semantics, so that it starts with a delay transition, alternates between delay transitions and edge transitions, and ends with an edge transition.
Note that the concatenation of well-formed runs is well-formed as well, and that a well-formed run can be decomposed as the concatenation of atomic well-formed runs $(\ell,\nu)\xrightarrow{d}(\ell,\nu')\xrightarrow{g,R}(\ell',\nu'')$.

Formally, the state space of $\thegame$ is partitioned into the
states where it is $\cont$'s turn to make a move
$\Locs \times \Valuations$, and the %
states where it is $\pert$'s turn to make a move $\Locs \times \Valuations\times \Realnn \times E$. 
A play over the arena induced by $\thegame$ is an infinite alternation between states of $\cont$ and states of $\pert$ 
that forms a run of the automaton.
A strategy for $\cont$ in  $\thegame$ is a function 
$\sigmacont$ that assigns to every finite play ending in a state of $\cont$ a pair $(d, e)$ where $d$ is a delay and $e$ an edge of $\TA$.
A strategy for $\pert$ in $\thegame$ is function $\sigmapert$ that assigns to every finite play ending in a state of $\pert$ a perturbation
in $[-\delta, \delta]$.
Once a pair of strategies $(\sigmacont, \sigmapert)$ is fixed, we denote by $\outcome(\sigmacont, \sigmapert)$ the unique run induced over $\TA$ that starts from $(\ell_0,\vec{0})$ and follows them.
Notice that this run is not necessarily infinite, since some $\cont$ states may have no legal moves. %

A Büchi objective is given by a subset of location $\mathsf{Buchi} \subseteq \Locs$. 
We say that an infinite run $\rho$ in $\TA$ satisfies a Büchi objective $\mathsf{Buchi}$ if the set of locations visited infinitely often along $\rho$ contains states from $\mathsf{Buchi}$.
We say that $\cont$ wins $\thegame$ for the objective $\mathsf{Buchi}$ if $\cont$ has a winning strategy $\sigmacont$ 
so that for any strategy $\sigmapert$ played by $\pert$, 
$\outcome(\sigmacont, \sigmapert)$
is an infinite run that satisfies $\mathsf{Buchi}$.

Finally, we define the \emph{robust controller synthesis problem} as follows: given a timed automaton $\TA$ equipped with a Büchi objective $\mathsf{Buchi}$, does there exists an amplitude $\delta > 0$ small enough so that $\cont$ wins the resulting game $\thegame$?

\section{Preliminaries}
\label{sec:Prel}
\paragraph{The region abstraction}
With respect to the set $\Clocks$ of clocks and an upper bound $\ClockBound\in\Natp$ on %
clocks,
we partition the set $\Valuations$ of valuations into finitely many regions \cite{AD-tcs94}. We denote by $\Regions$ this set of (bounded) regions. Each region is characterised by a pair $(\iota, \beta)$ where $\iota\colon \Clocks\to [0,\ClockBound)\cap \Nat$
and $\beta$ is an ordered partition of~$\Clocks$ into subsets $\beta_0\uplus \beta_1\uplus \cdots \uplus \beta_m$ (with $m\geq 0$), where $\beta_0$ can be empty but $\beta_i\neq\emptyset$ for $1\leq i\leq m$. A valuation $\nu$ of $\Valuations$ belongs to the region characterised by $(\iota, \beta)$ if:
\begin{itemize}
\item for all $x\in\Clocks$,
$\iota(x) =\lfloor \nu(x) \rfloor$, where $\lfloor \nu(x) \rfloor$ is the integral part of $\nu(x)$;
\item for all $x\in \beta_0$, $\mathsf{fract}(\nu(x))=0$, where $\mathsf{fract}(\nu(x))=\nu(x)-\lfloor \nu(x) \rfloor$ is the fractional part of $\nu(x)$;
\item for all $1\leq i< j\leq m$, for all $x,y\in \beta_i$
and all $z\in \beta_j$, 
$0<\mathsf{fract}(\nu(x))$, $\mathsf{fract}(\nu(x)) = \mathsf{fract}(\nu(y))$ and $\mathsf{fract}(\nu(x)) < \mathsf{fract}(\nu(z))$.
\end{itemize}
\noindent A region $r$ characterized by $(\iota, \beta_0\uplus \cdots \uplus \beta_m)$ is said to be of dimension $m$.

\begin{remark}
Intuitively, a region of dimension $m$ is a polytope of dimension $m$ containing every valuation $\nu$ such that the integer part of its coordinates is the integer valuation $\iota$, and such that the fractional part of its coordinate is ordered according to the sequence $\beta_0<\beta_1<\cdots<\beta_m$, with clocks in the same $\beta_i$ having the same fractional part and those in $\beta_0$ having fractional part $0$.
\end{remark}

A region $r$ can be described by a system of equations that forms a guard $g$ of $\Guards$. Conversely, a guard intersected with $\Valuations$ can be seen as a finite union of regions in $\Regions$.
We say that a region $r$ is non-punctual if it contains at least two distinct time-successor valuations. It is punctual otherwise.

The set $\mathsf{Reg}(\Clocks,\ClockBound)$ partitions $\Valuations$ into 
regions
that represent equivalence classes of the time-abstract bisimulation~\cite{AD-tcs94}.

A region $r'$ is a time-successor of a region $r$ if for each valuation $\nu\in r$ there is a delay $d\geq 0$ so that $\nu+d\in r'$.
Similarly, given a region $r$ and a set of clocks $R$, the region $r[R:=0]$ is the region such that for all valuations $\nu\in r$, $\nu[R:=0]\in r[R:=0]$.
Finally, we write $r\models g$ if every valuation in $r$ satisfies $g$.

A pair $(\ell,r)$ with $\ell\in L$ and $r$ a region is called a region state.
A \emph{region path} $\pi$ is a finite or infinite sequence of delay transitions
$(\ell,r)\xrightarrow{\mathsf{delay}}(\ell,r')$ such that $r'$ is a time-successor of $r$
and edge transitions $(\ell,r)\xrightarrow{g,R}(\ell',r[R:=0])$ such that there is an edge $\ell\xrightarrow{g,R}\ell'$ in $E$ with $r\models g$.
We define the length of a finite region path and the concatenation of region paths as expected.
Similarly, a region path is well-formed if it is compatible with the game semantics, so that it starts with a delay transition, alternates between delays and edges, and ends with an edge transition.
A run $\rho$ is said to follow a region path $\pi$ if it contains the same sequence of delay and edge transitions, and every valuation visited along $\rho$ belongs to the corresponding region at the same step in $\pi$.

A \emph{region cycle} around a region state $(\ell,r)$ is a finite region path $\pi$ that starts and ends in the same region state $(\ell,r)$. We often omit $\ell$ and say that $\pi$ is a cycle around $r$.
A \emph{region lasso} $\pi_0\pi^\omega$ around a region state $(\ell,r)$ is an infinite region path described as a finite region path $\pi_0$ that ends in $(\ell,r)$, followed by a region cycle $\pi$ around $(\ell,r)$ that is iterated forever.
In this definition, we allow for $\pi_0$ to be empty (so that $\pi^\omega$ is a valid lasso), but $\pi$ has non-zero length. %

In the absence of perturbations, finding a winning run for a Büchi objective $\mathsf{Buchi}$ amounts to finding a well-formed region lasso $\pi_0\pi^\omega$ around some region state $(\ell,r)$ where $\ell$ is in $\mathsf{Buchi}$~\cite{AD-tcs94}.
We call such lassos \emph{winning lassos}.

\paragraph{Corners}
The \emph{corners} of a region $r$ of dimension $m$ characterized by $(\iota, \beta)$ with $\beta=\beta_0\uplus \cdots \uplus \beta_m$ are the following $m+1$ valuations: for each $0\leq i\leq m$, let $c_i$ be the corner such that $\forall 0\leq j\leq m-i, \forall x\in \beta_j, c_i(x)=\iota(x)$, and $\forall m-i< j\leq m$, $\forall x\in \beta_j, c_i(x)=\iota(x)+1$.
Let $\mathcal C$ denote the set of corners $\{c_0,\dots c_m\}$.

\begin{remark}
Intuitively, $\mathcal C$ contains the valuations of integer coordinates that form the corners of the polyhedron $r$.
In particular, they are totally ordered in Manhattan norm: $\|c_0\|_1<\|c_1\|_1<\dots<\|c_m\|_1$.
\end{remark}

Let $\nu$ be a valuation in $r$ of dimension $m$. From~\cite{Puri00}, we know that $\nu$ can always  be expressed as a unique convex combination of the corners $\{c_0,\dots c_m\}$ of $r$, so that there is a unique weight vector
$\vecl=(\lambda_0,\dots,\lambda_m)\in(0,1]^{m+1}$
with $\|\vecl\|_1=1$ such that $\nu=\sum_{i=0}^m \lambda_i c_i$.
Conversely, every combination of corners $\sum_{i=0}^m \lambda_i c_i$ where $\vecl=(\lambda_0,\dots,\lambda_m)\in(0,1]^{m+1}$ and $\|\vecl\|_1=1$ is a valuation in $r$.
\begin{definition}%
    A valuation $\nu$ in a region $r$ of dimension $m$ is said to have
    \emph{corner weights} $\vecl$ if $\vecl=(\lambda_0,\dots,\lambda_m)\in(0,1]^{m+1}$,
    $\|\vecl\|_1=1$ and $\nu=\sum_{i=0}^m \lambda_i c_i$.
\end{definition}

\paragraph{Orbit graphs}
Let $(\mathsf{step}_i)_{i\in\Nat}$ be a sequence of distinct fresh symbols.
An orbit graph $\gamma$ is a finite acyclic graph defined on a sequence of regions $r_0,\dots r_k$, so that the vertices of the graph are the pairs $(\mathsf{step}_i,c)$ where $0\leq i\leq k$ and $c$ ranges over the corners of $r_i$, and so that every edge in $\gamma$ is of the shape $(\mathsf{step}_i,c)\rightarrow (\mathsf{step}_{i+1},c')$.
Given an orbit graph $\gamma$ defined on a sequence of regions $r_0,\dots r_k$, and an orbit graph $\gamma'$ defined on a sequence of regions $r_0',r_1',\dots r_{l}'$ with $r_k=r'_0$, the concatenation of $\gamma$ and $\gamma'$ is the orbit graph $\gamma\gamma'$ defined on the sequence of regions $r_0,\dots r_k, r_1',\dots r_{l}'$
so that every edge $(\mathsf{step}_i,c)\rightarrow (\mathsf{step}_{i+1},c')$ of $\gamma$ is in $\gamma\gamma'$ and so that every edge $(\mathsf{step}_i,c)\rightarrow (\mathsf{step}_{i+1},c')$ of $\gamma'$ is in $\gamma\gamma'$ as $(\mathsf{step}_{i+k},c)\rightarrow (\mathsf{step}_{i+k+1},c')$.
In the context of an orbit graph $\gamma$ defined on a sequence of regions $r_0,\dots r_k$, we let $\mathsf{start}$ and $\mathsf{end}$ denote the symbols $\mathsf{step}_0$ and $\mathsf{step}_k$, respectively, so that a path going through all of $\gamma$ starts in a vertex $(\mathsf{start},c)$ and ends in a vertex $(\mathsf{end},c')$.

Let $t=(\ell,r)\xrightarrow{\mathsf{delay}}(\ell,r')$ be a delay transition from a region $r$ of corners $\mathcal C$ to a time-successor region $r'$ of corners $\mathcal C'$.
The orbit graph of $t$ is the orbit graph defined on $r,r'$
that contains every edge $(\mathsf{start},c)\rightarrow(\mathsf{end},c')$ such that $c'$ is a time-successor of $c$.
Let $t=(\ell,r)\xrightarrow{g,R}(\ell',r')$ be an edge transition from a region $r$ of corners $\mathcal C$ to a region $r'$ of corners $\mathcal C'$ with $r\models g$ and $r'=r[R:=0]$.
The orbit graph of $t$ is the orbit graph defined on $r,r'$ 
that contains every edge $(\mathsf{start},c)\rightarrow(\mathsf{end},c')$ such that $c'=c[R:=0]$.

Let $\pi$ be a finite region path. The \emph{orbit graph} of $\pi$, denoted $\gamma(\pi)$, is defined as the concatenation of the orbit graphs of every delay transition and edge transition in $\pi$.
In \figurename~\ref{fig:orbit-graph}, we depict a region path and its associated orbit graph.

\begin{remark}
    Intuitively, paths from corners to corners in $\gamma(\pi)$ represent the reachability relation along $\pi$ of valuations arbitrarily close to these corners. As every valuation in a region $r$ can be seen as a convex combination of the corners of $r$, $\gamma(\pi)$ represents as a finite graph the entire reachability relation along $\pi$: 
    If $\gamma(\pi)$ is interpreted as an interval Markov chain (the set of all Markov chains with this graph as support), then
    the runs that follow $\pi$ can be described as a Markov chain refining $\gamma(\pi)$, such that
    the corner weights of the starting valuation in the run describe the initial-state distribution,
    and the transition probabilities encode the flow of these corner weights along the run.
    This is formalised in Appendix~\ref{app:fog}.
\end{remark}

\begin{figure}[!t]
\def\yellow{black!10!white}
\def\red{black!30!white}
\def\green{black!10!white}
\def\blue{black!30!white}
\def\drawregions#1{%
  \fill[white!60!\yellow] (1.2,1.1) -- (1.9,1.1) -- (1.9,1.8) -- cycle;
  \fill[white!60!\yellow] (.2,1.1) -- (.9,1.1) -- (.9,1.8) -- cycle;
  \fill[white!60!\yellow] (1.2,.1) -- (1.9,.1) -- (1.9,.8) -- cycle;
  \fill[white!60!\yellow] (.2,.1) -- (.9,.1) -- (.9,.8) -- cycle;
  \fill[white!60!\yellow] (1.1,1.2) -- (1.1,1.9) -- (1.8,1.9) -- cycle;
  \fill[white!60!\yellow] (1.1,.2) -- (1.1,.9) -- (1.8,.9) -- cycle;
  \fill[white!60!\yellow] (.1,1.2) -- (.1,1.9) -- (.8,1.9) -- cycle;
  \fill[white!60!\yellow] (.1,.2) -- (.1,.9) -- (.8,.9) -- cycle;
  \draw[white!60!\red,line width=.5mm] (0.1,0) -- (0.9,0);
  \draw[white!60!\red,line width=.5mm] (1.1,0) -- (1.9,0);
  \draw[white!60!\red,line width=.5mm] (0.1,1) -- (0.9,1);
  \draw[white!60!\red,line width=.5mm] (1.1,1) -- (1.9,1);
  \draw[white!60!\red,line width=.5mm] (0.1,2) -- (0.9,2);
  \draw[white!60!\red,line width=.5mm] (1.1,2) -- (1.9,2);
  \draw[white!60!\red,line width=.5mm] (2,1.1) -- (2,1.9);
  \draw[white!60!\red,line width=.5mm] (2,0.1) -- (2,0.9);
  \draw[white!60!\red,line width=.5mm] (1,1.1) -- (1,1.9);
  \draw[white!60!\red,line width=.5mm] (1,0.1) -- (1,0.9);
  \draw[white!60!\red,line width=.5mm] (0,1.1) -- (0,1.9);
  \draw[white!60!\red,line width=.5mm] (0,0.1) -- (0,0.9);
  \draw[white!60!\red,line width=.5mm] (0,2.1) -- (0,2.6);
  \draw[white!60!\red,line width=.5mm] (1,2.1) -- (1,2.7);
  \draw[white!60!\red,line width=.5mm] (2,2.1) -- (2,2.7);
  \draw[white!60!\red,line width=.5mm] (2.1,0) -- (2.6,0);
  \draw[white!60!\red,line width=.5mm] (2.1,1) -- (2.7,1);
  \draw[white!60!\red,line width=.5mm] (2.1,2) -- (2.7,2);
  \draw[white!60!\red,line width=.5mm] (.1,.1) -- (.9,.9);
  \draw[white!60!\red,line width=.5mm] (1.1,.1) -- (1.9,.9);
  \draw[white!60!\red,line width=.5mm] (.1,1.1) -- (.9,1.9);
  \draw[white!60!\red,line width=.5mm] (1.1,1.1) -- (1.9,1.9);
  \fill[white!60!\blue] (0,0) circle(.6mm);
  \fill[white!60!\blue] (1,0) circle(.6mm);
  \fill[white!60!\blue] (0,1) circle(.6mm);
  \fill[white!60!\blue] (1,1) circle(.6mm);
  \fill[white!60!\blue] (2,0) circle(.6mm);
  \fill[white!60!\blue] (0,2) circle(.6mm);
  \fill[white!60!\blue] (2,1) circle(.6mm);
  \fill[white!60!\blue] (1,2) circle(.6mm);
  \fill[white!60!\blue] (2,2) circle(.6mm);
  \fill[white!60!\green] (.1,2.1) -- (.9,2.1) -- (.9,2.6) -- (.1,2.6);
  \fill[white!60!\green] (1.1,2.1) -- (1.9,2.1) -- (1.9,2.6) -- (1.1,2.6);
  \fill[white!60!\green] (2.1,.1) -- (2.1,.9) -- (2.6,.9) -- (2.6,.1);
  \fill[white!60!\green] (2.1,1.1) -- (2.1,1.9) -- (2.6,1.9) -- (2.6,1.1);
  \fill[white!60!\green] (2.1,2.1) -- (2.1,2.6) -- (2.6,2.6) -- (2.6,2.1);
  \everymath{\scriptscriptstyle}
  \draw[latex'-latex'] (0,2.6) node[above left=-3pt] {$y$} -- 
    (0,0) node[below] {$0$} -- (2.7,0) node[below right=-3pt] {$x$};
  \foreach \x in {1,2} 
    {\draw[dotted] (\x,0) node[below] {$\x$} -- +(0,2.6);
     \draw[dotted] (0,\x) node[left] {$\x$} -- +(2.6,0);}
  \path(1.5,3.1) node[draw,circle,inner sep=0pt] {$\scriptstyle #1$};
}
\begin{center}
\begin{minipage}[b]{.84\textwidth}
\centering
\begin{tikzpicture}[scale=.8]
\begin{scope}[xshift=-2.5cm,scale=.5]
  \drawregions{\ell_1}
  \draw[red!50,draw,line width=1mm] (0.1,0) -- (0.9,0);
  \draw[-latex'] (3.1,1.3) -- +(1,0) node[midway,above] {$\scriptstyle\mathsf{delay}$};
\end{scope}
\begin{scope}[scale=.5]
  \drawregions{\ell_1}
  \fill[red!50,line width=.5mm,draw] (1.2,1.1) -- (1.9,1.1) -- (1.9,1.8) -- cycle;
  \draw[-latex'] (3.1,1.3) -- +(1,0) node[midway,above] {$\scriptstyle x< 2$} node[midway,below] {$\scriptstyle x:=0$};
\end{scope}
\begin{scope}[xshift=2.5cm,scale=.5]
  \drawregions{\ell_2}
  \draw[blue!50,draw,line width=1mm] (0,1.1) -- (0,1.9);
  \draw[-latex'] (3.1,1.3) -- +(1,0) node[midway,above] {$\scriptstyle\mathsf{delay}$};
\end{scope}
\begin{scope}[xshift=5cm,scale=.5]
  \drawregions{\ell_2}
  \draw[blue!50,draw,line width=1mm] (0.1,2) -- (0.9,2);
  \draw[-latex'] (3.1,1.3) -- +(1,0) node[midway,above] {$\scriptstyle y= 2$} node[midway,below] {$\scriptstyle y:=0$};
\end{scope}
\begin{scope}[xshift=7.5cm,scale=.5]
  \drawregions{\ell_1}
  \draw[red!50,draw,line width=1mm] (0.1,0) -- (0.9,0);
\end{scope}
\begin{scope}[yshift=-2cm,line width=1pt]
\begin{scope}[xshift=-2.5cm,scale=1]
\draw (0,0) node[circle,inner sep=1pt,fill=black] (A0) {}; 
\draw (1,0) node[circle,inner sep=1pt,fill=black] (B0) {}; 
\draw[dotted] (A0) -- (B0);
\end{scope}
\begin{scope}[scale=1]
\fill[gray!30] (0,0) -- (1,0) -- (1,1) -- cycle;
\draw (0,0) node[circle,inner sep=1pt,fill=black] (A1) {}; 
\draw (1,0) node[circle,inner sep=1pt,fill=black] (B1) {}; 
\draw (1,1) node[circle,inner sep=1pt,fill=black] (C1) {}; 
\draw[dotted] (A1) -- (B1) -- (C1) -- (A1);
\end{scope}
\begin{scope}[xshift=2.5cm,scale=1]
\draw (0,0) node[circle,inner sep=1pt,fill=black] (A2) {}; 
\draw (0,1) node[circle,inner sep=1pt,fill=black] (B2) {}; 
\draw[dotted] (A2) -- (B2);
\end{scope}
\begin{scope}[xshift=5cm,scale=1]
\draw (0,1) node[circle,inner sep=1pt,fill=black] (A3) {}; 
\draw (1,1) node[circle,inner sep=1pt,fill=black] (B3) {}; 
\draw[dotted] (A3) -- (B3);
\end{scope}
\begin{scope}[xshift=7.5cm,scale=1]
\draw (0,0) node[circle,inner sep=1pt,fill=black] (A4) {}; 
\draw (1,0) node[circle,inner sep=1pt,fill=black] (B4) {}; 
\draw[dotted] (A4) -- (B4);
\end{scope}
\begin{scope}[shorten >=1mm,shorten <=1mm,color=black!60!white,line width=.5pt]
\draw[-latex'] (A0) .. controls +(80:1.7cm) and +(165:1.7cm) .. (C1);
\draw[-latex'] (A0) .. controls +(-25:1cm) and +(-155:1cm) .. (A1);
\draw[-latex'] (B0) .. controls +(-25:1cm) and +(-155:1cm) .. (B1);
\draw[-latex'] (A1) .. controls +(-25:1cm) and +(-155:1cm) .. (A2);
\draw[-latex'] (C1) .. controls +(-15:.7cm) and +(195:.7cm) .. (B2);
\draw[-latex'] (B1) .. controls +(15:.7cm) and +(165:.7cm) .. (A2);
\draw[-latex'] (B2) .. controls +(-15:.7cm) and +(195:.7cm) .. (A3);
\draw[-latex'] (A2) .. controls +(15:1cm) and +(-100:1.5cm) .. (B3);
\draw[-latex'] (A3) .. controls +(30:1cm) and +(100:.7cm) .. (A4);
\draw[-latex'] (B3) .. controls +(-80:.7cm) and +(-150:1cm) .. (B4);
\end{scope}
\end{scope}
\end{tikzpicture}
\end{minipage}
\begin{minipage}[b]{.14\textwidth}
 \centering
 \begin{tikzpicture}[line width=1pt]
    \begin{scope}
		\fill[gray!30] (0,0) -- (1,0) -- cycle;
		\draw (0,0) node[circle,inner sep=1pt,fill=black] (A1) {}; 
		\draw (1,0) node[circle,inner sep=1pt,fill=black] (B1) {}; 
        \node[] at (0.0, 0.2)   (c0) {$c_0$};
        \node[] at (1.0, 0.2)   (c1) {$c_1$};  
		\draw[dotted,use as bounding box] (A1) -- (B1) -- (A1);
		\begin{scope}[shorten >=1mm,shorten <=1mm,color=black,line width=.5pt]
		\draw[-latex'] (A1) .. controls +(180:6mm) and +(225:6mm) .. (A1);
		\draw[-latex'] (B1) .. controls +(-67.5:6mm) and +(-22.5:6mm) .. (B1);
		\draw[-latex'] (A1) .. controls +(-22.5:4mm) and +(-157.5:4mm) .. (B1);
  
		\end{scope}
	\end{scope}
	\end{tikzpicture}
 \vspace{15mm}
 \end{minipage}
\end{center}
\vspace{-0.4cm}
\caption{The orbit graph of a region cycle in the automaton of \figurename~\ref{fig:taEx}, and its folding.}
\label{fig:orbit-graph}
\vspace{-0.5cm}
\end{figure}

\paragraph{Folded orbit graphs}
Let $\pi$ be a region cycle around a region $r$, \textit{i.e.}~a finite region path that starts and ends in the same region state $(\ell,r)$. %
let $\mathcal C$ be the corners of $r$. 
The folded orbit graph of $\pi$ is a graph $\Gamma(\pi)$ of vertices $\mathcal C$. For every corner $c$ of $r$ and every corner $c'$ of $r$,
there is an edge from $c$ to $c'$ in $\Gamma(\pi)$ if there is a path from $(\mathsf{start},c)$ to $(\mathsf{end},c')$ in the orbit graph $\gamma(\pi)$.

It follows by the orbit graph construction that $\Gamma(\pi)$ represents the reachability relation from corner to corner along $\pi$.

\paragraph{Graph terminology}
We will use classical directed graph theory to describe the properties of folded orbit graphs:
A graph is said to be strongly connected if every vertex is reachable from every other vertex. The strongly connected components (SCCs) of a graph form a partition into subgraphs that are strongly connected.
Likewise, a graph is said to be weakly connected if its underlying undirected graph is strongly connected. The weakly connected components of a graph also form a partition into subgraphs that are weakly connected.
The disjoint union of graphs is constructed by making the vertex set of the result be the disjoint union of the vertex sets of the given graphs, and by making the edge set of the result be the disjoint union of the edge sets of the given graphs.
A graph is complete if every pair of vertices is connected by an edge.\footnote{including self-loops $v\rightarrow v$ for every vertex $v$.}

Finally, a \emph{cluster graph} is a disjoint union of complete graphs,
\textit{i.e.}~a graph where every weakly connected component is a complete SCC.

\section{Robustness of a finite path}
\label{sec:robust-path}

We define robust paths, as a notion meant to represent finite region paths that $\cont$ can traverse no matter what $\pert$ does:
\begin{definition}%
    We say that a well-formed finite region path $\pi$ is \emph{robust} if 
    every edge transition $(\ell,r)\xrightarrow{g,R}(\ell',r')$ in $\pi$ with a non-punctual guard $g$ satisfies that $r$ is non-punctual as well.
\end{definition}
Intuitively, %
this prevents situations where any non-zero perturbation would force the run out of $\pi$: the region $r$ is reached right after a delay transition that includes a perturbation in this case, so it cannot be punctual if we wish for $\cont$ to guarantee reaching it.
Note that the concatenation of robust paths is robust.

In order to show that this notion characterize robustness, we recall classical data structures and the notion of controllable predecessors of a set of valuations.

\paragraph{Zones and Difference Bound Matrices}
A (rational) Difference Bound Matrix (DBM) over a set of clocks $\Clocks=\{x_1,\dots x_n\}$ is a matrix $(M_{i,j})_{1\leq,i,j\leq n}$ so that each entry $M_{i,j}$ is a pair 
$(\preceq_{i,j}, m_{i,j})$ with $\preceq_{i,j}\in\{\leq,<\}$ and $m_{i,j}\in\Rat\cup\{+\infty\}$.
It represents a set of valuations $z$ called a zone, so that $\nu\in z$ if and only if
for each $1\leq i,j\leq n$, $\nu\models x_i-x_j \preceq_{i,j} m_{i,j}$.
These matrices can be used as efficient ways to represent guards or regions,
and many useful operations over sets of valuations can be described as polynomial matrix operations on DBMs.
\begin{definition}[\cite{bengtsson2003timed}]\label{def:dbm-operations}
    If $z$ is a zone, %
    $g$ is a guard, $R$ is a set of clocks and $\ClockBound$ is a bound on clocks,
    we define the following operations for any rational $\delta\geq 0$:
    \begin{itemize}
        \item $\mathsf{PreTime}_{\geq \delta}(z)=\{\nu\in\Valuations\mid \exists d\geq \delta, \nu+d\in z\}$,
        \item $g\cap z=\{\nu\in z\mid \nu\models g\}$,
        \item $\mathsf{Unreset}_{R}(z)=\{\nu\in\Valuations\mid \nu[R:=0]\in z\}$, and
        \item $\mathsf{Shrink}_{\delta}(z)=\{\nu\in\Valuations\mid \forall \varepsilon\in[-\delta,+\delta],\nu+\varepsilon\in z\}$.
    \end{itemize}
\end{definition}
In fact, if $z$ is encoded as a DBM, the output of the these operations is a DBM where entries are of the shape $(\preceq, m-\delta p)$ with $p\in\Nat$.

In order to make symbolic computations that abstract the perturbation value, one can use shrunk DBMs \cite{SankurBM14}, a parametric DBM where $\delta$ is a parameter.
Shrunk DBMs are pairs $(M,P)$ where $M$ is a DBM and $P$ is a shrinking matrix of the shape $(p_{i,j})_{1\leq,i,j\leq n}$ so that every entry $p_{i,j}$ is in $\Nat$.
Then, for any value of $\delta$, $M-\delta P$ represents the DBM of entry $(\preceq_{i,j},m_{i,j}-\delta p_{i,j})$.
The DBM operations of Definition~\ref{def:dbm-operations} can all be implemented on shrunk DBMs as parametric computations with an arbitrarily small $\delta>0$.

\paragraph{Controllable predecessors} 
We recall the definition of the $\CPre$ operation:
\begin{definition}[\cite{Sankur13}]
    Let $\pi=(\ell,r)\xrightarrow{\mathsf{delay}}(\ell,r')\xrightarrow{g,R}(\ell',r'')$ be an atomic robust region path (with $r'\models g$), let $\delta\geq 0$ be some amplitude of perturbation and $s$ be a set of valuations in $r''$.
    If $g$ is a non-punctual guard, then we let 
    \[\CPre_\pi^\delta(s)=\{\nu \in r \mid \exists d\geq \delta, \forall d'\in[d-\delta,d+\delta], \nu+d'\in r'\land (\nu+d')[R:=0]\in s\}\,.\]
    If $g$ is a punctual guard, then we let %
     \[\CPre_\pi^\delta(s)=\CPre_\pi^0(s)=\{\nu \in r \mid \exists d\geq 0, \nu+d\in r' \land (\nu+d)[R:=0]\in s\}\,.\]
    If $\pi=\pi'\pi''$, %
    then $\CPre_\pi^\delta(s)$ is defined inductively as $\CPre_{\pi'}^\delta(\CPre_{\pi''}^\delta(s))$.
\end{definition}
Overall, $\nu\in\CPre_{\pi}^\delta(s)$ if and only if Controller has a strategy that ensure reaching $s$ when starting from $\nu$ and following the path $\pi$, against any Perturbator strategy of maximal perturbation $\delta$.
If in particular $\delta=0$, $\CPre_{\pi}^0(s)$ is the standard predecessor operator over region paths, \textit{i.e.}~the set of valuations $\nu$ so that there exists a well-formed run from $\nu$ to some $\nu'\in s$ that follows $\pi$ without perturbations.
Thus, if we set $s=r'$, the last region of $\pi$, it holds that $\CPre_{\pi}^0(r')$ equals $r$, the first region of $\pi$, because regions represent equivalence classes of the time-abstract bisimulation relation~\cite{AD-tcs94}. 
We note the following inclusion properties: %
if $\delta\leq\delta'$, then $\CPre_{\pi}^{\delta'}(s)\subseteq\CPre_{\pi}^{\delta}(s)$, and
if $s\subseteq s'$, then $\CPre_{\pi}^\delta(s)\subseteq\CPre_{\pi}^{\delta}(s')$.

Let $\pi$ be a region path from $r$ to $r'$, and let $z$ be a zone in $r'$ represented as a DBM (or a shrunk DBM), then $\CPre_\pi^\delta(z)$ can be computed as a DBM (or as a shrunk DBM):
if $\pi$ is an atomic robust path with a non-punctual guard, then $\CPre_\pi^\delta(z)=r\cap\mathsf{PreTime}_{\geq \delta}(\mathsf{Shrink}_{\delta}(g\cap \mathsf{Unreset}_{R}(z)))$,
and if $\pi$ is an atomic robust path with a punctual guard, then $\CPre_\pi^\delta(z)=r\cap\mathsf{PreTime}_{\geq 0}(g\cap \mathsf{Unreset}_{R}(z))$.
\begin{restatable}[]{proposition}{robpath}
\label{cor:robpath}
    Let $\pi$ be a well-formed region path from $(\ell,r)$ to $(\ell',r')$.
    Then, there exists $\delta>0$ so that $\CPre_\pi^\delta(r')\neq\emptyset$ if and only if $\pi$ is robust.
\end{restatable}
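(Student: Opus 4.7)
The plan is to handle the two implications separately: the necessity direction proceeds by a contrapositive and is essentially immediate, while the sufficiency direction is more technical and relies on a shrunk-DBM analysis of the $\CPre$ computation.

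For the only-if direction, suppose $\pi$ is not robust, so its atomic decomposition contains a piece $\pi_i=(\ell,r_{i-1})\xrightarrow{\mathsf{delay}}(\ell,r'_{i-1})\xrightarrow{g_i,R_i}(\ell',r_i)$ in which $g_i$ is non-punctual but the intermediate region $r'_{i-1}$ is punctual. For any $\delta>0$, membership in $\CPre_{\pi_i}^\delta(r_i)$ would require a delay $d\geq\delta$ with $\nu+d'\in r'_{i-1}$ for every $d'\in[d-\delta,d+\delta]$; however, $r'_{i-1}$ being punctual means it contains no two distinct time-successor valuations, which is incompatible with the length-$2\delta$ interval of possible $d'$. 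Writing $\pi=\pi'\pi_i\pi''$, we have $\CPre_{\pi''}^\delta(r')\subseteq r_i$, so by monotonicity of $\CPre$ in its argument and its compositional definition, $\CPre_\pi^\delta(r')\subseteq\CPre_{\pi_i}^\delta(r_i)=\emptyset$.

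For the if direction, assume $\pi$ is robust. The plan is to compute $\CPre_\pi^\delta(r')$ symbolically using the shrunk-DBM calculus: composing the atomic operations $\mathsf{Unreset}_R$, intersection with $g$, $\mathsf{Shrink}_\delta$ and $\mathsf{PreTime}_{\geq\delta}$ along $\pi$, starting from $r'$, produces a shrunk DBM $(M_\pi,P_\pi)$ whose instantiation at parameter $\delta$ equals $\CPre_\pi^\delta(r')$. At $\delta=0$, every atomic step reduces to the classical predecessor operator and, since regions are equivalence classes of the time-abstract bisimulation, $\CPre_\pi^0(r')=r$; therefore $M_\pi$ encodes the non-empty region $r$. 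It then suffices to show that $M_\pi-\delta P_\pi$ stays non-empty for every sufficiently small $\delta>0$, which amounts to verifying that no shrink coefficient in $P_\pi$ sits on a tight equality constraint of $M_\pi$, namely one of those defining the region $r$.

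My proposed verification is an induction on the atomic decomposition of $\pi$. A punctual-guard step contributes no shrinking at all, because its $\CPre$-formula uses amplitude $0$ independently of $\delta$. A non-punctual-guard step applies $\mathsf{Shrink}_\delta$ to the intermediate region $r'_{i-1}$, which is non-punctual by robustness and hence has strictly positive time extent; one then checks that the shrinking is absorbed in the non-tight dimensions and that its propagation through $\mathsf{PreTime}_{\geq\delta}$ never tightens any equality constraint of the predecessor. The main obstacle is exactly this bookkeeping on the shrunk DBM: tracking how successive $\mathsf{Unreset}_R$, $\cap g$, $\mathsf{Shrink}_\delta$ and $\mathsf{PreTime}_{\geq\delta}$ operations transform the shrink coefficients, and confirming that robustness is precisely what prevents any positive coefficient from migrating onto an equality constraint of $r$, which would otherwise collapse $M_\pi-\delta P_\pi$ for every $\delta>0$.
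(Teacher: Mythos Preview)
Your only-if direction is essentially identical to the paper's: both isolate a non-punctual atomic step whose intermediate region is punctual, observe that the $[d-\delta,d+\delta]$ window cannot stay inside a punctual region, and propagate emptiness through the composition. (A minor quibble: your inclusion $\CPre_\pi^\delta(r')\subseteq\CPre_{\pi_i}^\delta(r_i)$ compares sets in different regions; what you really use is $\CPre_{\pi_i}^\delta(\text{anything}\subseteq r_i)=\emptyset$, hence the whole composition is empty.)

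For the if direction, the paper takes a much shorter route: it simply invokes Lemma~\ref{lm:ballslice} with $N=r'$, noting that $(\Ball_{d_\infty}(\nu,\varepsilon)\cap r')\subseteq r'$ trivially, and concludes $\CPre_\pi^\delta(r')\neq\emptyset$ for small $\delta$. That lemma (imported from~\cite{Sankur13} and extended to punctual steps by the observation that those steps use amplitude~$0$) already packages the inductive shrunk-DBM analysis you are sketching. Your plan---compute $\CPre_\pi^\delta(r')$ as a shrunk DBM $(M_\pi,P_\pi)$, identify $M_\pi$ with $r$ via $\CPre_\pi^0(r')=r$, and then argue no positive entry of $P_\pi$ lands on an equality constraint of $r$---is exactly the content of that lemma's proof, unpacked inline. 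So your approach is correct but more laborious: the ``bookkeeping'' you flag as the main obstacle is precisely what Lemma~\ref{lm:ballslice} (and behind it Lemma~8.6.2 of~\cite{Sankur13}) abstracts away. If you want to carry it through, the key structural facts are that $\mathsf{Shrink}_\delta$ only affects non-diagonal constraints (diagonals are time-invariant), that robustness guarantees the shrunk intermediate region has positive time extent, and that punctual steps contribute no shrinking whatsoever---all of which you already identify.
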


In particular, if $\pi$ is a robust region path that starts from $(\ell_0,r_0)$ and ends in $(\ell,r)$ with $r_0$ the region $\{\mathbf{0}\}$, then $\mathbf{0}\in\CPre_\pi^\delta(r)$, so that $\cont$ can guarantee reaching $(\ell,r)$ from $(\ell_0,\mathbf{0})$ for a small enough $\delta$.
As a result, the notion of robust region paths is sufficient to enforce a reachability objective.
Winning for a Büchi objective, however, requires enforcing an infinite path, that will stay within a lasso forever.
If $\pi$ is a robust region cycle around a region $r$, we know from the inclusion properties of $\CPre_{\pi}^\delta$ that for all $k\in\Natp$, $r\supseteq \CPre_{\pi}^\delta(r)\supseteq\CPre_{\pi^{2}}^\delta(r)\supseteq\dots\supseteq\CPre_{\pi^k}^\delta(r)$.
For any fixed $\delta>0$, this decreasing sequence of sets $\CPre_{\pi^k}^\delta(r)$ may become empty for $k$ big enough, meaning there is no strategy of $\cont$ that is able to iterate the cycle $\pi$ forever. 

We will show that for a specific class of cycles $\pi$ this scenario does not happen, as we will find some set $s$ in $r$ so that
$s\subseteq\CPre_{\pi}^\delta(s)\subseteq\CPre_{\pi^{2}}^\delta(s)\subseteq\dots$, so that $\CPre_{\pi^k}^\delta(r)\neq\emptyset$ for all $k\geq 1$, \textit{i.e.}~$\cont$ can iterate $\pi$ forever.

\section{Robustness of an infinite path}
\label{sec:partRobCharc}

As previously explained, for a cycle in the region abstraction, being a robust path does not imply that it can be iterated forever, so that solving for Büchi objectives requires a finer notion.
  We define robustly iterable cycles %
  to characterise cycles that $\cont$ can iterate forever, no matter what $\pert$ does:
\begin{definition}%
    We say that a well-formed region cycle $\pi$ is \emph{robustly iterable} if $\pi$ is a robust path and $\Gamma(\pi)$ is a cluster graph.
    We say that a lasso $\pi_0\pi^\omega$ is \emph{robustly iterable} if $\pi_0$ is either empty or a robust path, and if $\pi$ is robustly iterable.
\end{definition}

The cluster graph condition represents a requirement on the reachability relation along $\pi$ that generalises the notion of aperiodic cycle from~\cite{Sankur13,AB-formats11,DBLP:conf/formats/Stainer12}  (such cycles had an iterate with a complete $\Gamma(\pi)$, which is in particular a cluster graph).
This notion is stable through iterations of cycles, so that if $\pi$ is a robustly iterable cycle, then for every $l\geq 1$, the cycle $\pi^l$ has the same folded orbit graph as $\pi$ (by decomposition into concatenated orbit graphs $\gamma(\pi)\dots\gamma(\pi)$), and therefore is robustly iterable as well.
On the other hand, it is possible for a cycle $\pi$ that is not robustly iterable to become robustly iterable after iterating it multiple times.
Let us detail what can happen as a cycle is iterated:

\begin{restatable}[]{lemma}{robCycle}
    \label{lm:robCycle}
    Let $\pi$ be a region cycle around a region $r$ of dimension $m$, so that $\pi$ is a robust path.
    Then only one of the following cases must hold:
    \begin{itemize}
        \item Either there exists $1\leq k\leq m(m+1)!$ so that $\pi^k$ is robustly iterable,
        \item or there exists $1\leq k\leq (m+1)!$ so that $\Gamma(\pi^k)$ contains a weakly connected component that is not strongly connected.
    \end{itemize}
\end{restatable}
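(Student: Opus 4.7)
The plan is to view the sequence $(\Gamma(\pi^k))_{k\geq 1}$ as the iterates of a relation $R := \Gamma(\pi)$ on the set of $m+1$ corners of $r$. The concatenation property of orbit graphs directly yields $\Gamma(\pi^k) = R^k$, where $R^k$ denotes the $k$-fold relational composition: a path from $(\mathsf{start},c)$ to $(\mathsf{end},c')$ in $\gamma(\pi^k)=\gamma(\pi)\cdots\gamma(\pi)$ decomposes into $k$ consecutive corner-to-corner paths in $\gamma(\pi)$. Since the set of digraphs on $m+1$ vertices is finite, the sequence $(R^k)_{k\geq 1}$ is eventually periodic, and I would exploit this periodicity throughout.

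The crux is a purely graph-theoretic characterisation: an idempotent relation $E$ (one with $E \circ E = E$) is a cluster graph if and only if every weakly connected component of $E$ is strongly connected. The forward direction is immediate. For the converse, I would first observe that in any idempotent $E$ each SCC is automatically a clique: for $u,v$ in the same SCC, a directed cycle of length $\ell$ through them gives $(u,v)\in E^\ell = E$, and the case $u = v$ yields self-loops. The hypothesis that every WCC is strongly connected then rules out ``bridge'' edges between distinct SCCs, leaving $E$ as a disjoint union of complete graphs.

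The dichotomy then follows. Suppose no $k\in\{1,\ldots,(m+1)!\}$ witnesses Case~2. Let $d_1,\ldots,d_t$ denote the periods of the SCCs of $R$ (each at most $m+1$, being a divisor of an SCC cycle length), and set $K := \mathrm{lcm}(d_1,\ldots,d_t)$, which divides $\mathrm{lcm}(1,\ldots,m+1)$ and hence $(m+1)!$. In $R^K$ each SCC of $R$ splits into its $d_i$ cyclic classes, which are now primitive SCCs of $R^K$. The Case~2 hypothesis applied at $k = K$ forbids cross-SCC edges, so $R^K$ is a disjoint union of primitive strongly connected graphs on at most $m+1$ vertices each. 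I would then argue that for some $\ell\leq m$, $R^{\ell K}$ makes every primitive piece complete, and hence $R^{\ell K}$ is a cluster graph, placing us in Case~1 with exponent $\ell K \leq m(m+1)!$.

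The main obstacle is justifying the bound $\ell \leq m$ on the primitivity index of each primitive sub-SCC, which is sharper than the generic Wielandt bound $(n-1)^2+1$ for $n\leq m+1$. I expect this to follow from the restricted geometry of folded orbit graphs of region cycles: edges in these graphs come from trajectories of integer-valued corners through delays and resets, which respect the Manhattan-norm ordering $\|c_0\|_1<\cdots<\|c_m\|_1$ of corners, so that the ``twisting'' a primitive sub-SCC can exhibit is constrained by the dimension $m$. A careful combinatorial analysis along these lines should yield the sharp bound; verifying that $R^{\ell K}$ inherits the disjoint-union structure of $R^K$ (since each primitive piece $G_i$ remains strongly connected under further iteration, so the WCCs of $R^{\ell K}$ agree with those of $R^K$) is then routine.
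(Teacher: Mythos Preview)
Your overall strategy---viewing $\Gamma(\pi^k)$ as relational powers $R^k$ and aiming for a disjoint union of complete pieces---matches the paper's, but the argument has a genuine gap at the step you yourself flag as the ``main obstacle'': the bound $\ell\leq m$ on the primitivity index of each primitive sub-SCC of $R^K$. Your appeal to the ``restricted geometry of folded orbit graphs'' and the Manhattan-norm ordering is speculative and, in fact, unnecessary: the paper's proof is purely combinatorial and uses no geometric structure beyond the fact that every corner has an outgoing edge in $\Gamma(\pi)$ (Lemma~\ref{lm:orbit-arity}). For a generic primitive digraph on $n\leq m+1$ vertices the Wielandt bound $(n-1)^2+1$ is tight, so without additional input your claim $\ell\leq m$ simply does not hold, and the resulting exponent $(m^2+1)K$ can exceed $m(m+1)!$.

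The paper sidesteps this entirely by a simpler choice of $K$: instead of $K=\mathrm{lcm}(d_1,\ldots,d_t)$, it takes $K=(m+1)!$. Since (by the Case~2 hypothesis at $k=1$) $R$ is already a disjoint union of SCCs and every vertex has an outgoing edge, every corner lies on some cycle in $R$ of length $l\leq m+1$; as $(m+1)!$ is a multiple of every such $l$, the graph $R^{(m+1)!}$ has a \emph{self-loop at every vertex}. Once self-loops are present, any path of length $l\leq m$ inside an SCC of $R^{(m+1)!}$ can be padded (by repeating the self-loop at the endpoint) to length exactly $m$, so $R^{m(m+1)!}$ is complete on each SCC. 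No primitivity-index analysis is needed; the bound $m$ is immediate from the diameter of an SCC on at most $m+1$ vertices.

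Two smaller points: your detour through idempotent relations is never actually used (you do not exhibit an idempotent power in the required range), and you omit the proof that the two cases are mutually exclusive, which the paper establishes separately by showing that both the cluster-graph property and the ``WCC not an SCC'' property are preserved under further iteration.
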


We will show that $\cont$ 
can ensure staying in $\pi$ forever
in the first case of Lemma~\ref{lm:robCycle}, %
but that in the second case and for any fixed $\delta>0$, $\pert$ can prevent him from staying in $\pi$ forever. %

\begin{theorem}%
\label{thm:main}
Given an instance of the robust controller synthesis problem of automaton $\TA$ and objective $\mathsf{Buchi}$,
there exists $\delta>0$ so that $\cont$ wins $\thegame$
if and only if 
there exists 
a winning lasso that is robustly iterable.
\end{theorem}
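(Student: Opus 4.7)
The plan is to prove both directions of Theorem~\ref{thm:main} separately, with the bulk of the technical content lying in the sufficiency direction.

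For the sufficiency direction $(\Leftarrow)$, assume a winning lasso $\pi_0 \pi^\omega$ around some region state $(\ell,r)$ with $\ell\in\mathsf{Buchi}$ is robustly iterable. Since $\pi_0$ is empty or a robust path, Proposition~\ref{cor:robpath} yields some $\delta_0 > 0$ with $\mathbf{0} \in \CPre_{\pi_0}^{\delta_0}(r)$, so $\cont$ can steer the initial run into $(\ell, r)$. The problem then reduces to showing $\cont$ can iterate $\pi$ forever; this amounts to exhibiting, for some small $\delta > 0$, a non-empty subset $s \subseteq r$ satisfying $s \subseteq \CPre_\pi^\delta(s)$. Given such $s$, repeatedly playing the local finite-horizon strategies witnessing this inclusion yields a winning play that visits $\ell$ infinitely often. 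The main work is building $s$: the intention is to define it as a union of slices from the partition of Section~\ref{sec:Slices}. The cluster-graph condition on $\Gamma(\pi)$ guarantees, via the interval-Markov-chain interpretation of $\gamma(\pi)$ recalled in Section~\ref{sec:Prel}, that whatever distribution over corner weights $\pert$ induces during one traversal of $\pi$, the resulting valuation stays in the same slice-neighborhood as the starting one. $\cont$'s compensation strategy can then use the remaining margin inside the slice to redistribute corner weights back towards a target valuation lying in the same slice as $\nu$.

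For the necessity direction $(\Rightarrow)$, assume $\cont$ wins $\thegame$ for some fixed $\delta > 0$ with strategy $\sigmacont$. Fixing a deterministic $\pert$ strategy (say, always choosing the midpoint of the offered interval) produces an infinite winning run; by the pigeonhole principle on the finite region abstraction, this run visits some region state $(\ell, r)$ with $\ell \in \mathsf{Buchi}$ infinitely often. Extracting two successive returns yields a well-formed finite region prefix $\pi_0$ and a region cycle $\pi$ around $(\ell,r)$, and the whole lasso $\pi_0 \pi^\omega$ is compatible with $\sigmacont$. By Proposition~\ref{cor:robpath} both $\pi_0$ and $\pi$ must be robust paths, since otherwise $\CPre^{\delta'}$ along them would be empty for every $\delta'>0$ and $\sigmacont$ would have no legal move at some step against an adversarial $\pert$. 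It remains to prove $\Gamma(\pi)$ is a cluster graph. Suppose for contradiction that it is not: by Lemma~\ref{lm:robCycle} some iterate $\pi^k$ has a weakly connected component that is not strongly connected, so there exist corners $c, c'$ with $c$ reachable from $c'$ but not conversely. The plan is to argue that $\pert$ can exploit this asymmetry, driving the corner-weight distribution of the current valuation to accumulate on the non-returning side across many iterations of $\pi^k$. Since robustness of $\pi$ requires a non-trivial margin of valuation space before each perturbed transition, this drift eventually empties $\CPre_{\pi^{kn}}^{\delta}(r)$ as $n$ grows, contradicting $\cont$'s ability to stay in the lasso against arbitrary $\pert$.

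The main obstacle is clearly the sufficiency direction, and specifically the construction of the invariant subset $s$ together with $\cont$'s compensation strategy based on the slice machinery: this is where the adaptation of the reasoning of~\cite{SBMR-concur13} to handle punctual transitions must be carried out, forming the subject of Sections~\ref{sec:Slices} and~\ref{sec:proofMainThm}. The necessity direction is technically lighter but still requires turning the purely combinatorial asymmetry provided by Lemma~\ref{lm:robCycle} into a concrete $\pert$ strategy that provably shrinks $\cont$'s controllable predecessor set to the empty set after finitely many rounds.
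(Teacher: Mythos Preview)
Your sufficiency direction is broadly aligned with the paper's approach (Proposition~\ref{prop:contWins}), though the invariant set is not a ``union of slices'' but rather, for each starting valuation $\nu$, the intersection $N\cap\slice_\pi(\vecw)$ of a single slice with a fixed DBM~$N$ obtained from Lemma~\ref{lm:initslice}. Also, reaching $r$ via $\pi_0$ is not enough: you must reach the winning subset $N\subseteq r$, which requires Lemma~\ref{lm:ballslice} rather than just Proposition~\ref{cor:robpath}.

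The necessity direction, however, has a genuine gap. You fix a particular $\pert$ strategy (midpoint), obtain one infinite run, and from two successive visits to $(\ell,r)$ extract \emph{one} region cycle $\pi$. First, the run need not be ultimately periodic in regions: successive returns to $(\ell,r)$ may produce different cycles $\pi_1,\pi_2,\dots$, so there is no reason $\pi_0\pi^\omega$ is ``compatible with $\sigmacont$''. Second, and more seriously, even if you show that $\pert$ can defeat any attempt to iterate this particular $\pi$, this does not contradict $\sigmacont$ being winning: when $\pert$ plays adversarially (not midpoint), $\sigmacont$ is free to follow completely different region paths, and nothing in your argument forces $\sigmacont$ to commit to the lasso you extracted. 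The drift argument only empties $\CPre_{\pi^{kn}}^\delta(r)$ for your chosen $\pi$, which is irrelevant if $\cont$ is not trying to follow $\pi^\omega$.

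The paper avoids this by taking the contrapositive (Proposition~\ref{prop:pertWins}): assuming \emph{no} winning lasso is robustly iterable, it exhibits a single $\pert$ strategy $\sigma_\delta^P$ and shows that \emph{every} infinite run in $\outcome(-,\sigma_\delta^P)$ can be Ramsey-factored as $\pi_0\pi_1\pi_2\cdots$ with all $\Gamma(\pi_i)$ equal; by hypothesis this common folded orbit graph is not a cluster graph, and the Lyapunov argument (Lemma~\ref{lm:LiapBound}) then forces the run to leave~$r$ in finitely many steps. The Ramsey decomposition is precisely what handles $\cont$'s freedom to switch between cycles, and it is missing from your outline.
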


\ignore{
\begin{figure}[ht!]
  \begin{center}
  \vspace*{-10pt}
    \begin{tikzpicture}[node distance=2.5cm,auto, scale = .7]
      \tikzstyle{every state}=[thick, circle,minimum size=17pt,inner sep=0pt]
      \node (preA) at (-1,0) {};
      \node[state,] at (0,0) (A) {$\ell_0$};
      \node[state, right of=A] (B) {$\ell_1$};
      \node[accepting,state,right of=B] (C) {$\ell_2$};
      \everymath{\scriptstyle}
      \path[-latex'] 
      (preA) edge node[left] {$0<x<y<1$}(A)
      (A) edge node[below] {$y=1, y:=0$} (B)
      (C) edge[bend right] node[above] {$y<1$} (A)
      (B) edge node[below] {$x=1, x:=0$} (C);
    \end{tikzpicture}
    \hspace{2cm}
  \begin{tikzpicture}[line width=1pt]
    \begin{scope}
		\fill[gray!30] (0,0) -- (1,1) -- (0,1) -- cycle;
		\draw (0,0) node[circle,inner sep=1pt,fill=black] (A1) {}; 
		\draw (1,1) node[circle,inner sep=1pt,fill=black] (B1) {}; 
		\draw (0,1) node[circle,inner sep=1pt,fill=black] (C1) {}; 
		\draw[dotted,use as bounding box] (A1) -- (B1) -- (C1) -- (A1);
        \node[] at (.25, -.2)   (c0) {$c_0$};
        \node[] at (-.25, 1)   (c1) {$c_1$};  
        \node[] at (1.3, 1)   (c2) {$c_2$};  
		\begin{scope}[shorten >=1mm,shorten <=1mm,color=black,line width=.5pt]
		\draw[-latex'] (A1) .. controls +(180:6mm) and +(225:6mm) .. (A1);
		\draw[-latex'] (B1) .. controls +(45:6mm) and +(90:6mm) .. (B1);
		\draw[-latex'] (C1) .. controls +(45:6mm) and +(90:6mm) .. (C1);
		\draw[-latex'] (A1) .. controls +(22.5:6mm) and +(-112.5:6mm) .. (B1);
        \draw[-latex'] (B1) .. controls +(-157.5:6mm) and +(67.5:6mm) .. (A1);
		\end{scope}
	\end{scope}
	\end{tikzpicture}    
    \caption{\label{fig:cluster}A timed automaton with a robustly iterable cycle. The folded orbit graph of the cycle $\ell_0\ell_1\ell_2$, drawn on the right-hand side, is a cluster graph.}
  \end{center}
  \vspace{-1cm}
\end{figure}}

\begin{figure}[ht!]
  \begin{center}
\vspace*{10pt}

\def\yellow{black!10!white}
\def\red{black!30!white}
\def\green{black!10!white}
\def\blue{black!30!white}
\def\drawregions#1{%
  \fill[white!60!\yellow] (.2,.1) -- (.9,.1) -- (.9,.8) -- cycle;
  \fill[white!60!\yellow] (.1,.2) -- (.1,.9) -- (.8,.9) -- cycle;
  \draw[white!60!\red,line width=.5mm] (0.1,0) -- (0.9,0);
  \draw[white!60!\red,line width=.5mm] (0.1,1) -- (0.9,1);
  \draw[white!60!\red,line width=.5mm] (1,0.1) -- (1,0.9);
  \draw[white!60!\red,line width=.5mm] (0,0.1) -- (0,0.9);
  \draw[white!60!\red,line width=.5mm] (.1,.1) -- (.9,.9);
  \fill[white!60!\blue] (0,0) circle(.6mm);
  \fill[white!60!\blue] (1,0) circle(.6mm);
  \fill[white!60!\blue] (0,1) circle(.6mm);
  \fill[white!60!\blue] (1,1) circle(.6mm);
  \everymath{\scriptscriptstyle}
  \draw[latex'-latex'] (0,1.7)  -- 
    (0,0)  -- (1.7,0) ;
} 
  
\begin{tikzpicture}
    \begin{scope}[xshift=-1.85cm,yshift=0.3cm,node distance=2.5cm,auto, scale = .7]
    \tikzstyle{every state}=[thick, circle,minimum size=15pt,inner sep=0pt]
    \node (preA) at (-1.5,0) {};
    \node[state] at (0,0) (A) {$\ell_0$};
    \node[state,color =red!60!lightgray ] at (0,-2) (B) {$\ell_1$};
    \node[accepting,state,color =blue!50] at (-1.7,-2) (C) {$\ell_2$};
    \everymath{\scriptstyle}
    \path[-latex'] 
    (preA) edge node[above=-2pt,pos=0] {$0<x<y<1$} (A) %
    (A) edge node[left,xshift=3pt,pos=0.40] {$y=1$}
             node[left,xshift=2.3pt,pos=0.64] {$y:=0$}(B)
    (C) edge[bend left=10] node[above,pos=0.4,xshift=-0.3cm] {$y<1$} (A)
    (B) edge node[above] {$x=1$} node[below] {$x:=0$} (C);
    \end{scope}

    \begin{scope}[scale=.6]
\begin{scope}[xshift=-1.9cm,scale=.5]
  \drawregions{\ell_0}
  \fill[black,line width=.5mm,draw] (0.8,0.9) -- (0.1,0.2) -- (0.1,0.9) -- cycle;
  \draw[-latex'] (1.8,0.8) -- +(1,0) node[midway,above] {$\scriptstyle\mathsf{delay}$};
\end{scope}
\begin{scope}[scale=.5]
  \drawregions{\ell_0}
  \draw[black,draw,line width=1mm] (0.1,1.0) -- (0.9,1.0);
  \draw[-latex'] (2.2,0.8) -- +(1.0,0) node[midway,above] {$\scriptstyle y= 1$} node[midway,below] {$\scriptstyle y:=0$};
\end{scope}
\begin{scope}[xshift=2.1cm,scale=.5]
  \drawregions{\ell_1}
  \draw[red!50,draw,line width=1mm] (0.1,0.0) -- (0.9,0.0);
  \draw[-latex'] (1.8,0.8) -- +(1,0) node[midway,above] {$\scriptstyle\mathsf{delay}$};
\end{scope}
\begin{scope}[xshift=4.0cm,scale=.5]
  \drawregions{\ell_1}
  \draw[red!50,draw,line width=1mm] (1.0,0.1) -- (1.0,0.9);
  \draw[-latex'] (2.2,0.8) -- +(1,0) node[midway,above] {$\scriptstyle x= 1$} node[midway,below] {$\scriptstyle x:=0$};
\end{scope}
\begin{scope}[xshift=6.1cm,scale=.5]
  \drawregions{\ell_2}
  \draw[blue!50,draw,line width=1mm] (0.0,0.1) -- (0.0,0.9);
  \draw[-latex'] (1.8,0.8) -- +(1,0) node[midway,above] {$\scriptstyle\mathsf{delay}$};
\end{scope}
\begin{scope}[xshift=8.0cm,scale=.5]
  \drawregions{\ell_2}
  \fill[blue!50,line width=.5mm,draw] (0.8,0.9) -- (0.1,0.2) -- (0.1,0.9) -- cycle;
  \draw[-latex'] (2.2,0.8) -- +(1,0) node[pos=0.45,above] {$\scriptstyle y<1$};
\end{scope}
\begin{scope}[xshift=10.0cm,scale=.5]
  \drawregions{\ell_0}
  \fill[black,line width=.5mm,draw] (0.8,0.9) -- (0.1,0.2) -- (0.1,0.9) -- cycle;
\end{scope}
  
\begin{scope}[yshift=-2cm,line width=1pt]
\begin{scope}[xshift=-1.9cm,scale=1]
\fill[gray!30] (0,0) -- (0,1) -- (1,1) -- cycle;
\draw (0,0) node[circle,inner sep=1pt,fill=black] (A0) {}; 
\draw (0,1) node[circle,inner sep=1pt,fill=black] (B0) {};
\draw (1,1) node[circle,inner sep=1pt,fill=black] (C0) {};
\draw[dotted] (A0) -- (B0) -- (C0) -- (A0);
\end{scope}
\begin{scope}[scale=1]
\draw (0,1) node[circle,inner sep=1pt,fill=black] (A1) {}; 
\draw (1,1) node[circle,inner sep=1pt,fill=black] (B1) {}; 
\draw[dotted] (A1) -- (B1);
\end{scope}
\begin{scope}[xshift=2.1cm,scale=1]
\draw (0,0) node[circle,inner sep=1pt,fill=black] (A2) {}; 
\draw (1,0) node[circle,inner sep=1pt,fill=black] (B2) {}; 
\draw[dotted] (A2) -- (B2);
\end{scope}
\begin{scope}[xshift=3.7cm,scale=1]
\draw (1,0) node[circle,inner sep=1pt,fill=black] (A3) {}; 
\draw (1,1) node[circle,inner sep=1pt,fill=black] (B3) {}; 
\draw[dotted] (A3) -- (B3);
\end{scope}
\begin{scope}[xshift=6.1cm,scale=1]
\draw (0,0) node[circle,inner sep=1pt,fill=black] (A4) {}; 
\draw (0,1) node[circle,inner sep=1pt,fill=black] (B4) {}; 
\draw[dotted] (A4) -- (B4);
\end{scope}
\begin{scope}[xshift=7.8cm,scale=1]
    \fill[gray!30] (0,0) -- (0,1) -- (1,1) -- cycle;
\draw (0,0) node[circle,inner sep=1pt,fill=black] (A5) {}; 
\draw (0,1) node[circle,inner sep=1pt,fill=black] (B5) {};
\draw (1,1) node[circle,inner sep=1pt,fill=black] (C5) {};
\draw[dotted] (A5) -- (B5) -- (C5) -- (A5);
    \end{scope}
    \begin{scope}[xshift=9.9cm,scale=1]
\fill[gray!30] (0,0) -- (0,1) -- (1,1) -- cycle;
\draw (0,0) node[circle,inner sep=1pt,fill=black] (A6) {}; 
\draw (0,1) node[circle,inner sep=1pt,fill=black] (B6) {};
\draw (1,1) node[circle,inner sep=1pt,fill=black] (C6) {};
\draw[dotted] (A6) -- (B6) -- (C6) -- (A6);
\end{scope}

\begin{scope}[shorten >=1mm,shorten <=1mm,color=black!60!white,line width=.5pt]
\draw[-latex'] (B0) .. controls +(20:1.0cm) and +(160:1.0cm) .. (A1);
\draw[-latex'] (A0) .. controls +(-25:1cm) and +(-132.5:1cm) .. (B1);
\draw[-latex'] (C0) .. controls +(-25:1cm) and +(-155:1cm) .. (B1);
\draw[-latex'] (A1) .. controls +(-60:1cm) and +(155:1cm) .. (A2);
\draw[-latex'] (B1) .. controls +(15:.7cm) and +(120:.7cm) .. (B2);
\draw[-latex'] (B2) .. controls +(-15:.7cm) and +(195:.7cm) .. (A3);
\draw[-latex'] (A2) .. controls +(60:1cm) and +(-180:1.5cm) .. (B3);
\draw[-latex'] (A3) --  (A4);
\draw[-latex'] (B3) -- (B4);
\draw[-latex'] (A4) --  (A5);
\draw[-latex'] (B4) --  (B5);
\draw[-latex'] (A4) .. controls +(-22.5:1.5cm) and +(-90:1.0cm) .. (C5);

\draw[-latex'] (A5) --  (A6);
\draw[-latex'] (B5) .. controls +(20.0:1.0cm) and +(160:1.0cm) ..  (B6);
\draw[-latex'] (C5) .. controls +(20.0:1.0cm) and +(160:1.0cm) .. (C6);
\end{scope}
\end{scope}
\end{scope}
\begin{scope}[line width=1pt,xshift=7.3cm,yshift=-0.9cm]
            \fill[gray!30] (0,0) -- (1,1) -- (0,1) -- cycle;
            \draw (0,0) node[circle,inner sep=1pt,fill=black] (A1) {}; 
            \draw (1,1) node[circle,inner sep=1pt,fill=black] (B1) {}; 
            \draw (0,1) node[circle,inner sep=1pt,fill=black] (C1) {}; 
            \draw[dotted,use as bounding box] (A1) -- (B1) -- (C1) -- (A1);
            \begin{scope}[shorten >=1mm,shorten <=1mm,color=black,line width=.5pt]
            \draw[-latex'] (A1) .. controls +(-45:6mm) and +(-90:6mm) .. (A1);
            \draw[-latex'] (B1) .. controls +(80:6mm) and +(125:6mm) .. (B1);
            \draw[-latex'] (C1) .. controls +(45:6mm) and +(90:6mm) .. (C1);
            \draw[-latex'] (A1) .. controls +(22.5:6mm) and +(-112.5:6mm) .. (B1);
            \draw[-latex'] (B1) .. controls +(-157.5:6mm) and +(67.5:6mm) .. (A1);
            \end{scope}
        \end{scope}
        \end{tikzpicture}    
    
    \caption{\label{fig:cluster}A timed automaton with a robustly iterable cycle $\ell_0\ell_1\ell_2$ is on the left. 
    The corresponding orbit graph is displayed in the center and its folded orbit graph is depicted on the right.
    The folded orbit graph is a cluster graph with two complete SCCs.}
  \end{center}
  \vspace{-.5cm}
\end{figure}

This characterisation based on robustly iterable lassos can be naturally extended to a non-deterministic algorithm using polynomial space, which implies membership to $\PSPACE$ for our problem. The algorithm consists in guessing a winning lasso that is robustly iterable, computing its folded orbit graph and checking that it is indeed a cluster graph. The details are somewhat involved but classical for this kind of problems (see \textit{e.g.} the procedure in~\cite[Section~8.7]{Sankur13} for a different characterisation based on orbit graphs), as the region path needs to be guessed transition after transition, so that the folded orbit graph is built by composition in an online fashion, and does not require the entire cycle to be stored in memory. The length of the cycle that is guessed can be exponentially bounded by combinatorial reasoning (on the number of folded orbit graphs that can be built from a given timed automaton).
The $\pspace$-hardness of our problem straightforwardly follows from~\cite{SBMR-concur13} where the robust controller synthesis problem without punctual guards was shown to be $\pspace$-complete.
Hence:
\begin{theorem}
The robust controller synthesis problem is $\pspace$-complete.
\end{theorem}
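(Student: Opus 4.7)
The plan is to derive the theorem from the characterisation in Theorem~\ref{thm:main} by exhibiting an $\NPSPACE$ decision procedure (which yields $\PSPACE$ by Savitch's theorem) and by reducing from the punctual-free variant of~\cite{SBMR-concur13} for hardness. More precisely, by Theorem~\ref{thm:main}, the robust controller synthesis problem for $(\TA,\mathsf{Buchi})$ reduces to deciding whether $\TA$ admits a winning lasso $\pi_0\pi^\omega$ around some region state $(\ell,r)$ with $\ell\in\mathsf{Buchi}$ such that $\pi_0$ is a robust path (or empty), $\pi$ is a robust region cycle, and $\Gamma(\pi)$ is a cluster graph.

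For $\PSPACE$ membership, the algorithm guesses such a lasso step by step. First I would guess the target region state $(\ell,r)$, which takes polynomial space since each region has a polynomial-size description in $|\Clocks|$ and $\log\ClockBound$. Then I would non-deterministically guess $\pi_0$ as a sequence of delay/edge transitions, keeping only the current region state in memory and verifying on the fly that each step is well-formed and, whenever a non-punctual guard is traversed, that the source region is non-punctual (this is the robustness condition from Section~\ref{sec:robust-path}). The length of $\pi_0$ can be bounded by the number of region states, which is exponential, so a polynomial-space counter suffices. Once $(\ell,r)$ is reached, I guess the cycle $\pi$ in the same online fashion, while simultaneously maintaining the composition $\Gamma$ of the orbit graphs seen so far. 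Crucially, the folded orbit graph $\Gamma(\pi)$ has vertex set equal to the corners of $r$ (polynomially many), and can be updated incrementally: each elementary delay or edge transition in $\pi$ defines an orbit-graph layer whose composition with the running graph requires only polynomial space, since we only need to maintain the current reachability relation from the corners of $r$ to the corners of the current region along the prefix of $\pi$ seen so far. Upon closing the cycle at $(\ell,r)$, I check that the resulting graph $\Gamma(\pi)$ is a cluster graph, i.e.\ that every weakly connected component is a complete SCC, which is a polynomial-time test.

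The only subtle point is bounding the length of $\pi$ that we need to consider. The key observation, as in~\cite{Sankur13}, is that the relevant information about a growing cycle prefix is entirely captured by its current region state together with the current partial folded graph, both of which range over exponentially many values. Hence, if a suitable cycle exists, one can be found whose length is at most exponential, and a polynomial-space counter bounds the guessing. Combined with on-the-fly verification of the robustness condition edge-by-edge and of the Büchi condition (some location along the cycle lies in $\mathsf{Buchi}$, which I track with an additional bit), the whole procedure runs in $\NPSPACE$, hence $\PSPACE$ by Savitch.

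For hardness, I invoke~\cite{SBMR-concur13}: the robust Büchi controller synthesis problem restricted to timed automata with no punctual guard is $\PSPACE$-hard. Any such instance is also a valid input of our more general problem, and the two semantics coincide since our $\cont$ simply never has the option to use the punctual branch; thus the $\PSPACE$-hardness reduction carries over verbatim. The main obstacle, as usual for this kind of result, is not conceptual but bookkeeping: carefully designing the online construction of $\Gamma(\pi)$ so that only polynomial space is used, and arguing that cycles of at most exponential length suffice. This mirrors the treatment in~\cite[Section~8.7]{Sankur13}, with the added care that the robustness check of Section~\ref{sec:robust-path} must be performed locally at each non-punctual edge transition.
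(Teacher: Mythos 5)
Your proposal is correct and follows essentially the same route as the paper: an $\NPSPACE$ procedure that guesses the lasso transition by transition, maintains the folded orbit graph by online composition, bounds the cycle length exponentially by counting region-state/partial-graph configurations (as in \cite[Section~8.7]{Sankur13}), and checks the cluster-graph condition, with hardness inherited directly from the punctual-free case of \cite{SBMR-concur13}. No substantive differences or gaps to report.
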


The remainder of this paper is devoted to the proof of the correctness of our characterization, i.e.~Theorem~\ref{thm:main}.

\section{Slicing regions}
\label{sec:Slices}
Given a single region $r$, we define a partition of $r$ into sets of valuations called slices according to a partition of the corners of $r$.
\begin{definition}%
    Let $r$ be a region of dimension $m$ and corner $\mathcal{C}$. %
    Let $0\leq k\leq m$.
    A \emph{corner partition} of $r$ is a partition $C_0\uplus  \dots\uplus C_k$ of $\mathcal{C}$
    into $k+1$ colors, such that a corner $c\in\mathcal C$ is said to have color $0\leq i\leq k$ if $c\in C_i$.
\end{definition}

Recall that the folded orbit graph of a robustly iterable region cycle is a cluster graph, \textit{i.e.}~a disjoint union of complete graphs.

\begin{definition}%
    Let $\pi$ be a robustly iterable region cycle around a region $r$, of folded orbit graph $\Gamma(\pi)$.
    The \emph{cluster corner partition} of $r$ induced by $\Gamma(\pi)$ is the corner partition of $r$ 
    so that every color corresponds to a complete SCC of $\Gamma(\pi)$.
\end{definition}
In other words, corners $c$ and $c'$ have the same color in the cluster corner partition if and only if they are in the same complete SCC %
of $\Gamma(\pi)$.

\begin{definition}%
    Let $C_0\uplus \dots\uplus C_k$ be a corner partition of a region $r$ of dimension $m$ into $k+1$ colors, ordered from $0$ to $k$ w.l.o.g..
    A \emph{color weight vector} is a weight vector $\vecw=(w_0,\dots, w_k)\in(0,1]^{k+1}$ so that $\|\vecw\|_1=\sum_{i=0}^k w_i=1$.
    In this context, we say that $\vecw$ gives color $0\leq i\leq k$ the weight $w_i$.
\end{definition}

\begin{definition}[Slice partition]
    Let $C_0\uplus \dots\uplus C_k$ be a corner partition of a region $r$ of dimension $m$ into $k+1$ colors.
    A \emph{slice} w.r.t. $C_0\uplus \dots\uplus C_k$ is a set of valuations defined by a color weight vector $\vecw=(w_0,\dots, w_k)$.
    Let $\nu$ be a valuation in $r$ of corner weights $\vecl=(\lambda_0,\dots,\lambda_m)$.
    Then, $\nu$ belongs to the slice defined by $\vecw$, denoted $\slice^r_{C_0\uplus \dots\uplus C_k}(\vecw)$, if for all $0\leq j\leq k$, $\sum\limits_{\substack{0\leq i\leq m\\\text{s.t. }c_i\in C_j}} \lambda_i = w_j$.
\end{definition}

    Whenever the partition $C_0\uplus \dots\uplus C_k$ is the cluster corner partition of a region $r$ induced by $\Gamma(\pi)$,
    where $\pi$ is a robustly iterable region cycle around $r$,
    we write $\slice_\pi(\vecw)$ instead of $\slice^r_{C_0\uplus \dots\uplus C_k}(\vecw)$.

\begin{remark}
Intuitively, $\vecw$ gives a total weight for each color, such that the slice defined by $\vecw$ contains the convex combinations of corners compatible with these total weights when summing every corner weight of the same color.
As $\vecw$ ranges over every color weight vector, the set $\{\slice_\pi(\vecw)\}$ partitions the entire region, so that each valuation in the region belongs to a unique slice.
Slices are convex polyhedra, but may not be zones.\footnote{On automata with $3$ clocks, some corner partitions may induce slices described by linear equations of the shape $x+y-z=w$.}
However, %
slices defined w.r.t. the cluster corner partition of a robustly iterable cycle are always zones~(cf.~Appendix~\ref{app:slices}).
\end{remark}

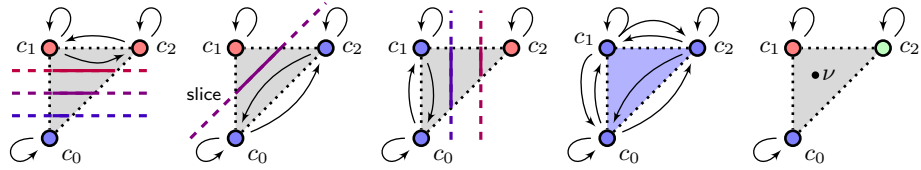
\begin{figure}[t!]
  \begin{center}
  \vspace{2mm}
  \begin{minipage}[c]{.19\textwidth}
  \centering
  \begin{tikzpicture}[line width=1pt, scale = 1.2]
    \useasboundingbox (-.1,-.1) rectangle (1.1,1.1);
    \begin{scope}
        \fill[gray!30] (0,0) -- (0,1) -- (1,1) -- cycle; 
		\draw (0,0) node[circle,inner sep=2pt,draw=black,fill=blue!50] (A1) {}; 
        \node[] at (.25, -.2)   (c0) {$c_0$};
		\draw (1,1) node[circle,inner sep=2pt,draw=black,fill=red!50] (B1) {}; 
        \node[] at (-.25, 1)   (c1) {$c_1$};  
		\draw (0,1) node[circle,inner sep=2pt,draw=black,fill=red!50] (C1) {}; 
        \node[] at (1.3, 1)   (c2) {$c_2$};  
		\draw[dotted,use as bounding box] (A1) -- (B1) -- (C1) -- (A1);
		\begin{scope}[shorten >=1mm,shorten <=1mm,color=black,line width=.5pt]
		\draw[-latex'] (A1) .. controls +(180:6mm) and +(225:6mm) .. (A1);
		\draw[-latex'] (B1) .. controls +(45:6mm) and +(90:6mm) .. (B1);
		\draw[-latex'] (C1) .. controls +(45:6mm) and +(90:6mm) .. (C1);
		\draw[-latex'] (C1) .. controls +(-22.5:6mm) and +(-157.5:4mm) .. (B1);
        \draw[-latex'] (B1) .. controls +(157.5:4mm) and +(22.5:4mm) .. (C1);
        \draw[use as bounding box, very thick,red!50!blue] (-.05,.5) -- (.55, .5);
        \draw[dashed,use as bounding box, very thick,red!50!blue] (-.5,.5) -- (1.2, .5);
        
        \draw[use as bounding box, very thick,red!25!blue] (-.05,.25) -- (.3, .25);
        \draw[dashed,use as bounding box, very thick,red!25!blue] (-.5,.25) -- (1.2, .25);
        
        \draw[use as bounding box, very thick,red!75!blue] (-.05,.75) -- (.8, .75);
        \draw[dashed,use as bounding box, very thick,red!75!blue] (-.5,.75) -- (1.2, .75);
		\end{scope}
	\end{scope}
	\end{tikzpicture}  
 \end{minipage}
 \hfill
    \begin{minipage}[c]{.19\textwidth}
    \centering
  \begin{tikzpicture}[line width=1pt, scale = 1.2]
    \useasboundingbox (-.1,-.1) rectangle (1.1,1.1);
    \begin{scope}
        \fill[gray!30] (0,0) -- (0,1) -- (1,1) -- cycle;
        \draw[use as bounding box, very thick,red!50!blue] (0,.5) -- (.5, 1);
        \draw[dashed,use as bounding box, very thick,red!50!blue] (-.5,0) -- (1, 1.5);
        \node[] at (-0.35, 0.5)   (c2) {\scriptsize $\slice$}; 
		\draw (0,0) node[circle,inner sep=2pt,draw=black,fill=blue!50] (A1) {}; 
        \node[] at (.25, -.2)   (c0) {$c_0$};
		\draw (1,1) node[circle,inner sep=2pt,draw=black,fill=blue!50] (B1) {}; 
        \node[] at (-.25, 1)   (c1) {$c_1$};
		\draw (0,1) node[circle,inner sep=2pt,draw=black,fill=red!50] (C1) {}; 
        \node[] at (1.3, 1)   (c2) {$c_2$};  
		\draw[dotted,use as bounding box] (A1) -- (B1) -- (C1) -- (A1);
		\begin{scope}[shorten >=1mm,shorten <=1mm,color=black,line width=.5pt]
		\draw[-latex'] (A1) .. controls +(180:6mm) and +(225:6mm) .. (A1);
		\draw[-latex'] (B1) .. controls +(45:6mm) and +(90:6mm) .. (B1);
		\draw[-latex'] (C1) .. controls +(45:6mm) and +(90:6mm) .. (C1);
		\draw[-latex'] (A1) .. controls +(22.5:6mm) and +(-112.5:6mm) .. (B1);
        \draw[-latex'] (B1) .. controls +(-157.5:6mm) and +(67.5:6mm) .. (A1);
		\end{scope}
	\end{scope}
	\end{tikzpicture}  
 \end{minipage}
 \hfill
    \begin{minipage}[c]{.19\textwidth}
    \centering
   \begin{tikzpicture}[line width=1pt, scale = 1.2]
    \useasboundingbox (-.1,-.1) rectangle (1.1,1.1);
    \begin{scope}
        \fill[gray!30] (0,0) -- (1,1) -- (0,1) -- cycle;
		\draw (0,0) node[circle,inner sep=2pt,draw=black,fill=blue!50] (A1) {}; 
        \node[] at (.25, -.2)   (c0) {$c_0$};
		\draw (1,1) node[circle,inner sep=2pt,draw=black,fill=red!50] (B1) {}; 
        \node[] at (-.25, 1)   (c1) {$c_1$};  
		\draw (0,1) node[circle,inner sep=2pt,draw=black,fill=blue!50] (C1) {}; 
        \node[] at (1.3, 1)   (c2) {$c_2$};  
		\draw[dotted,use as bounding box] (A1) -- (B1) -- (C1) -- (A1);
		\begin{scope}[shorten >=1mm,shorten <=1mm,color=black,line width=.5pt]
		\draw[-latex'] (A1) .. controls +(180:6mm) and +(225:6mm) .. (A1);
		\draw[-latex'] (B1) .. controls +(45:6mm) and +(90:6mm) .. (B1);
		\draw[-latex'] (C1) .. controls +(45:6mm) and +(90:6mm) .. (C1);
		\draw[-latex'] (A1) .. controls +(112.5:4mm) and +(-112.5:4mm) .. (C1);
        \draw[-latex'] (C1) .. controls +(-67.5:4mm) and +(67.5:4mm) .. (A1);
        \draw[use as bounding box, very thick,red!33!blue] (.33,.28) -- (.33, 1.1);
        \draw[dashed,use as bounding box, very thick,red!33!blue] (.33,-.1) -- (.33, 1.5);

        \draw[use as bounding box, very thick,red!66!blue] (.66,.61) -- (.66, 1.1);
        \draw[dashed,use as bounding box, very thick,red!66!blue] (.66,-.1) -- (.66, 1.5);
		\end{scope}
	\end{scope}
	\end{tikzpicture}    
 \end{minipage}
 \hfill
    \begin{minipage}[c]{.19\textwidth}
    \centering
  \begin{tikzpicture}[line width=1pt, scale = 1.2]
    \useasboundingbox (-.1,-.1) rectangle (1.1,1.1);
    \begin{scope}
        \fill[blue!30] (0,0) -- (0,1) -- (1,1) -- cycle;
		\draw (0,0) node[circle,inner sep=2pt,draw=black,fill=blue!50] (A1) {}; 
        \node[] at (.25, -.2)   (c0) {$c_0$};
		\draw (1,1) node[circle,inner sep=2pt,draw=black,fill=blue!50] (B1) {}; 
        \node[] at (-.25, 1.1)   (c1) {$c_1$};
		\draw (0,1) node[circle,inner sep=2pt,draw=black,fill=blue!50] (C1) {}; 
        \node[] at (1.3, 1)   (c2) {$c_2$};  
		\draw[dotted,use as bounding box] (A1) -- (B1) -- (C1) -- (A1);
		\begin{scope}[shorten >=1mm,shorten <=1mm,color=black,line width=.5pt]
		\draw[-latex'] (A1) .. controls +(180:6mm) and +(225:6mm) .. (A1);
		\draw[-latex'] (B1) .. controls +(45:6mm) and +(90:6mm) .. (B1);
		\draw[-latex'] (C1) .. controls +(45:6mm) and +(90:6mm) .. (C1);
		\draw[-latex'] (A1) .. controls +(22.5:6mm) and +(-112.5:6mm) .. (B1);
        \draw[-latex'] (B1) .. controls +(-157.5:6mm) and +(67.5:6mm) .. (A1);
		\draw[-latex'] (C1) .. controls +(45:6mm) and +(135:4mm) .. (B1);
        \draw[-latex'] (B1) .. controls +(157.5:4mm) and +(22.5:4mm) .. (C1);
		\draw[-latex'] (A1) .. controls +(112.5:4mm) and +(-112.5:4mm) .. (C1);
        \draw[-latex'] (C1) .. controls +(-157.5:4mm) and +(135:4mm) .. (A1);
		\end{scope}
	\end{scope}
	\end{tikzpicture}  
 \end{minipage}
 \hfill
    \begin{minipage}[c]{.19\textwidth}
    \centering
  \begin{tikzpicture}[line width=1pt, scale = 1.2]
    \useasboundingbox (-.1,-.1) rectangle (1.1,1.1);
    \begin{scope}
        \fill[gray!30] (0,0) -- (0,1) -- (1,1) -- cycle;
        \draw (0.25,0.7) node[circle,inner sep=1pt,fill=black] {};
        \node[] at (0.4, 0.7) {$\nu$};  
		\draw (0,0) node[circle,inner sep=2pt,draw=black,fill=blue!50] (A1) {}; 
        \node[] at (.25, -.2)   (c0) {$c_0$};
		\draw (1,1) node[circle,inner sep=2pt,draw=black,fill=green!25] (B1) {}; 
        \node[] at (-.25, 1)   (c1) {$c_1$};
		\draw (0,1) node[circle,inner sep=2pt,draw=black,fill=red!50] (C1) {}; 
        \node[] at (1.3, 1)   (c2) {$c_2$};  
		\draw[dotted,use as bounding box] (A1) -- (B1) -- (C1) -- (A1);
		\begin{scope}[shorten >=1mm,shorten <=1mm,color=black,line width=.5pt]
		\draw[-latex'] (A1) .. controls +(180:6mm) and +(225:6mm) .. (A1);
		\draw[-latex'] (B1) .. controls +(45:6mm) and +(90:6mm) .. (B1);
		\draw[-latex'] (C1) .. controls +(45:6mm) and +(90:6mm) .. (C1);
		\end{scope}
	\end{scope}
	\end{tikzpicture}  
 \end{minipage}
    \caption{Cluster corner partitions and some slices of a region of dimension 2.
    }
    \label{fig:slice}
  \end{center}
  \vspace{-.75cm}
\end{figure}

\begin{example}
    Let $r$ be the region $0<x<y<1$ of dimension $2$ of corners $c_0$, $c_1$ and $c_2$.
    \figurename~\ref{fig:slice} depicts
    three examples of corner partitions of $r$ into $2$ colors $C_0\uplus C_1$, %
    and examples of folded orbit graphs that induce these partitions as cluster corner partitions. A few slices are depicted in each case for different color weight vectors.
    In particular, the partition displayed in the second example in \figurename~\ref{fig:slice} is the cluster corner partition induced by the cycle $\pi$ from \figurename~\ref{fig:cluster}, and the slice drawn in this example is the slice $\slice_\pi(\vecw)$ for $\vecw=(\frac{1}{2},\frac{1}{2})$. %
    The two rightmost examples represent partitions with $1$ and $3$ colors that result in one slice equal to $r$, or singleton slices that contain a single valuation $\nu$, respectively.
\end{example}

As detailed in Appendix~\ref{app:slice-based-proof}, the slices induced by a robustly iterable cycle $\pi$ represent equivalence classes of the reachability relation, so that there is a run from $\nu$ to $\nu'$ following $\pi$ if and only if $\nu$ and $\nu'$ belong to the same slice.

\section{Proof of Theorem~\ref{thm:main}}
\label{sec:proofMainThm}

We now proceed with the proof of our characterization in term of cluster graphs. This is formalized in the following propositions.

\begin{restatable}[]{proposition}{contWins}\label{prop:contWins}
If there exists %
a winning lasso that is robustly iterable, then $\cont$ wins $\thegame$ for some $\delta>0$.
\end{restatable}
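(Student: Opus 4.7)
Decompose the winning lasso as $\pi_0\pi^\omega$, where $\pi_0$ is a (possibly empty) robust path ending in $(\ell,r)$ with $\ell\in\mathsf{Buchi}$, and $\pi$ is a robustly iterable cycle around $(\ell,r)$. The goal is to build a $\cont$ strategy in two phases: a transient prefix that reaches a well-chosen subset $s^*\subseteq r$, followed by an infinite regime in which $\pi$ is iterated forever while staying in $s^*$. The accepting location $\ell$ is then visited infinitely often and the Büchi objective is won.

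\textbf{Invariant slice.} The heart of the proof is to exhibit, for a small enough $\delta>0$, a non-empty set $s^*\subseteq r$ satisfying the fixed-point inclusion $s^*\subseteq\CPre_\pi^\delta(s^*)$. I would take $s^*$ to be a slight shrinking of the ``central'' slice of $r$ for the cluster corner partition induced by $\Gamma(\pi)$. Concretely, let $C_0\uplus\dots\uplus C_k$ be the complete SCCs of $\Gamma(\pi)$, and let $\vec{w}$ be the color weight vector with all entries $1/(k+1)$, so that $s=\slice_\pi(\vec{w})$ lies in the interior of $r$ with its defining inequalities strictly satisfied. By the slice/reachability correspondence alluded to at the end of Section~\ref{sec:Slices} (and detailed in Appendix~\ref{app:slice-based-proof}), we have $\CPre_\pi^0(s)=s$: the cluster structure of $\Gamma(\pi)$ constrains the flow of corner weights along $\gamma(\pi)$ to remain within each complete SCC, so the color weights are preserved along unperturbed runs of $\pi$. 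For $\delta>0$, using the shrunk-DBM calculus of Definition~\ref{def:dbm-operations} and the fact that slices of robustly iterable cycles are zones (Appendix~\ref{app:slices}), $\CPre_\pi^\delta(s)$ is an $O(\delta\,|\pi|)$-shrinking of $s$ with the same geometry. Since $\vec{w}$ lies at positive distance from the boundary of the color weight simplex, choosing $\delta$ small enough yields a non-empty $s^*$ inside $s$ that absorbs the shrinkage induced by one more application of $\CPre_\pi^\delta$, giving $s^*\subseteq\CPre_\pi^\delta(s^*)$.

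\textbf{Combining the phases.} By Proposition~\ref{cor:robpath} applied to the robust path $\pi_0$, there exists $\delta_1>0$ with $\mathbf{0}\in\CPre_{\pi_0}^{\delta_1}(r)$. To actually target $s^*$ rather than an arbitrary point of $r$, one refines $\CPre_{\pi_0}^{\delta}$ towards the open target $s^*$: for $\delta$ small enough, $\mathbf{0}\in\CPre_{\pi_0}^{\delta}(s^*)$, because $\cont$ still has the freedom to pick the delays along $\pi_0$ so that the valuation reached in $r$ falls in the relatively open subset $s^*$. If reaching $s^*$ directly through $\pi_0$ is not convenient, we may equivalently consider the composition $\pi_0\pi$ and use the inclusion $s^*\subseteq\CPre_\pi^\delta(s^*)$ to guarantee $\mathbf{0}\in\CPre_{\pi_0\pi}^\delta(s^*)$. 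With $\delta=\min(\delta_1,\delta_2)$, the winning strategy of $\cont$ is: play a witness strategy of $\mathbf{0}\in\CPre_{\pi_0\pi^{\leq 1}}^\delta(s^*)$ during the prefix, then forever play the witness strategy of $s^*\subseteq\CPre_\pi^\delta(s^*)$. Every iteration of $\pi$ visits $\ell\in\mathsf{Buchi}$ and the resulting play is an infinite well-formed run, so $\cont$ wins $\thegame$ against every $\pert$ strategy.

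\textbf{Main obstacle.} The delicate step is the invariance $s^*\subseteq\CPre_\pi^\delta(s^*)$: controlling how corner weights evolve under a perturbed traversal of $\pi$ and showing that the cluster structure of $\Gamma(\pi)$ is exactly strong enough to re-absorb this drift within the slice. Intuitively, viewing $\gamma(\pi)$ as an interval Markov chain on the corner weights (as in the remark after the definition of orbit graphs), the cluster condition forbids any leakage of mass between distinct colors, while small perturbations only redistribute mass within each color; the strict positivity of $\vec{w}$ provides the slack needed to compensate. Making this quantitative—and ensuring that one $\cont$ strategy works uniformly for all valuations of $s^*$—is the technical bulk of the argument.
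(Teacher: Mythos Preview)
Your two-phase decomposition and the insight that the invariant set must live inside a slice are both correct, but committing to a \emph{single} ``central'' slice $\slice_\pi(\vec w)$ with $\vec w=(\tfrac1{k+1},\dots,\tfrac1{k+1})$ creates a genuine gap that the paper's proof is specifically designed to avoid.

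The critical failure is in ``combining the phases''. Whenever $k\geq 1$, every slice is a \emph{lower-dimensional} subset of $r$: by Proposition~\ref{prop:slicezones} it is $r$ intersected with equality constraints of the form $x=c$ or $x-y=c$. Hence your $s^*$ is not ``relatively open'' in $r$; it has empty interior there, and Lemma~\ref{lm:ballslice} does not apply to it as a target. In general $\mathbf 0\notin\CPre^\delta_{\pi_0}(s^*)$: along a non-punctual edge of $\pi_0$, $\mathsf{Unreset}_R$ can turn a diagonal slice equality into a non-diagonal one (reset one of its two clocks), which $\mathsf{Shrink}_\delta$ then collapses to~$\emptyset$. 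Your fallback of pre-composing with $\pi$ does not help, since $\CPre^\delta_\pi(s^*)\subseteq\CPre^0_\pi(s^*)=\slice_\pi(\vec w)$ never leaves that same lower-dimensional slice. The paper sidesteps this entirely: $\cont$ first targets a \emph{full-dimensional} DBM $N\subseteq r$ (produced by Lemma~\ref{lm:initslice}), to which Lemma~\ref{lm:ballslice} does apply, so $\mathbf 0\in\CPre^\delta_{\pi_0}(N)$; only \emph{after} landing at some $\nu\in N$ does one select, adaptively, the slice through $\nu$ and set $\mathcal N_\nu=N\cap\slice_\pi(\vec w_\nu)$. The invariant $\mathcal N_\nu\subseteq\CPre^\delta_\pi(\mathcal N_\nu)$ is then established for \emph{every} slice meeting $N$, not a distinguished one.

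A secondary gap is the circularity in your fixed point. You want $s^*\subseteq\CPre^\delta_\pi(s^*)$, but Lemmas~\ref{lm:cpredeltaslice} and~\ref{lm:sliceShrink} only give $\CPre^\delta_\pi(s^*)=(r-\delta Q)\cap\slice_\pi(\vec w)$ with $Q$ depending on $s^*$; shrinking $s^*$ further changes $Q$, so ``absorbing the shrinkage'' is not well-defined. The paper closes this loop again via Lemma~\ref{lm:initslice}: the DBM $N$ satisfies $N\subseteq r-\delta Q$ for \emph{every} shrinking matrix $Q$ (for small enough $\delta$), so $N\cap\slice_\pi(\vec w)\subseteq(r-\delta Q)\cap\slice_\pi(\vec w)$ holds uniformly, with no circular dependence. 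Your appeal to ``strict positivity of $\vec w$'' does not supply this universal bound.
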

\begin{proof}[Sketch]
Recall that, as described in the end of Section~\ref{sec:robust-path}, in order to win, $\cont$ has to exhibit a cycle that can be iterated despite the perturbation. We will actually show that a \emph{robustly iterable lasso} can be iterated forever against any perturbation inflicted by $\pert$.
The full proof can be found in Appendix~\ref{app:slice-based-proof}, %
here we sketch the essential steps of this proof and explain how we use the different notions introduced so far.

Let $\pi_0\pi^\omega$ be a robustly iterable winning lasso around the region $r$. %
In \figurename~\ref{fig:contWins}, we depict the essential steps of our proof from left to right.
The first step of the proof, 
presented in \figurename~\ref{fig:contWins}.1,
consists in showing that %
one can build a DBM $N$ such that any shrinking of $r$ contains $N$, i.e., for any shrinking matrix $P$ and $\delta$ small enough, $N \subseteq r - \delta P$.
The construction of this DBM is detailed in Appendix~\ref{app:slice-based-proof}.

The second step of the proof (\figurename~\ref{fig:contWins}.2) consists in showing that
for any slice $\mathsf{slice}$ induced by the cluster corner partition that intersects $N$,
there exists a \emph{shrinking matrix} $P$ such that 
$(r - \delta P) \cap \mathsf{slice} = \CPre^\delta_\pi(N \cap \mathsf{slice})$.
This is established in Lemma~\ref{lm:sliceShrink} of Appendix~\ref{app:slice-based-proof}. To obtain this latter equation we use two crucial properties of any slice induced by the cluster corner partition: 
\begin{itemize}
    \item The fact that the folded orbit graph of $\pi$ is a cluster graph implies that the reachability relation along $\pi$ is \emph{complete} over the slice. 
    \item Slices induced by a folded orbit graph are always \emph{zones}, cf. Proposition~\ref{prop:slicezones} of Appendix~\ref{app:slices}, and are thus compatible with (shrunk) DBM operations.
\end{itemize}
Finally, the first two step
entail that, as displayed 
in \figurename~\ref{fig:contWins}.3:
\[N \cap \mathsf{slice}\subseteq \CPre^\delta_\pi(N \cap \mathsf{slice})\,.\]
This result implies that $\cont$ wins from any valuation $\nu$ in
$N \cap \mathsf{slice}$, %
as he can always enforce staying within $N \cap \mathsf{slice}$.
\end{proof}

\begin{figure}[ht!]
  \begin{center}
  \begin{minipage}[b]{.3\textwidth}
  \centering
  \begin{tikzpicture}[line width=1pt,scale=2]
    \useasboundingbox (-.1,-.1) rectangle (1.1,1.1);
    \begin{scope}
        \draw[dashed,fill=gray!20,thin] (0.05,.1) -- (0.05,0.95) -- (.9,.95) -- cycle;
        \draw[fill=gray!50,thin] (0.15,.3) -- (0.15,0.85) -- (.7,.85) -- cycle;
        \node[] at (-0.1,0.5) {\scriptsize $r$};
        \node[] at (0.25,0.75) {\scriptsize $N$};
        \node[] at (0.75,0.25) {\scriptsize $r-\delta P$};
        \draw [-latex',thin] (0.5,0.25) to [out=-167.5,in=-12.5] (0.1,0.25);
        
        \node[] at (0.5,1.3) {\scriptsize $N\subseteq r-\delta P$};
        \node[] at (0.5,-0.25) {(1)};
		\draw (0,0) node[circle,inner sep=2pt,draw=black,fill=blue!50] (A1) {}; 
		\draw (1,1) node[circle,inner sep=2pt,draw=black,fill=blue!50] (B1) {}; 
		\draw (0,1) node[circle,inner sep=2pt,draw=black,fill=red!50] (C1) {}; 
		\draw[dotted,use as bounding box] (A1) -- (B1) -- (C1) -- (A1);
	\end{scope}
	\end{tikzpicture}
 \end{minipage}
 \hfill
 \begin{minipage}[b]{.3\textwidth}
 \centering
  \begin{tikzpicture}[line width=1pt,scale=2]
    \useasboundingbox (-.1,-.1) rectangle (1.1,1.1);
    \begin{scope}
        \draw[dashed,fill=gray!30,thin] (0.05,.1) -- (0.05,0.95) -- (.9,.95) -- cycle;
        \draw[fill=gray!50,thin] (0.15,.3) -- (0.15,0.85) -- (.7,.85) -- cycle;
        \node[] at (-0.25,0.3) {\scriptsize \textcolor{black}{$\slice$}};
        \node[] at (0.6,0.25) {\scriptsize \textcolor{black}{$N\cap\slice$}};
        \draw [-latex',thin] (0.35,0.25) to [out=170,in=-90,looseness=1] (0.25,0.53);
        
        \node[] at (0.5,1.3) {\scriptsize $(r - \delta P) \cap \mathsf{slice} = \CPre^\delta_\pi(N \cap \mathsf{slice})$};
        \node[] at (0.5,-0.25) {(2)};
        
        \draw[use as bounding box, very thick,red!33!blue] (0.15,.48) -- (0.52, 0.85);
        \draw[dashed,use as bounding box, thin,red!33!blue] (-0.33,0) -- (0.83, 1.16);
        
		\draw (0,0) node[circle,inner sep=2pt,draw=black,fill=blue!50] (A1) {}; 
		\draw (1,1) node[circle,inner sep=2pt,draw=black,fill=blue!50] (B1) {}; 
		\draw (0,1) node[circle,inner sep=2pt,draw=black,fill=red!50] (C1) {}; 
		\draw[dotted,use as bounding box] (A1) -- (B1) -- (C1) -- (A1);
	\end{scope}
	\end{tikzpicture}   
 \end{minipage}
 \hfill
 \begin{minipage}[b]{.3\textwidth}
 \centering
  \begin{tikzpicture}[line width=1pt,scale=2]
    \useasboundingbox (-.1,-.1) rectangle (1.1,1.1);
    \begin{scope}
        \draw[dashed,fill=gray!30,thin] (0.05,.1) -- (0.05,0.95) -- (.9,.95) -- cycle;
        \draw[fill=gray!50,thin] (0.15,.3) -- (0.15,0.85) -- (.7,.85) -- cycle;
        \node[] at (0.8,0.25) {\scriptsize \textcolor{black}{$\CPre^\delta_\pi(N\cap\slice)$}};
        \draw [-latex',thin] (0.3,0.25) to [out=-170,in=-70,looseness=1.9] (0.1,0.39);
        
        \node[] at (0.5,1.3) {\scriptsize $N \cap \mathsf{slice}\subseteq \CPre^\delta_\pi(N \cap \mathsf{slice})$};
        \node[] at (0.5,-0.25) {(3)};
        
        \draw[use as bounding box, very thick,red!33!blue] (0.05,0.38) -- (0.62,0.95);
		\draw (0,0) node[circle,inner sep=2pt,draw=black,fill=blue!50] (A1) {}; 
		\draw (1,1) node[circle,inner sep=2pt,draw=black,fill=blue!50] (B1) {}; 
		\draw (0,1) node[circle,inner sep=2pt,draw=black,fill=red!50] (C1) {}; 
		\draw[dotted,use as bounding box] (A1) -- (B1) -- (C1) -- (A1);
	\end{scope}
	\end{tikzpicture} 
 \end{minipage}
  \end{center}
  \vspace{-2mm}
  \caption{\label{fig:contWins}Proof schema of Proposition~\ref{prop:contWins} applied to the corner partition of \figurename~\ref{fig:cluster}.}
  \vspace{-.5cm}
\end{figure}

Moreover, we note that a cycle that only contains punctual transitions can trivially be iterated forever by $\cont$ as there are no perturbations happening. However, such cycles always have fully-partitioned folded orbit graphs such as the rightmost example of \figurename~\ref{fig:slice}, and therefore are always robustly iterable.

\begin{restatable}[]{proposition}{pertWins}
\label{prop:pertWins}
If there is %
no winning lasso that is robustly iterable in the region automaton of $\TA$, then
$\cont$ cannot win $\thegame$ for any $\delta>0$.
\end{restatable}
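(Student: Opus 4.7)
The plan is to prove the contrapositive: given a winning strategy $\sigmacont$ for some amplitude $\delta > 0$, we exhibit a robustly iterable winning lasso in the region automaton of $\TA$.

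First, I would extract a candidate cycle from a winning play. Fix any strategy $\sigmapert$ for $\pert$; then $\outcome(\sigmacont,\sigmapert)$ is an infinite run satisfying $\mathsf{Buchi}$ and projects to an infinite region path. By finiteness of the region automaton and a pigeonhole argument, this region path can be decomposed as a finite prefix $\pi_0$ reaching some region state $(\ell,r)$ with $\ell \in \mathsf{Buchi}$, followed by infinitely many occurrences of a region cycle $\pi$ around $(\ell,r)$. Because $\sigmacont$ must succeed against every perturbation, every non-punctual edge transition along $\pi$ (and along $\pi_0$) must originate from a non-punctual region: otherwise the preceding delay, of length at least $\delta$, could be perturbed so that the resulting valuation leaves the source region, disabling the edge. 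Hence both $\pi_0$ and $\pi$ are robust paths in the sense of Section~\ref{sec:robust-path}.

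Next I would invoke Lemma~\ref{lm:robCycle} applied to $\pi$. In the first case, some iterate $\pi^k$ is robustly iterable, so $\pi_0(\pi^k)^\omega$ is a robustly iterable winning lasso and we are done. It remains to rule out the second case, in which some iterate $\pi^k$ yields a $\Gamma(\pi^k)$ containing a weakly connected component $W$ that is not strongly connected.

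The main obstacle is eliminating this second case. I would construct, from the asymmetry of $W$, a specific $\pert$ strategy $\sigmapert^\star$ against which $\sigmacont$ eventually fails to extend the play, contradicting the fact that $\sigmacont$ is winning. The idea is to interpret $\gamma(\pi^k)$ as an interval Markov chain on corner weights (cf.~Appendix~\ref{app:fog}): iterating $\pi^k$ transforms the corner-weight vector $\vecl$ of the current valuation by a stochastic matrix whose support lies in $\gamma(\pi^k)$ and whose entries are determined by the perturbations chosen by $\pert$. Within $W$ there exist corners $c,c'$ such that $c'$ is reachable from $c$ in $W$ while $c$ is not reachable from $c'$; $\sigmapert^\star$ is designed so that each non-punctual edge transfers weight from the upstream part of $W$ toward the downstream part, with no counter-flow. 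Since every non-punctual edge imposes a perturbation of magnitude at least $\delta$, this transfer yields a uniform multiplicative contraction $1 - \alpha$, with $\alpha > 0$ depending only on $\pi^k$ and $\delta$, on the total weight assigned to the upstream part of $W$. After enough iterations this weight falls below the threshold where the valuation lies within $O(\delta)$ of the face of $r$ defined by the vanishing of these weights; one further perturbation then pushes the valuation out of the guards needed to follow $\pi^k$, so $\sigmacont$ fails to extend the play, yielding the desired contradiction.

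The delicate step is quantifying the contraction $\alpha$ and ensuring it is uniform: punctual edges transfer corner weights deterministically with no perturbation, while non-punctual edges induce the contractions exploited by $\pert$. A careful analysis, analogous to~\cite{SBMR-concur13} but adapted to the mixed punctual/non-punctual setting, shows that the drain witnessed by the weak but non-strong connectivity of $W$ is uniform enough to force failure in finitely many iterations.
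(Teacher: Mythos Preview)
Your argument has a structural circularity that breaks the proof. You extract the region lasso $\pi_0\pi^\omega$ from the play $\outcome(\sigmacont,\sigmapert)$ against \emph{one} fixed $\sigmapert$, and then, in the ``bad'' case of Lemma~\ref{lm:robCycle}, you build a \emph{different} strategy $\sigmapert^\star$ tailored to that specific $\pi$ in order to reach a contradiction. But nothing forces $\outcome(\sigmacont,\sigmapert^\star)$ to follow $\pi$ at all: against $\sigmapert^\star$, the controller's moves may induce an entirely different region path, so your drift analysis on the corner weights of $\pi$ simply does not apply to that play. Relatedly, your claim that the extracted $\pi$ is a robust path is unjustified: robustness of the region path depends on which perturbations were actually played, and a benign $\sigmapert$ (e.g.\ always perturbing by~$0$) can produce a region path that traverses a punctual region before a non-punctual guard. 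A second, independent gap is the pigeonhole step: visiting $(\ell,r)$ infinitely often only gives infinitely many cycles around $(\ell,r)$, not a single cycle $\pi$ repeated forever; these cycles can all be distinct and of unbounded length.

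The paper avoids both problems by reversing the order of quantifiers. It fixes a \emph{universal} Perturbator strategy $\sigma_\delta^p$ up front, independent of any cycle, and shows directly that for every $\sigmacont$ the outcome $\outcome(\sigmacont,\sigma_\delta^p)$ cannot be an infinite B\"uchi-winning run. The key device is a Lyapunov-type function $L_I(\nu)=\sum_{c_i\in I}\lambda_i$ on the corner weights of an initial SCC $I$: under $\sigma_\delta^p$, each traversal of a cycle whose folded orbit graph has $I$ non-trivially weakly connected to another SCC forces an \emph{additive} decrease $L_I(\nu')\le L_I(\nu)-\varepsilon^2/2$ (not the multiplicative contraction you suggest), which drives $L_I$ negative in finitely many steps. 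To handle the fact that the controller need not repeat a single cycle, the paper uses a Ramsey-type factorisation of the infinite region path into segments $\pi_1,\pi_2,\ldots$ all sharing the \emph{same} folded orbit graph; the hypothesis that no robustly iterable winning lasso exists then guarantees this common graph is not a cluster graph, so the additive drift applies uniformly across the factors. Your sketch contains the right local ingredient (drift of corner mass across a non-strongly-connected component), but the global argument must be organised around a single, cycle-agnostic $\pert$ strategy and a Ramsey decomposition, not around a cycle extracted post hoc.
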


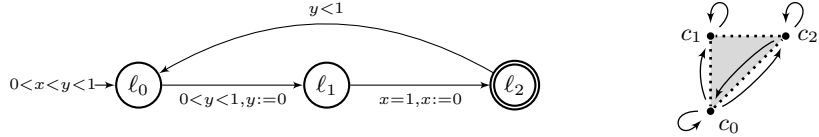
\begin{figure}[t!]
  \begin{center}
    \vspace{-4mm}    
  \begin{tikzpicture}[node distance=2.5cm,auto, scale = .7]
      \tikzstyle{every state}=[thick, circle,minimum size=17pt,inner sep=0pt]
      \node (preA) at (-1,0) {};
      \node[state,] at (0,0) (A) {$\ell_0$};
      \node[state, right of=A] (B) {$\ell_1$};
      \node[accepting,state,right of=B] (C) {$\ell_2$};
      \everymath{\scriptstyle}
      \path[-latex'] 
      (preA) edge node[left] {$0<x<y<1$}(A)
      (A) edge node[below] {$0<y<1, y:=0$} (B)
      (C) edge[bend right] node[above] {$y<1$} (A)
      (B) edge node[below] {$x=1, x:=0$} (C);
    \end{tikzpicture}
    \hspace{2cm}
  \begin{tikzpicture}[line width=1pt]
    \begin{scope}
		\fill[gray!30] (0,0) -- (1,1) -- (0,1) -- cycle;
		\draw (0,0) node[circle,inner sep=1pt,fill=black] (A1) {}; 
		\draw (1,1) node[circle,inner sep=1pt,fill=black] (B1) {}; 
		\draw (0,1) node[circle,inner sep=1pt,fill=black] (C1) {}; 
		\draw[dotted,use as bounding box] (A1) -- (B1) -- (C1) -- (A1);
        \node[] at (.25, -.2)   (c0) {$c_0$};
        \node[] at (-.25, 1)   (c1) {$c_1$};  
        \node[] at (1.3, 1)   (c2) {$c_2$};  
		\begin{scope}[shorten >=1mm,shorten <=1mm,color=black,line width=.5pt]
		\draw[-latex'] (A1) .. controls +(180:6mm) and +(225:6mm) .. (A1);
		\draw[-latex'] (B1) .. controls +(45:6mm) and +(90:6mm) .. (B1);
		\draw[-latex'] (C1) .. controls +(45:6mm) and +(90:6mm) .. (C1);
		\draw[-latex'] (A1) .. controls +(22.5:4mm) and +(-112.5:4mm) .. (B1);
        \draw[-latex'] (A1) .. controls +(112.5:4mm) and +(-112.5:4mm) .. (C1);
        \draw[-latex'] (B1) .. controls +(-157.5:4mm) and +(67.5:4mm) .. (A1);
		\end{scope}
	\end{scope}
	\end{tikzpicture}
  \vspace{-5mm}    
  \end{center}
  \caption{\label{fig:pertWins}A cycle that is not robustly iterable, and its folded orbit graph.}
  \vspace{-.5cm}
\end{figure}
\begin{proof}[Sketch]
To establish the above proposition, we first show that if $\cont$ tries to win by repeating the same accepting region cycle then this iteration will not be possible forever.
Indeed, let $\pi$ be a region cycle and assume that $\Gamma(\pi)$ is not a cluster graph,
then there must exist at least two weakly connected SCCs $I$ and $J$ such that $J$ is reachable from $I$ but not vice versa, cf. \figurename~\ref{fig:pertWins} where $I=\{c_0,c_2\}$ and $J=\{c_1\}$.
Let $\nu$ be a valuation in the region $r$ from \figurename~\ref{fig:pertWins}. %
We show that $\cont$ cannot iterate $\pi$ forever when starting from $\nu$.

Let $\vecl=(\lambda_0, \lambda_1, \lambda_2)$ be the corner weights of $\nu$. We first use a result from~\cite{SBMR-concur13} that exhibits a precise strategy for $\pert$ which inflicts a well chosen perturbation $\varepsilon$ that pushes any run away from $\{c_0,c_2\}$ and towards $\{c_1\}$. 
This entails the following behavior: after each visit of $r$, one reaches a valuation $\nu'$ of corner weights $\vecl' = (\lambda_0', \lambda_1', \lambda_2')$ such that
$\lambda_0 + \lambda_2 > \lambda'_0 + \lambda'_2 + \varepsilon$. A direct corollary of this observation is that under this strategy of $\pert$, any run from $\nu$ that tries to visit $r$ repeatedly will ultimately drift out of the region $r$.
This is detailed in Appendix~\ref{app:pertWins}. 
In order to conclude, we still need to argue that $\cont$ cannot win by switching between different region cycles. But since there exists only a finite number of folded orbit graphs, one can use \emph{Ramsey-like} arguments to factorize any infinite run into a finite prefix and factors that all share the same folded orbit graph. Assuming that no such folded orbit graph is a cluster graph, together with the previous intuition, entails Proposition~\ref{prop:pertWins}.
\end{proof}

\section{Conclusion}

This work is a first technical step towards a more general setting where timing measurements in components of the system could either be reliable or disruptable.
As future work, we plan to investigate the following directions.
Extend the current work by allowing some non-punctual transitions to be reliable (\textit{i.e.}~not under perturbation in the game semantics), 
and define criteria for repairing timed automata that are not robustly controllable in the classical sense
by transforming them into timed automata with exact components that are robust in the setting addressed in this paper.

\bibliographystyle{splncs04}
\bibliography{ref}
\newpage
\appendix

\section{Classical results on regions and orbit graphs.}\label{app:fog}

\begin{lemma}[\cite{AD-tcs94}]\label{lm:bissimulation}
    Let $(\ell_1,\nu_1)$ and $(\ell_2,\nu_2)$ be configurations of a timed automaton $\TA$ and $r_1,r_2$ be %
    regions so that $\nu_1\in r_1$ and $\nu_2\in r_2$. 
    The following are equivalent:
    \begin{itemize}
        \item there is a run from $(\ell_1,\nu_1)$ to $(\ell_2,\nu_2)$ in $\TA$,
        \item for all $\nu_1'\in r_1$ there exists $\nu_2'\in r_2$ and a run from $(\ell_1,\nu_1')$ to $(\ell_2,\nu_2')$, and
        \item for all $\nu_2'\in r_2$ there exists $\nu_1'\in r_1$ and a run from $(\ell_1,\nu_1')$ to $(\ell_2,\nu_2')$.
    \end{itemize}
\end{lemma}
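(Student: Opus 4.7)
The plan is to establish the three-way equivalence via the classical time-abstract bisimulation property of the region partition. I would define the relation $\sim$ on configurations by $(\ell, \nu) \sim (\ell, \nu')$ iff $\nu$ and $\nu'$ belong to the same region, and prove that $\sim$ is a bisimulation with respect to the one-step transition relation of $\TA$ -- both forward (if $(\ell, \nu) \sim (\ell, \nu')$ and $(\ell, \nu) \xrightarrow{t} (\ell', \nu'')$, then there exists $\nu''' \sim \nu''$ with $(\ell, \nu') \xrightarrow{t'} (\ell', \nu''')$) and symmetrically backward. The three equivalences of the lemma then follow by induction on run length.

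The main technical step is the one-step bisimulation, split into two cases. For edge transitions $(\ell, \nu) \xrightarrow{g,R} (\ell', \nu[R:=0])$, I would use two region properties: guards are disjoint unions of regions, so $\nu \models g$ iff $\nu' \models g$; and the reset operation on regions is well-defined (see Section~\ref{sec:Prel}), so $\nu[R:=0]$ and $\nu'[R:=0]$ lie in the same region $r[R:=0]$. For the backward direction of an edge, given a target $\nu_2'$ in the same region as $\nu[R:=0]$, one can lift it to a $\nu_1' \sim \nu$ that agrees with $\nu_2'$ on clocks outside $R$ and with $\nu$ on clocks in $R$. For delay transitions $(\ell, \nu) \xrightarrow{d} (\ell, \nu + d)$, I would rely on the fact that the sequence of time-successor regions visited by $\nu + \cdot$ is determined purely by the source region: from any $\nu' \sim \nu$ there is a delay $d'$ with $\nu' + d'$ in the same target region as $\nu + d$. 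The backward direction is analogous, pulling $\nu'' \sim \nu + d$ back to some $\nu' \sim \nu$.

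With the one-step property in place, a straightforward induction on the length of the run gives the three implications. The first condition implies the second by fixing any $\nu_1' \in r_1$ and inductively simulating each transition of the original run with forward bisimulation, ending at some $\nu_2' \in r_2$. The first condition implies the third by symmetric backward induction, starting from the fixed $\nu_2' \in r_2$ and lifting the run step by step to some $\nu_1' \in r_1$. The converse implications from the second or third condition to the first are trivial: instantiate $\nu_1' := \nu_1$ or $\nu_2' := \nu_2$ and the hypothesis gives the required run directly.

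The main obstacle is the delay-transition case of the bisimulation, which requires a careful analysis of the region structure. The key fact -- that the sequence of time-successor regions is uniquely determined by the source region -- relies on the ordered partition structure $(\iota, \beta_0 \uplus \beta_1 \uplus \cdots \uplus \beta_m)$ from Section~\ref{sec:Prel}: two valuations in the same region share both the integer parts of their clock values and the ordering of fractional parts, and this combinatorial data precisely determines the order in which fractional parts cross the next integer threshold as time elapses. Making this explicit -- for example by computing, given any $\nu$ in a region $r$, which delay $d'$ brings $\nu'$ to a specified time-successor region of $r$ -- is a routine but somewhat tedious case distinction on whether $\beta_0$ is empty and on which sub-block $\beta_i$ reaches integer values first.
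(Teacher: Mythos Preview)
The paper does not prove this lemma: it is stated in Appendix~\ref{app:fog} with the citation \cite{AD-tcs94} and no proof, as a classical fact about the region construction. So there is no paper proof to compare against; your sketch is simply the standard argument for the Alur--Dill result that region equivalence is a time-abstract bisimulation, and the overall strategy is correct.

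Two small points are worth tightening. First, in the backward lifting for an edge transition you write that $\nu_1'$ should ``agree with $\nu$ on clocks in $R$''. That need not land in $r_1$: the fractional parts of the $R$-clocks copied from $\nu$ may be inconsistent, in the ordering imposed by $r_1$, with the fractional parts of the non-$R$ clocks copied from $\nu_2'$. You must instead choose fresh values for the $R$-clocks that respect the integer parts and fractional-part ordering dictated by $(\iota,\beta)$ relative to the already-fixed non-$R$ coordinates; this is always possible but is not a straight copy from $\nu$. Second, your claim that the implications $(2)\Rightarrow(1)$ and $(3)\Rightarrow(1)$ are ``trivial by instantiation'' is not literally right: instantiating $\nu_1':=\nu_1$ in $(2)$ yields a run to \emph{some} $\nu_2'\in r_2$, not to the specific $\nu_2$. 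The equivalence as stated should be read at the level of region-to-region reachability (any one witness implies all three forms), and the actual argument for closing the cycle again uses the bisimulation property, not mere instantiation.
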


As such, we will abstract sets of runs that go through the same regions and the same edges of $\TA$ as paths from region to region in a finite automaton called the region abstraction of $\TA$.

\begin{lemma}[\cite{Puri00}]\label{lm:orbit-arity}
    Let $\pi$ be a region cycle around a region $r$ of corners $\mathcal C$.
    For every corner $c\in\mathcal C$, there is a corner $c'\in\mathcal C$ so that there is an edge from $c$ to $c'$ in $\Gamma(\pi)$.
    Similarly, for every corner $c'\in\mathcal C$, there is a corner $c\in\mathcal C$ so that there is an edge from $c$ to $c'$ in $\Gamma(\pi)$.
\end{lemma}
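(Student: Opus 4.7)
The plan is to reduce the lemma to a local arity property on atomic orbit graphs, and then chain it through the concatenation structure of $\gamma(\pi)$.

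Any region path $\pi$ can be refined so that each delay transition steps to an \emph{immediate} time-successor, decomposing $\gamma(\pi)$ as a concatenation of orbit graphs for immediate delay transitions and edge (reset) transitions. The key sub-claim is: for any such atomic transition from $r$ to $r'$, every corner of $r$ has at least one outgoing edge in the atomic orbit graph, and every corner of $r'$ has at least one incoming edge. Once this is established, a straightforward induction on the number of atomic steps chains outgoing edges layer by layer, producing a path from any $(\mathsf{start}, c)$ to some $(\mathsf{end}, c')$ in $\gamma(\pi)$; the symmetric chaining on incoming edges yields the second half of the lemma.

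The reset case $(\ell, r) \xrightarrow{g, R} (\ell', r[R:=0])$ is the easier of the two. The outgoing direction is immediate from the definition, since each corner $c$ of $r$ sends an edge to $c[R:=0]$. The incoming direction reduces to showing that $c \mapsto c[R:=0]$ is surjective onto the corners of $r[R:=0]$, which amounts to a combinatorial check on how the partition of $r$ transforms under reset: the reset clocks are absorbed into $\beta_0$, empty groups among non-reset clocks collapse, and each threshold that defines a corner of $r[R:=0]$ lifts to a threshold defining a corner of $r$ that maps onto it.

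For the immediate delay case, I would split on whether $\beta_0$ is empty in $r$. In the \emph{opening} case $\beta_0 \neq \emptyset$, the dimension of $r'$ increases by one, and an explicit computation from the corner definition shows that $c_0, \ldots, c_m$ of $r$ are also corners of $r'$, with one extra corner equal to $c_0 + \vec 1$ (the uniform unit shift of $c_0$). The edges $c_i \to c_i$ via delay $0$, together with $c_0 \to c_0 + \vec 1$ via delay $1$, witness arity on both sides. In the \emph{closing} case $\beta_0 = \emptyset$, the dimension of $r'$ decreases by one, and the corners of $r'$ turn out to be exactly $c_1, \ldots, c_m$; the edges $c_i \to c_i$ for $i \geq 1$ via delay $0$ and $c_0 \to c_m$ (which equals $c_0 + \vec 1$) via delay $1$ cover all corners. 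The main technical content is this bookkeeping between the corner indexings of $r$ and $r'$ across the two sub-cases, with no deeper obstacle expected beyond careful partition analysis in the style of~\cite{Puri00}.
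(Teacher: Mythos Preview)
The paper does not give its own proof of this lemma: it is listed in Appendix~\ref{app:fog} as a classical fact imported from~\cite{Puri00}, with no argument supplied. So there is no in-paper proof to compare your proposal against.

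On its own merits, your proof is correct and is essentially the standard argument. The reduction to atomic steps and the layer-by-layer chaining are the right structure, and your corner bookkeeping in both the opening ($\beta_0\neq\emptyset$) and closing ($\beta_0=\emptyset$) delay sub-cases is accurate: in the opening case the corners of $r'$ are $c_0,\dots,c_m$ together with $c_0+\vec 1$, and in the closing case they are $c_1,\dots,c_m$ with $c_m=c_0+\vec 1$, exactly as you state. One point you leave implicit but that is worth a sentence in a written proof is why refining a non-immediate delay transition $r\xrightarrow{\mathsf{delay}}r'$ into a chain of immediate steps does not alter start-to-end reachability in the orbit graph; this holds because $c'=c+d$ for some $d\ge 0$ iff one can thread time-successor corners through every intermediate region, but it deserves an explicit remark since the paper's orbit graph is defined directly on arbitrary time-successors.
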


\begin{lemma}\label{lm:orbit-graph-distribution}
    Let $\pi$ be a well-formed region path from a region state $(\ell,r)$ of dimension $m$ to a region state $(\ell',r')$ of dimension $m'$, of respective corners $\mathcal C=\{c_0,\dots c_m\}$ and $\mathcal C'=\{c_0',\dots c_{m'}'\}$. Let $\nu\in r$ and $\nu'\in r'$ be valuations of respective corner weights $\vecl=(\lambda_0,\dots,\lambda_m)$ and $\vecl'=(\lambda_0',\dots,\lambda_{m'}')$.
    For each $0\leq i\leq m$, let $\mathsf{post}_i\subseteq\mathcal C'$ be the corners $c_j'$ so that there is a path in $\gamma(\pi)$ from $(\mathsf{start},c_i)$ to $(\mathsf{end},c_j')$.
    Conversely, for each $0\leq j\leq m'$ let $\mathsf{pre}_j\subseteq\mathcal C$ be the corners $c_i$ so that there is a path in $\gamma(\pi)$ from $(\mathsf{start},c_i)$ to $(\mathsf{end},c_j')$.

    Then, there is a run from $(\ell,\nu)$ to $(\ell',\nu')$ that follows $\pi$ if and only if there is for each $c_i\in\mathcal C$ a probability distribution $\vec{p}_i:\mathsf{post}_i\to [0,1]$ so that %
    for each $0\leq j\leq m'$, $\lambda_j' = \sum_{c_i\in\mathsf{pre}_j} \lambda_i \vec{p}_i(c_j')$.
\end{lemma}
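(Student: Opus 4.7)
The plan is to prove the equivalence by induction on the length of $\pi$ (its number of atomic transitions), making precise the flow intuition spelled out in the preceding remark. The base cases are a single edge transition and a single delay transition; the inductive step splits $\pi=\pi_1\pi_2$ and composes the distributions as a product of probability kernels along the concatenation $\gamma(\pi_1)\gamma(\pi_2)=\gamma(\pi)$.

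\textbf{Atomic edge transition.} If $\pi$ is a single edge $(\ell,r)\xrightarrow{g,R}(\ell',r')$ with $r'=r[R:=0]$, then $\gamma(\pi)$ contains exactly the edges $c_i\to c_i[R:=0]$, so each $\mathsf{post}_i$ is the singleton $\{c_i[R:=0]\}$ and the only admissible distribution $\vec{p}_i$ is the Dirac on $c_i[R:=0]$. A run along this transition maps $\nu=\sum_i\lambda_i c_i$ to $\nu[R:=0]=\sum_i\lambda_i c_i[R:=0]$, so uniqueness of the corner-weight decomposition in $r'$ yields $\lambda'_j=\sum_{c_i\in\mathsf{pre}_j}\lambda_i$, matching the claimed identity in both directions.

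\textbf{Atomic delay transition.} If $\pi$ is a single delay $(\ell,r)\xrightarrow{\mathsf{delay}}(\ell,r')$, then $\gamma(\pi)$ has the edge $c_i\to c'_j$ exactly when $c'_j=c_i+d_{i,j}$ for some $d_{i,j}\geq 0$. The central geometric fact I would establish is: for any delay $d\geq 0$ with $\nu+d\in r'$, each shifted corner $c_i+d$ lies in the closure of $r'$ and admits a unique convex decomposition
\[c_i+d=\sum_{c'_j\in\mathsf{post}_i}\vec{p}_i(c'_j)\,c'_j,\]
whose support is exactly $\mathsf{post}_i$. This follows from the simplicial nature of regions and the characterisation of region membership via integer parts and orderings of fractional parts: the corners of $r'$ reachable from $c_i$ by a non-negative scalar shift are precisely those compatible with the integer/fractional structure of $c_i+d$. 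Substituting into $\nu+d=\sum_i\lambda_i(c_i+d)$ and using uniqueness of corner weights in $r'$ yields the identity $\lambda'_j=\sum_{c_i\in\mathsf{pre}_j}\lambda_i\vec{p}_i(c'_j)$. Conversely, given $\vec{p}_i$'s satisfying this identity, each active edge $c_i\to c'_j$ determines a candidate delay $c'_j-c_i$, and consistency of the identity across all pairs $(i,j)$ forces a common value $d$, from which $\nu'=\nu+d$ and the corresponding run follows.

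\textbf{Inductive step.} Write $\pi=\pi_1\pi_2$ with intermediate region state $(\ell_1,r_1)$ of corners $\{c^{(1)}_k\}$. In the forward direction, a run through $\pi$ visits some $\nu_1\in r_1$ of corner weights $(\lambda^{(1)}_k)_k$; the induction hypothesis applied to $\pi_1$ and $\pi_2$ yields kernels $\vec{q}_i$ and $\vec{r}_k$, and the composition
\[\vec{p}_i(c'_j):=\sum_k\vec{q}_i(c^{(1)}_k)\,\vec{r}_k(c'_j)\]
is supported on $\mathsf{post}_i$ because paths in $\gamma(\pi)$ decompose canonically through $\gamma(\pi_1)$ and $\gamma(\pi_2)$; the required identity follows by direct substitution of $\lambda^{(1)}_k=\sum_i\lambda_i\vec{q}_i(c^{(1)}_k)$ into the identity given for $\pi_2$. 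For the converse, given $\vec{p}_i$'s I would route each positive mass $\lambda_i\vec{p}_i(c'_j)$ along a chosen path of $\gamma(\pi)$ through some intermediate corner $c^{(1)}_k$, producing intermediate distributions $\vec{q}_i,\vec{r}_k$ and an intermediate weight vector; applying the induction hypothesis backwards to $\pi_1$ and $\pi_2$ then yields two subruns whose concatenation is the desired run.

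\textbf{Main obstacle.} All the geometric content lives in the atomic delay case: establishing the canonical convex decomposition of $c_i+d$ on the time-successor corners of $c_i$ in $r'$, and showing that the flow equations force a consistent scalar delay $d$. Both points reduce to a careful inspection of how the partition of clocks into equal-fractional-part classes evolves between $r$ and $r'$ along an elementary delay. The remaining composition arguments are routine bookkeeping on concatenation of orbit graphs and of probability kernels.
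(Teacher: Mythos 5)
Your induction skeleton (atomic edge step, atomic delay step, composition of probability kernels along $\gamma(\pi_1)\gamma(\pi_2)$) is the standard route, and your edge case is correct, but the atomic delay case --- which, as you note yourself, carries all the geometric content --- rests on a false claim. It is not true that $\nu+d\in r'$ implies that every shifted corner $c_i+d$ lies in the closure of $r'$. Take $r$ defined by $0<x<y<1$, with corners $c_0=(0,0)$, $c_1=(0,1)$, $c_2=(1,1)$, the valuation $\nu=(0.2,0.7)$ and $d=0.5$: then $\nu+d=(0.7,1.2)$ lies in the time-successor region $r'$ given by $0<y-1<x<1$, whose closure requires $y\geq 1$ and $x\leq 1$, yet $c_0+d=(0.5,0.5)$ and $c_2+d=(1.5,1.5)$ are outside $\overline{r'}$. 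So the decomposition $\nu+d=\sum_i\lambda_i(c_i+d)$ cannot be refined corner by corner as you propose, and the distributions $\vec{p}_i$ realized by a delay are not obtained by decomposing $c_i+d$ over the corners of $r'$ (in the example the masses of $c_0$ and $c_2$ must both be sent to the corner $(1,1)$, which is not $c_i+d$ for the actual $d$). The converse mechanism you sketch fails as well: the ``candidate delays'' $c'_j-c_i$ attached to edges of the orbit graph are integers, in general neither equal to one another nor to the real delay (here they are $0$ and $1$ while $d=0.5$), so no common value of $d$ is forced by them. The correct treatment of the delay step goes through the explicit relation between corner weights and the ordered fractional parts of the clocks (the computation used in the proof of Proposition~\ref{prop:slicezones}): a delay shifts the fractional parts cyclically, hence essentially permutes the weights, with at most one corner splitting its weight between the smallest and largest corners of $r'$; both directions of the atomic delay case then follow, but none of this is in your sketch.

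There is a second, smaller gap in the converse of your composition step: routing the masses $\lambda_i\vec{p}_i(c'_j)$ along arbitrarily chosen paths of $\gamma(\pi)$ may give total mass $0$ to some corner of the intermediate region $r_1$, in which case the ``intermediate valuation'' lies on a face of $r_1$, i.e., in a different region, and the induction hypothesis for $\pi_2$ does not apply. You need to argue that the path decomposition can be chosen so that every corner of $r_1$ receives strictly positive mass, using that all $\lambda_i$ and $\lambda'_j$ are strictly positive and the special structure of orbit graphs of delay and reset steps (every vertex has a predecessor and a successor); calling this ``routine bookkeeping'' hides a genuine step. For context, the paper itself gives no proof of this lemma --- it is listed among classical results and attributed to the orbit-graph literature --- so the geometric content of the delay step is exactly what a self-contained proof must supply, and it is the part your proposal gets wrong.
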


The above lemma entails the following characterization of the reachability relation between the valuations of the same region.

\begin{corollary}[\cite{Puri00}]\label{cor:fog-distribution}
    Let $\pi$ be a well-formed region cycle around a region state $(\ell,r)$ of dimension $m$, of corners $\mathcal C=\{c_0,\dots c_m\}$. Let $\nu\in r$ and $\nu'\in r'$ be valuations of respective corner weights $\vecl=(\lambda_0,\dots,\lambda_m)$ and $\vecl'=(\lambda_0',\dots,\lambda_{m'}')$.
    For each $0\leq i\leq m$, let $\mathsf{post}_i\subseteq\mathcal C$ be the corners $c_j$ so that there is an edge in $\Gamma(\pi)$ from $c_i$ to $c_j$.
    Conversely, for each $0\leq j\leq m$ let $\mathsf{pre}_j\subseteq\mathcal C$ be the corners $c_i$ so that there is an edge in $\Gamma(\pi)$ from $c_i$ to $c_j$.

    Then, there is a run from $\nu$ to $\nu'$ that follows $\pi$ if and only if there is for each $c_i\in\mathcal C$ a probability distribution $\vec{p}_i:\mathsf{post}_i\to [0,1]$ so that 
    for each $0\leq j\leq m$, $\lambda_j' = \sum_{c_i\in\mathsf{pre}_j} \lambda_i \vec{p}_i(c_j)$.
\end{corollary}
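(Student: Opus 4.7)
The plan is to derive this corollary as a direct specialization of Lemma~\ref{lm:orbit-graph-distribution} to the cyclic case. Since $\pi$ is a region cycle around $(\ell,r)$, its initial and final region states coincide, so in the notation of that lemma we have $r = r'$, $m = m'$ and $\mathcal C = \mathcal C'$. In particular, the two weight vectors $\vecl$ and $\vecl'$ live in the same simplex of dimension $m+1$, and any distribution $\vec{p}_i$ supported on a subset of $\mathcal C$ in the corollary is automatically a distribution supported on a subset of $\mathcal C'$ as required by the lemma.

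The only remaining task is to reconcile the two definitions of $\mathsf{post}_i$ and $\mathsf{pre}_j$. In Lemma~\ref{lm:orbit-graph-distribution} these sets are phrased via paths in the (unfolded) orbit graph $\gamma(\pi)$ from vertices tagged $\mathsf{start}$ to vertices tagged $\mathsf{end}$, whereas the corollary phrases them via edges in the folded orbit graph $\Gamma(\pi)$. But by the very definition of the folding (given in Section~\ref{sec:Prel}), there is an edge $c_i \to c_j$ in $\Gamma(\pi)$ if and only if there is a path from $(\mathsf{start}, c_i)$ to $(\mathsf{end}, c_j)$ in $\gamma(\pi)$. Hence the two formulations of $\mathsf{post}_i$ (resp.\ $\mathsf{pre}_j$) denote exactly the same subsets of $\mathcal C$.

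With these identifications in place, the conclusion is obtained by invoking Lemma~\ref{lm:orbit-graph-distribution} verbatim: the biconditional on the existence of a run from $\nu$ to $\nu'$ following $\pi$, and the existence of distributions $(\vec{p}_i)_{c_i\in\mathcal C}$ satisfying the flow equations $\lambda_j' = \sum_{c_i\in\mathsf{pre}_j}\lambda_i\,\vec{p}_i(c_j)$, transfers without modification. There is no real technical obstacle here; the corollary is essentially a dictionary entry translating the statement of Lemma~\ref{lm:orbit-graph-distribution} through the folding operation, justified by the fact that a cycle begins and ends at the same region state so no re-indexing of corners is needed.
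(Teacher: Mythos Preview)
Your proposal is correct and matches the paper's approach: the paper simply states that ``the above lemma entails the following characterization'' and gives no further proof, treating the corollary as an immediate specialization of Lemma~\ref{lm:orbit-graph-distribution} to the case $r=r'$ together with the definition of the folded orbit graph. Your explicit reconciliation of the $\mathsf{post}_i$/$\mathsf{pre}_j$ definitions via the folding is exactly the missing bookkeeping the paper leaves implicit.
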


\newpage
\section{Proof of Lemma~\ref{lm:robCycle}}
\robCycle*

This property also holds if we replace $\pi^k$ by the concatenation $\pi_1\pi_2\dots\pi_k$, where $\pi_1,\dots,\pi_k$ are finite robust paths so that $\Gamma(\pi)=\Gamma(\pi_1)=\dots=\Gamma(\pi_k)$.

\begin{proof}
    We start by showing that the two cases are mutually exclusive. 
    If $\Gamma(\pi^k)$ is a cluster graph, then for every $l\geq 1$, the cycle $\pi^{kl}$ has the same folded orbit graph as $\pi^k$ (by decomposition of $\gamma(\pi^{kl})$ into concatenated orbit graphs $\gamma(\pi^k)\dots\gamma(\pi^k)$), and therefore is a cluster graph as well.
    Similarly, if $\Gamma(\pi^{k'})$ contains a weakly connected component that is not strongly connected, then for every $l'\geq 1$,  $\Gamma(\pi^{k'l'})$ also has a weakly connected component that is not strongly connected.
    Indeed, if there is an SCC $I$ in $\Gamma(\pi^{k'})$, a corner $c\in I$ and a corner $c'\not\in I$, so that there is an edge from $c$ to $c'$ in $\Gamma(\pi^{k'})$ but no path from $c'$ to $c$, then by decomposition of $\gamma(\pi^{k'l'})$ into concatenated orbit graphs $\gamma(\pi^{k'})\dots\gamma(\pi^{k'})$, there is some corner $c''$ reached from $c'$ in $\gamma(\pi^{k'})$ by a path of length $l'-1$, so that there is an edge from $c$ to $c''$ in $\Gamma(\pi^{k'l'})$ but no path from $c''$ to $c$.
    Thus, if by contradiction both conditions were true for $k$ and $k'$, respectively, then $\Gamma(\pi^{kk'})$ would be a cluster graph with a weakly connected component that is not strongly connected, which is impossible.
    
    Now, assume that for every $1\leq k\leq (m+1)!$, every weakly connected component of $\Gamma(\pi^k)$ is also strongly connected, \textit{i.e.}~$\Gamma(\pi^k)$ is a disjoint union of SCCs.
    Let us show that $\Gamma(\pi^{m(m+1)!})$ must be a cluster graph. 
    First, we show that there is a self-loop on every corner in $\Gamma(\pi^{(m+1)!})$.
    Indeed, $\Gamma(\pi)$ is a disjoint union of SCCs, and by Lemma~\ref{lm:orbit-arity} every corner has an outgoing edge in $\Gamma(\pi)$, so that every corner must belong to some cycle of $\Gamma(\pi)$ of length $1\leq l\leq m+1$.
    Then, there is a path from $(\mathsf{start},c)$ to $(\mathsf{end},c)$ in $\gamma(\pi^{l})$, and thus in $\gamma(\pi^{(m+1)!})$ as $(m+1)!$ is a multiple of $l$.
    Thus, for all $c$ there is a self-loop from $c$ to $c$ in $\Gamma(\pi^{(m+1)!})$,
    and by the starting assumption $\Gamma(\pi^{(m+1)!})$ is also a disjoint union of SCCs.
    Let $c$ and $c'$ be corners in the same SCC of $\Gamma(\pi^{(m+1)!})$.
    Then, there exists in $\Gamma(\pi^{(m+1)!})$ a path from $c$ to $c'$ of length $l\leq m$, and a path of length $m-l$ from $c'$ to $c'$ (by repeating the self loop on $c'$), so that there is a path of length $m$ from $c$ to $c'$ in $\Gamma(\pi^{(m+1)!})$.
    It follows that there is an edge from $c$ to $c'$ in $\Gamma(\pi^{m(m+1)!})$ for all $c,c'$ in the same SCC of $\Gamma(\pi^{(m+1)!})$.
    Moreover, for any $c,c'$ in disjoint (and thus independent) SCCs of $\Gamma(\pi^{(m+1)!})$ ,there can be no edge from $c$ to $c'$ in $\Gamma(\pi^{m(m+1)!})$, so that overall $\Gamma(\pi^{m(m+1)!})$ is a cluster graph where every SCC of $\Gamma(\pi^{(m+1)!})$ is independent and complete.
    
    Finally, we note that replacing $\pi^k$ by the concatenation $\pi_1\pi_2\dots\pi_k$ of paths with the same folded orbit graph does not affect the previous proof, that entirely relies on the edges of $\Gamma(\pi)$ and $\Gamma(\pi^k)$. Indeed, we note by definition of folded orbit graphs that if $\Gamma(\pi)=\Gamma(\pi_1)=\dots=\Gamma(\pi_k)$ then $\Gamma(\pi_1\dots\pi_k)=\Gamma(\pi^k)$.
\end{proof}

\section{Proofs of Section~\ref{sec:robust-path}} %

Let $M,N$ be DBMs.
We say that $N\sqsubseteq M$ if $N\subseteq M$ and the constants in $M$ and $N$ are equal.
In other words, $N$ is obtained from $M$ by making some large inequalities strict, but has the same interior.
If in particular $M$ is a DBM that encodes a region $r$, it holds that $N\sqsubseteq M$ and $N\neq\emptyset$ implies $N=r$, as every constraint that encodes a region is either strict or an equality constraint that cannot be made strict without becoming empty.

The next two results are proven in \cite[Lemma~8.6.1]{Sankur13} and \cite[Lemma~8.6.2]{Sankur13} in a setting without punctual edges or regions. %
We prove them by induction on the length of $\pi$, by noting that they are stable by composition, so that only well-formed atomic paths need to be considered.
In particular, $\CPre_\pi^\delta$ with $\pi$ an atomic robust path with a non-punctual edge can be decomposed into the DBM operations of Definition~\ref{def:dbm-operations},
and each of these operations satisfies the two results individually by~\cite{Sankur13}.
If $\pi$ is an atomic robust path with a punctual edge, then by definition of $\CPre$ in this case we can also decompose $\CPre_\pi^\delta$ into the same elementary DBM operations, and conclude similarly.
\begin{lemma}\label{lm:cpredeltaslice}
    Let $\pi$ be a robust region path from $r$ to $r'$, and let $\delta>0$.
    Let $M$ and $M'$ be DBMs
    such that $M\neq\emptyset$ and $M'\neq\emptyset$.
    Moreover, assume that $M=\CPre_\pi^0(M')$.
    Then, for any DBM $N'\sqsubseteq M'$ and
    for any shrinking matrix $P'$, there exists a DBM $N\sqsubseteq M$ and a shrinking matrix $P$ such that $N-\delta P=\CPre_\pi^\delta((N'-\delta P'))$.
\end{lemma}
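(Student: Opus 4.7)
The plan is to proceed by induction on the length of the region path $\pi$, mirroring the strategy of \cite{Sankur13,SankurBM14} and extending it to accommodate punctual edges. The concatenation of robust paths being robust ensures that the inductive hypothesis applies at each stage.

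For the inductive step, decompose $\pi = \pi_1 \pi_2$ with $\pi_1$ ending at a region $r''$ and $\pi_2$ starting there. Since both $\CPre^0$ and $\CPre^\delta$ compose as $\CPre^{\ast}_{\pi} = \CPre^{\ast}_{\pi_1}\circ\CPre^{\ast}_{\pi_2}$ for $\ast \in \{0,\delta\}$, setting $M'' = \CPre^0_{\pi_2}(M')$ gives $M = \CPre^0_{\pi_1}(M'')$. Applying the induction hypothesis first to $\pi_2$ with input $(N', P')$ yields $N'' \sqsubseteq M''$ and a shrinking matrix $P''$ with $N''-\delta P'' = \CPre^\delta_{\pi_2}(N'-\delta P')$; a second application to $\pi_1$ with input $(N'', P'')$ produces the desired $N \sqsubseteq M$ and $P$.

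For the base case, $\pi$ is an atomic well-formed robust path $(\ell,r) \xrightarrow{\mathsf{delay}} (\ell,r^\ast) \xrightarrow{g,R}(\ell',r')$, and the explicit formulas at the end of Section~\ref{sec:robust-path} decompose $\CPre^\delta_\pi$ into a chain of elementary DBM operations from Definition~\ref{def:dbm-operations}: namely $\mathsf{Unreset}_R$, intersection with the guard $g$, intersection with the region $r$, and $\mathsf{PreTime}_{\geq \delta}$, augmented in the non-punctual case by a $\mathsf{Shrink}_\delta$ step. For each of these elementary operations, \cite{SankurBM14} establishes exactly the required preservation property: if a shrunk DBM $N_{\mathsf{in}} - \delta P_{\mathsf{in}}$ satisfies $N_{\mathsf{in}} \sqsubseteq M_{\mathsf{in}}$, then the operation applied to it can be written as $N_{\mathsf{out}} - \delta P_{\mathsf{out}}$ with $N_{\mathsf{out}} \sqsubseteq M_{\mathsf{out}}$, where $M_{\mathsf{out}}$ is the non-shrunk analogue applied to $M_{\mathsf{in}}$. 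Chaining these observations through the decomposition yields the base case.

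The main obstacle lies in the non-punctual case, where the $\mathsf{Shrink}_\delta$ operation could in principle tighten the constants of the resulting DBM, violating $N \sqsubseteq M$, or even empty the result altogether. The robustness hypothesis on $\pi$ is exactly what prevents this: the requirement that $r^\ast$ be non-punctual guarantees sufficient thickness in every clock direction for the $-\delta$ slack to be absorbed into the shrinking matrix $P$ rather than altering the underlying constants of $N$. In the punctual case, by definition $\CPre^\delta_\pi = \CPre^0_\pi$ and no $\mathsf{Shrink}_\delta$ operation appears; the same chain of elementary operations can therefore be applied without the delicate shrinking step, and the containment $N \sqsubseteq M$ follows directly from $N' \sqsubseteq M'$ together with monotonicity of the remaining operations with respect to $\sqsubseteq$.
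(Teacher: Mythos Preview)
Your proposal is correct and follows essentially the same approach as the paper: induction on the length of $\pi$, reduction to atomic robust paths via compositionality of $\CPre$, and then decomposition of the atomic $\CPre_\pi^\delta$ into the elementary DBM operations of Definition~\ref{def:dbm-operations}, each of which individually preserves the shrunk-DBM form by the results of \cite{Sankur13,SankurBM14}. Your discussion of the role of robustness in the non-punctual case is slightly imprecise (the $\mathsf{Shrink}_\delta$ step always pushes its $\delta$-contribution into the shrinking matrix $P$ rather than into the constants of $N$, so $N\sqsubseteq M$ is never threatened; robustness is rather what keeps $\CPre_\pi^\delta$ well-defined and, for later use, non-empty), but this does not affect the validity of the argument.
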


\begin{lemma}\label{lm:ballslice}
    Let $\pi$ be a robust region path from $r$ to $r'$.
    Let $N$ be a DBM %
    so that
    there exists $\nu\in r'$ and $\varepsilon>0$ with
    $(\Ball_{d_\infty}(\nu,\varepsilon)\cap r')\subseteq N$. 
    Then, $\CPre_\pi^{\delta}(N)$ is non-empty
    for a small enough $\delta>0$, and in fact contains $\Ball_{d_\infty}(\nu',\varepsilon')\cap r$ for some $\nu'\in r$ and $\varepsilon'>0$.
\end{lemma}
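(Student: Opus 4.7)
The approach is to proceed by induction on the number of atomic paths composing $\pi$, using the compositionality $\CPre_{\pi_1\pi_2}^\delta = \CPre_{\pi_1}^\delta \circ \CPre_{\pi_2}^\delta$. The property that a set contains a $d_\infty$-ball intersected with the starting region is exactly the invariant to propagate backwards through each atomic step, which reduces the proof to the case where $\pi = (\ell,r)\xrightarrow{\mathsf{delay}}(\ell,r'')\xrightarrow{g,R}(\ell',r')$ is a single atomic robust path.

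When $g$ is punctual, $\CPre_\pi^\delta = \CPre_\pi^0$ does not depend on $\delta$. By time-abstract bisimulation (Lemma~\ref{lm:bissimulation}) there exist $\nu' \in r$ and $d \geq 0$ with $\nu' + d \in r''$ and $(\nu' + d)[R:=0] = \nu$. For any $\mu \in r$ in a sufficiently small $d_\infty$-neighborhood of $\nu'$, the punctual guard $g$ singles out a unique delay $d'(\mu)$ depending continuously on $\mu$ and close to $d$; continuity of the map $\mu \mapsto (\mu + d'(\mu))[R:=0]$ then keeps $(\mu + d'(\mu))[R:=0]$ within $\Ball_{d_\infty}(\nu,\varepsilon) \cap r' \subseteq N$, so $\mu$ lies in $\CPre_\pi^0(N)$, providing the required ball in $r$ around $\nu'$.

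When $g$ is non-punctual, robustness of $\pi$ forces $r''$ to be non-punctual as well. I would use the decomposition $\CPre_\pi^\delta(N) = r \cap \mathsf{PreTime}_{\geq \delta}(\mathsf{Shrink}_\delta(g \cap \mathsf{Unreset}_R(N)))$ and trace a ball through each operation. By Lemma~\ref{lm:bissimulation} there is some $\tilde\nu_0 \in r''$ with $\tilde\nu_0[R:=0] = \nu$; non-punctuality of $r''$ lets me shift $\tilde\nu_0$ along the time direction to some $\tilde\nu \in r''$ sitting in the time-interior of $r''$, i.e., with $\tilde\nu \pm t_0 \in r''$ for some $t_0 > 0$, while keeping $\tilde\nu[R:=0]$ close enough to $\nu$ for the ball condition to survive $\mathsf{Unreset}_R$ (using that reset is $1$-Lipschitz for $d_\infty$). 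Since $r'' \models g$, the resulting ball around $\tilde\nu$ in $r''$ survives intersection with $g$; choosing $\delta$ small enough compared to both $t_0$ and the ball's radius makes $\mathsf{Shrink}_\delta$ preserve a ball of positive radius, and $\mathsf{PreTime}_{\geq \delta}$ shifts it backwards in time. As $r$ is a time-predecessor of $r''$, intersecting with $r$ leaves a ball $\Ball_{d_\infty}(\nu',\varepsilon') \cap r$ of positive radius around some $\nu' \in r$.

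The main obstacle is the non-punctual atomic case, specifically ensuring that $\mathsf{Shrink}_\delta$ does not collapse the pulled-back ball against the boundary of $r''$: this is handled by positioning $\tilde\nu$ in the time-interior of $r''$, which is possible precisely because $r''$ is non-punctual (a consequence of robustness of $\pi$). The induction then closes by taking $\delta$ as the minimum of the thresholds produced at each atomic step, in the same spirit as Lemma~\ref{lm:cpredeltaslice}.
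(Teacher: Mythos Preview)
Your proposal is correct and follows essentially the same inductive scheme as the paper: reduce to atomic robust paths by composition, then handle the atomic step by decomposing $\CPre_\pi^\delta$ into the elementary DBM operations of Definition~\ref{def:dbm-operations} and tracing the ``contains $\Ball\cap r$'' property through each. The only difference is in the punctual case: the paper treats it uniformly via the decomposition $\CPre_\pi^\delta=\CPre_\pi^0=r\cap\mathsf{PreTime}_{\geq 0}(g\cap\mathsf{Unreset}_R(\cdot))$ and again appeals to \cite{Sankur13} operation by operation, whereas you give a direct continuity argument using the unique delay into the punctual region $r''$. Both routes are valid; the paper's is slightly more uniform, while yours makes the punctual step more self-contained.
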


\robpath*

\begin{proof}
    If $\pi$ is not robust, then it contains some sub-path $\pi'=(\ell,r)\xrightarrow{\mathsf{delay}}(\ell,r')\xrightarrow{g,R}(\ell',r'')$ where $r'$ is punctual and $g$ is non-punctual.
    Then, $\CPre_{\pi'}^\delta(s)=\{\nu \in r \mid \exists d\geq \delta, \forall d'\in[d-\delta,d+\delta], \nu+d'\in r' \land (\nu+d')[R:=0]\in s\}=\emptyset$ as by definition of punctuality $\nu\in r'$ implies $\nu+d'\not\in r'$ for $d'\neq 0$.
    Then, $\CPre_{\pi}^\delta(s)=\emptyset$ by composition.
    
    If $\pi$ is a robust region path from $(\ell,r)$ to $(\ell',r')$, we apply Lemma~\ref{lm:ballslice} on $N=r'$ to justify $\CPre_\pi^\delta(r')\neq\emptyset$ for $\delta$ small enough.
    In particular, the existence of $\nu,\varepsilon$ such that $\Ball_{d_\infty}(\nu,\varepsilon)\cap r'\subseteq N$ is trivial
    as $(A\cap r')\subseteq r'$ for any set $A$.
\end{proof}

\newpage
\section{Properties of slices}
\label{app:slices}

\begin{lemma}\label{lm:slicepartition}
Let $C_0\uplus \dots\uplus C_k$ be a corner partition of a region $r$ into $k+1$ colors.
The set of slices $\slice^r_{C_0\uplus \dots\uplus C_k}(\vecw)$, where $\vecw$ ranges over every color weight vector,
partitions $r$ into  non-empty slices.
\end{lemma}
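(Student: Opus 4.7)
The plan is to treat the two claims separately: first that every valuation in $r$ lies in exactly one slice (so the slices form a partition), then that every color weight vector $\vecw$ actually corresponds to a non-empty slice. Both reduce to manipulating the unique convex decomposition of a valuation over the corners of $r$ recalled in Section~\ref{sec:Prel}.

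For the partition claim, I would fix $\nu\in r$ of dimension $m$ and use the uniqueness of its corner weights $\vecl=(\lambda_0,\dots,\lambda_m)\in(0,1]^{m+1}$, $\|\vecl\|_1=1$. Define $w_j(\nu)=\sum_{c_i\in C_j}\lambda_i$ for $0\leq j\leq k$. Since each colour class $C_j$ is non-empty (a partition has non-empty parts) and each $\lambda_i>0$, we have $w_j(\nu)\in(0,1]$; and $\sum_{j=0}^k w_j(\nu)=\sum_{i=0}^m\lambda_i=1$. Hence $\vecw(\nu)$ is a valid color weight vector and $\nu\in\slice^r_{C_0\uplus\dots\uplus C_k}(\vecw(\nu))$ by definition. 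Conversely, the definition of membership in a slice forces $\vecw$ to equal $\vecw(\nu)$, so $\nu$ lies in a unique slice. This simultaneously yields coverage and pairwise disjointness.

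For non-emptiness, given a color weight vector $\vecw=(w_0,\dots,w_k)$, I would exhibit an explicit valuation in $\slice^r_{C_0\uplus\dots\uplus C_k}(\vecw)$ by spreading each $w_j$ uniformly across $C_j$: set $\lambda_i = w_j/|C_j|$ whenever $c_i\in C_j$. Each $\lambda_i$ is in $(0,1]$ and
\[
\sum_{i=0}^m \lambda_i \;=\; \sum_{j=0}^k |C_j|\cdot\frac{w_j}{|C_j|} \;=\; \sum_{j=0}^k w_j \;=\; 1,
\]
so $\vecl$ is a valid weight vector, whence $\nu=\sum_{i=0}^m\lambda_i c_i$ belongs to $r$ (using the converse direction of the corner-combination characterisation). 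By construction $\sum_{c_i\in C_j}\lambda_i = w_j$, so $\nu\in\slice^r_{C_0\uplus\dots\uplus C_k}(\vecw)$.

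There is no real obstacle here; the only mild care needed is to justify that $w_j(\nu)>0$, which relies on the fact that $\lambda_i>0$ for every corner of $r$ (the standard choice in the paper's definition of corner weights, valuations in a region having strictly positive weights on all corners) together with $C_j\neq\emptyset$. Once those are observed the argument reduces to bookkeeping on the convex decomposition.
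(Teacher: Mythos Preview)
Your proposal is correct and follows essentially the same approach as the paper's own proof: define $w_j$ as the sum of the $\lambda_i$ over $C_j$ to get uniqueness and coverage, and for non-emptiness construct the valuation with $\lambda_i=w_j/|C_j|$. You even spell out a couple of details (why $w_j>0$, why the constructed $\vecl$ sums to $1$) that the paper leaves implicit.
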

\begin{proof}
For a given corner partition $C_0\uplus \dots\uplus C_k$, for any valuation $\nu$ in $r$
of corner weights $\vecl=(\lambda_0,\dots,\lambda_m)$
there is a unique weight vector $\vecw=(w_1,\dots, w_{k})$ such that $\nu\in\slice^r_{C_0\uplus \dots\uplus C_k}(\vecw)$: the one defined by $w_j=\sum\limits_{\substack{0\leq i\leq m\\\text{s.t. } c_i\in C_j}} \lambda_i$ for all $0\leq j\leq k$.
Conversely, for every color weight vector $\vecw=(w_1,\dots, w_{k})$, $\slice^r_{C_0\uplus \dots\uplus C_k}(\vecw)$ is not empty: it contains at least the valuation $\nu$ of corner weights $\vecl=(\lambda_0,\dots,\lambda_m)$, so that for each $0\leq i\leq m$, if $c_i\in C_j$ in the corner partition then $\lambda_i=\frac{w_j}{|C_j|}$.
\end{proof}

We show that slices defined from a cluster corner partition are zones.
\begin{restatable}[]{proposition}{slicezones}
    \label{prop:slicezones}
    Let $\pi$ be a robustly iterable region cycle around a region $r$ of dimension $m$.
    Then, for any rational color weight vector $\vecw$, the slice $\slice_{\pi}(\vecw)$ is a DBM.
    More precisely, $\slice_\pi(\vecw)$ is equal to the intersection of $r$ and a set of atomic constraints of the shape $x=c$ or $x-y=c$, with $x,y\in\Clocks$ and rational constants $c$.    
\end{restatable}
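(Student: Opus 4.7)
Let $C_0 \uplus \cdots \uplus C_k$ denote the cluster corner partition of $r$ induced by $\Gamma(\pi)$. For a valuation $\nu \in r$ with corner weights $\vecl = (\lambda_0, \ldots, \lambda_m)$, the slice $\slice_\pi(\vecw)$ is carved out of $r$ by the $k+1$ linear equations $\sigma_j(\vecl) := \sum_{c_i \in C_j} \lambda_i = w_j$, for $0 \leq j \leq k$. The plan is to translate each of these equations into an atomic equality on clocks of the desired form.

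First, we express the corner weights in terms of fractional parts. Writing $f_p$ for the common fractional value of any clock in $\beta_p$, and setting the conventions $f_0 = 0$ and $f_{m+1} = 1$, a direct computation from the definition of corners yields $\lambda_i = f_{m-i+1} - f_{m-i}$ for every $0 \leq i \leq m$. Substituting into $\sigma_j$ produces the telescoping identity
\[
  \sigma_j(\vecl) \;=\; \sum_{i \in I_j}\bigl(f_{m-i+1} - f_{m-i}\bigr) \;=\; \sum_{p \in P_j \setminus Q_j} f_p \;-\; \sum_{p \in Q_j \setminus P_j} f_p,
\]
where $I_j = \{i : c_i \in C_j\}$, $Q_j = \{m-i : i \in I_j\}$, and $P_j = Q_j + 1$. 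The non-cancelled indices correspond exactly to the ``boundary points'' of $Q_j$ inside $\{0, \ldots, m+1\}$; since $f_0$ and $f_{m+1}$ are numerical constants, only the boundaries lying in $\{1, \ldots, m\}$ contribute non-constant terms.

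The key structural claim is that every cluster $C_j$ is a \emph{cyclic interval} in $\mathbb{Z}/(m+1)$: either $I_j = \{a, a+1, \ldots, b\}$ for some $0 \leq a \leq b \leq m$, or $I_j = \{0, \ldots, b\} \cup \{a, \ldots, m\}$ with $b+1 < a$. Under this condition, $Q_j$ is a cyclic interval as well, so after telescoping at most one index in $\{1, \ldots, m\}$ survives with coefficient $+1$ and at most one with coefficient $-1$, together with a rational constant coming from the $f_0 = 0$ and $f_{m+1} = 1$ boundaries. Consequently each equation $\sigma_j(\vecl) = w_j$ reduces to one of $f_p - f_q = c$, $f_p = c$, or a tautology, and these translate back to atomic equalities $x - y = c'$ or $x = c'$ on clocks $x \in \beta_p$, $y \in \beta_q$ of the appropriate fractional levels, producing the DBM description of $\slice_\pi(\vecw)$.

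The main obstacle is therefore to establish the cyclic interval property. The plan is to analyse the orbit graph of each atomic step of $\pi$: for a delay step, the relation $c \to c + d$ translates every corner uniformly along the all-ones direction, which respects the cyclic ordering of corners by Manhattan norm modulo $m+1$; for a reset step, the deterministic map $c \mapsto c[R := 0]$ can be shown to likewise respect this cyclic order once corners of the target region are relabelled according to their new fractional-part ordering. Composing these order-preserving relations along the whole cycle $\pi$ yields a relation on corners of $r$ whose complete strongly connected components are necessarily cyclic intervals. Since these components are by definition the classes of the cluster corner partition, the structural claim follows and the proof is complete.
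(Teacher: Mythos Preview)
Your proposal is correct and follows essentially the same approach as the paper: the paper also derives $\lambda_i = f_{m-i+1} - f_{m-i}$ and telescopes, after first establishing (as its Lemma~\ref{lm:partition}) that the cluster corner partition is given by ``splitting positions'' $p_0<\dots<p_k$, which is exactly your cyclic-interval claim. The paper proves that lemma by the same step-by-step analysis of delay and reset transitions you outline, phrased as an induction on path length that tracks the weakly connected components of the (unfolded) orbit graph rather than speaking of order-preservation directly.
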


In order to prove Proposition~\ref{prop:slicezones}, we need to describe cluster corner partitions in more details:
\begin{lemma}
    \label{lm:partition}
    Let $\pi$ be a robustly iterable region cycle around a region $r$ of dimension $m$.
    Let $\mathcal{C}=\{c_0,\dots,c_m\}$ be the corners of $r$, so that $\|c_0\|_1<\|c_1\|_1<\dots<\|c_m\|_1$.
    Let $C_0\uplus \dots\uplus C_k$ be the cluster corner partition of $r$ induced by $\Gamma(\pi)$.
    There exists a set of splitting positions $P=\{p_0,\dots,p_k\}\subseteq\{0,\dots,m\}$ with $p_0<\dots<p_k$, so that
    $C_0\uplus \dots\uplus C_k$ can be ordered such that
    \begin{itemize}
        \item $C_0=\{c_i \mid i\in [0,p_0]\cup(p_k,m]\}$, and
        \item for all $1\leq j \leq k$, $C_j=\{c_i \mid i\in (p_{j-1},p_{j}]\}$
    \end{itemize}
\end{lemma}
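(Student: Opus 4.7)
The plan is to establish this structural result in three steps, centered on a non-crossing property of $\Gamma(\pi)$ with respect to the cyclic ordering of corners by Manhattan norm.

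The first step is to analyze the edges in the orbit graph of each elementary transition along $\pi$. For a delay transition, edges are of the form $c\rightarrow c+d\vec{1}$ with $d\in\Nat$, and correspond to order-preserving partial maps from source corners to target corners. For a reset transition, edges are of the form $c\rightarrow c[R:=0]$, a function that respects the cyclic structure of corners modulo possible collapses. These observations yield a non-crossing property for each elementary orbit graph: two edges $c\rightarrow c'$ and $c''\rightarrow c'''$, with $c,c''$ in the same source region and $c',c'''$ in the same target region, never cross in the cyclic Manhattan order.

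The second step is to show by induction on the length of $\pi$ that this non-crossing property is preserved under concatenation, so that the edges of $\Gamma(\pi)$ satisfy it as well. The third step combines this with the cluster hypothesis: if $\Gamma(\pi)$ contained two distinct multi-corner SCCs with interleaving indices, say $i_1<j_1<i_2<j_2$ with $c_{i_1},c_{i_2}$ in one SCC and $c_{j_1},c_{j_2}$ in another, then the edges $c_{i_1}\rightarrow c_{i_2}$ and $c_{j_1}\rightarrow c_{j_2}$ would cross in the cyclic order, a contradiction. The SCCs must therefore form cyclic arcs in the Manhattan ordering. Labeling the arc containing $c_0$ as $C_0$ (which wraps around to include corners near $c_m$ when applicable) and taking the splitting positions $p_0<\dots<p_k$ at the boundaries between arcs yields the claimed form.

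The main obstacle is establishing the non-crossing property through the composition of orbit graphs, particularly when reset transitions collapse corners with different Manhattan norms together. A careful case analysis on elementary transitions and their concatenation will be required to rule out the emergence of crossing pairs of edges, possibly by reducing the argument to an inductive invariant that can be tracked step by step across the transitions in $\pi$.
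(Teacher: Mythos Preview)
Your plan differs from the paper's, and the difficulty lies in making the edge-level non-crossing invariant precise enough to carry through the induction. Already for a single delay transition the orbit graph is not non-crossing in the \emph{linear} Manhattan order: with two clocks, the immediate delay from $r=\{0<x<y<1\}$ (corners $c_0=(0,0)$, $c_1=(0,1)$, $c_2=(1,1)$) to $r'=\{0<x<1=y\}$ (corners $c_0'=(0,1)$, $c_1'=(1,1)$) has edges $c_0\to c_1'$, $c_1\to c_0'$, $c_2\to c_1'$, and the first two cross. You therefore need a genuinely cyclic formulation (planarity on a cylinder, say), but then stability under concatenation---your acknowledged main obstacle---is not a routine case analysis: composition through a shared middle vertex can create a local $K_{2,2}$ that is prevented from being drawn planarly by an unrelated straight edge from a different component. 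You have not indicated how to rule this out, and it is not clear that the invariant you have in mind survives composition.

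The paper sidesteps this by tracking a weaker invariant that is manifestly preserved. Instead of non-crossing of individual edges, it shows by induction on the length of $\pi$ that the weakly connected components of the unfolded orbit graph $\gamma(\pi)$ partition the source corners \emph{and} the target corners into cyclic intervals, i.e.\ sets describable by splitting positions exactly as in the statement of the lemma. This is checked directly for a single delay and for a single reset from the shape of their orbit graphs, and the inductive step is then almost immediate: concatenation can only merge components that share a corner at the junction, and a union of overlapping cyclic intervals is again a cyclic interval. Folding the cycle is one more merge of the same kind. The cluster hypothesis is not used in the induction at all; it only enters at the end, where it identifies the weakly connected components of $\gamma(\pi)$ with the SCCs of $\Gamma(\pi)$. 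Replacing your edge-level non-crossing invariant with this component-level interval invariant would fix the outline and would make your step~3 unnecessary.
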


\begin{proof}
    We show this splitting position-based characterisation of the cluster corner partition by generalizing it to region paths that may not be cycles, then by showing this generalisation by induction on the length of the path, and finally by applying it on a cycle.
    
    Let $\pi$ be a robust region path from a region $r$ of dimension $m$ to a region $r'$ of dimension $m'$.
    Let $\mathcal{C}=\{c_0,\dots,c_{m}\}$ (resp. $\mathcal{C}'=\{c_0',\dots,c_{m'}'\}$) be the corners of $r$ (resp. $r'$),
    so that $\|c_0\|_1<\|c_1\|_1<\dots<\|c_m\|_1$ and $\|c_0'\|_1<\|c_1'\|_1<\dots<\|c_{m'}'\|_1$.
    The weakly connected components of $\gamma(\pi)$ naturally induce a partition of its vertices into $k+1$ colors,
    so that every vertex $(\mathsf{step}_i,c)$ has color $0\leq j\leq k$ if it belongs 
    to the $j$-th weakly connected component.
    This partition of $\gamma(\pi)$ naturally induces two corner partitions $C_0\uplus \dots\uplus C_k$ and $C_0'\uplus \dots\uplus C_k'$ of $r$ and $r'$, respectively, based on the colors of $\mathsf{start}$ and $\mathsf{end}$ vertices.

\newpage
    We show by induction on the length of $\pi$ that
    there exists a set of splitting positions $P=\{p_0,\dots,p_k\}\subseteq\{0,\dots,m\}$ with $p_0<\dots<p_k$, so that
    $C_0\uplus \dots\uplus C_k$ can be ordered such that
    \begin{itemize}
        \item $C_0=\{c_i \mid i\in [0,p_0]\cup(p_k,m]\}$, and
        \item for all $1\leq j \leq k$, $C_j=\{c_i \mid i\in (p_{j-1},p_{j}]\}$
    \end{itemize}
    and there exists a set of splitting positions $P'=\{p_0',\dots,p_k'\}\subseteq\{0,\dots,m'\}$ with $p_0'<\dots<p_k'$, so that
    $C_0'\uplus \dots\uplus C_k'$ can be ordered such that
    \begin{itemize}
        \item $C_0'=\{c_i' \mid i\in [0,p_0']\cup(p_k',m']\}$, and
        \item for all $1\leq j \leq k$, $C_j'=\{c_i' \mid i\in (p_{j-1}',p_{j}']\}$
    \end{itemize}
    
    We start by considering the case where $\pi$ is a single delay transition from $r$ to some time-successor $r'$.
    We note the following properties of $\gamma(\pi)$:
    \begin{itemize}
        \item every corner in $r$ has a unique time-successor in $r'$, except for one corner that can have $c_0'$ and $c_{m'}'$ as successors if $r'$ is non-punctual, and
        \item every corner in $r'$ has a unique time-predecessor in $r$, except for one that can have $c_0$ and $c_{m}$ as predecessors if $r$ is non-punctual.
    \end{itemize}
    Then, the weakly connected components of $\gamma(\pi)$ induce partitions that are either singletons, $\{c_0,c_m\}$ or $\{c_0',c_{m'}'\}$. These partitions can thus be described with consecutive splitting positions for the singletons, $p_0=0$ and $p_k=m-1$ for $\{c_0,c_m\}$ if $r$ is non-punctual, $p_0=0$ and $p_k=m$ if $r$ is punctual, and similarly for $r'$.
    
    We now consider the case where $\pi$ is a single edge transition from $r$ to $r'$ of reset set $R$.
    We note the following properties of $\gamma(\pi)$:
    \begin{itemize}
        \item every corner $c$ in $r$ has a unique successor $c[R:=0]$ in $r'$,
        \item if a corner $c'$ in $r'$ has two predecessors $c_i$ and $c_j$ in $r$ with $\|c_i\|_1<\|c_j\|_1$, then every corner $c$ so that $\|c_i\|_1<\|c\|_1<\|c_j\|_1$ is also a predecessor of $c'$.
    \end{itemize}
    This last property holds by definition of corners as $c_i[R:=0]=c_j[R:=0]$ implies that $R$ contains every clock that differentiates them in the clock order of the region, so that every corner $c$ in between is also reset to $c'$ as it only differs from $c_i$ and $c_j$ on the same clocks of $R$.
    Then, the weakly connected components of $\gamma(\pi)$ induce partitions that are all singletons in $r'$, and either singletons or intervals of consecutive corners in $r$, These partitions can thus be described with splitting positions, with $p_k=m$ and $p_k'=m'$.
    
    Assume now that the inductive property holds on two paths that can be concatenated. 
    Whenever a region path $\pi$ is concatenated to another path $\pi'$, the operation can merge weakly connected components of their orbit graphs, thus reducing the number of colors.
    This happens at the junction of $\gamma(\pi)$ and $\gamma(\pi')$, where colors of each side that intersect merge to become a single color.
    These new colors are obtained as unions of intervals of corners of index $(p_{j-1},p_{j}]$ and $(p_{j'-1}',p_{j'}']$ that intersect, which can always be expressed as new intervals of corners of the shape $(p_{j''-1}'',p_{j''}'']$.
    
    The case of the corners of index in $[0,p_0]\cup(p_k,m]$ leads to the same conclusion, as this interval shape is also stable by union of intersecting intervals.
    This concludes the inductive proof.
    
    Finally, assume $\pi$ is a robustly iterable cycle.
    Then, In order to obtain the cluster corner partition induced by $\Gamma(\pi)$ from the partitions $C_0\uplus \dots\uplus C_k$ and $C_0'\uplus \dots\uplus C_k'$ induced by $\gamma(\pi)$ as described above, one needs to do one last merge of colors, with colors that intersect if we merge $\mathsf{start}$ and $\mathsf{end}$ vertices of $\gamma(\pi)$ merging to become colors of $\Gamma(\pi)$.
    This operation obeys the same principle as the concatenation of paths described above, and maintains the shape of the color partition as characterised by a set of splitting positions.
\end{proof}

\begin{proof}[Proof of Proposition~\ref{prop:slicezones}]
Let $\pi$ be a robustly iterable region cycle around a region $r$.
Let $r$ be of dimension $m$, with $\mathcal{C}=\{c_0,\dots,c_m\}$ the set of its corners, so that $\|c_0\|_1<\|c_1\|_1<\dots<\|c_m\|_1$.
Let $C_0\uplus \dots\uplus C_k$ be the cluster corner partition induced by $\Gamma(\pi)$.
Assume w.l.o.g. that the smallest corner $c_0$ is in $C_0$.
For any given weight vector $\vecw$, let us detail a system of equations that encodes $\slice_\pi(\vecw)$.

Let $r$ be a region.
If $r$ is a singleton that contains a single valuation $\nu$ ($r$ is a region of dimension $0$), then $r$ is a closed set, $r$ has only one corner,
the only corner partition of $r$ has one set ($k=1$), and the only slice is equal to $r$.
In the following, we assume that $r$ has dimension at least $1$. %

The region $r$ of dimension $m>0$ is characterized by $(\iota, \beta)$ where $\beta$ is a partition of $\mathcal X$ into $\beta_0\uplus \beta_1\uplus \cdots \uplus \beta_m$ with $\forall 1\leq j\leq m,\beta_j\neq\emptyset$.
Recall that the corners of $r$ are the $m+1$ valuations $c_0\dots c_m$ so that for each $0\leq i\leq m$, $c_i$ is the corner such that $\forall 0\leq j\leq m-i, \forall x\in \beta_j, c_i(x)=\iota(x)$, and $\forall m-i< j\leq m, \forall x\in \beta_j, c_i(x)=\iota(x)+1$.
In particular, $c_0$ is the smallest corner of $r$ and $c_m$ the largest, with $\|c_0\|_1<\|c_1\|_1<\dots<\|c_m\|_1$.

For each $1\leq j\leq m$, fix $x_j$ a clock in $\beta_j$, picked arbitrarily.

Let $\nu$ be a valuation in $r$, with $\lambda_0,\dots,\lambda_m\in(0,1]$ its corner weights, so that $\sum_{i=0}^{m} \lambda_i=1$ and such that $\nu=\sum_{i=0}^m \lambda_i c_i$.
Then, note that for each $0\leq j\leq m$, it holds that
$c_i(x_j)=\iota(x_j)$ for all $0\leq i\leq m$ with $i\leq m-j$ and $c_i(x_j)=\iota(x_j)+1$ for all $0\leq i\leq m$ with $i> m-j$.
Then,  for each $1\leq j\leq m$, we have:
\begin{align*}
\nu(x_j)-\iota(x_j)&=\left(\sum_{i=0}^m \lambda_i c_i(x_j)\right)-\iota(x_j)\\
&=\iota(x_j)\left(\sum_{i=0}^{m-j} \lambda_i\right) + (\iota(x_j)+1)\left(\sum_{i=m-j+1}^{m} \lambda_i\right) - \iota(x_j)\left(\sum_{i=0}^{m} \lambda_i\right)\\
&=(\iota(x_j)+1)\left(\sum_{i=m-j+1}^{m} \lambda_i\right) - \iota(x_j)\left(\sum_{i=m-j+1}^{m} \lambda_i\right)\\
&=\sum_{i=m-j+1}^{m} \lambda_i
\end{align*}

For each $1\leq j\leq m$, let $f_j=\nu(x_j)-\iota(x_j)$ be the fractional part of clock $x_j$.
Moreover, we extend the notation to $j=0$ and $j=m+1$ by setting $f_0=0$ and $f_{m+1}=1$.
Then, it follows that
\begin{itemize}
\item first, $\lambda_{0}=1-\left(\sum_{i=1}^{m} \lambda_i\right)=f_{m+1}-f_m$,
\item then for each $0< i<m$, 
$\lambda_{i}=\left(\sum_{l=i}^{m} \lambda_l\right)-\left(\sum_{l=i+1}^{m} \lambda_l\right)=f_{m-i+1} - f_{m-i}$,
\item and finally, $\lambda_{m}=\left(\sum_{i=m}^{m} \lambda_i\right)-0=f_1-f_0$.
\end{itemize}
So that for all $0\leq i\leq m$, it holds that $\lambda_{i}=f_{m-i+1} - f_{m-i}$.
Thus, for all $0\leq a\leq b\leq m$, $\sum_{i=a}^b \lambda_i = f_{m-a+1} - f_{m-b}$.

Now, let $C_0\uplus \dots\uplus C_k$ be the cluster corner partition of $r$ into $k+1$ colors induced by $\Gamma(\pi)$, characterized by the splitting positions $p_0,\dots,p_k$ by Lemma~\ref{lm:partition}.
If $k=0$, there is only one color ($C_0=\mathcal C$), and the only slice of $r$ is equal to $r$.
We assume $k>0$ in the following.

Let $\vecw=(w_0,\dots, w_k)\in(0,1]^{k+1}$ be a color weight vector, so that $\|\vecw\|_1=1$, that defines a slice $\slice_\pi(\vecw)$.
Let $\nu$ be a valuation in $r$ of corner weights $\vecl=(\lambda_0,\dots,\lambda_m)$.
Then, $\nu$ belongs to the slice $\slice_\pi(\vecw)$, if 
\begin{itemize}
    \item $w_0=\sum_{i\in [0,p_0]\cup(p_k,m]} \lambda_i = f_{m+1} - f_{m-p_0} + f_{m-p_k} - f_0$, and
    \item for all $1\leq j\leq k$, $w_j=\sum_{i\in (p_{j-1},p_{j}]} \lambda_i = f_{m-p_{j-1}} - f_{m-p_j}$.
\end{itemize}

Then, for any $0\leq j\leq k$, let $\mathcal E_j$ be a clock constraint over $\Clocks$ encoding these conditions, so that 
\begin{itemize}
    \item if $p_k<m$: $\mathcal E_0$ is $x_{m-p_0} - x_{m-p_k} = 1 + \iota(x_{m-p_0}) - \iota(x_{m-p_k}) - w_0$,
    \item If $p_k=m$: $\mathcal E_0$ is $x_{m-p_0} = 1 + \iota(x_{m-p_0}) - w_0$, and
    \item for all $1\leq j\leq k$, $\mathcal E_j$ is $x_{m-p_j} - x_{m-p_{j-1}} = \iota(x_{m-p_j}) - \iota(x_{m-p_{j-1}}) -w_j $.
\end{itemize}

Note that these constraint $\mathcal E_j$ are all equality constraints of the shape $x-y=c-w_i$,
with $c$ a constant in $\Nat$, except for $\mathcal E_0$ in the case where $p_k=m$ (\textit{i.e.} when the corner partition is non-diagonal), which is a constraint of the shape $x=c-w_0$.
This concludes the proof of Proposition~\ref{prop:slicezones}.
\end{proof}

\section{Proof of Proposition~\ref{prop:contWins}}\label{app:slice-based-proof}

\begin{lemma}[{\cite[Lemma~8.6.4]{Sankur13}}]\label{lm:initslice}
    Let $r$ be a region.
    There exists a DBM $N$
    such that there exists $\nu\in r$ and $\varepsilon>0$ with
    $(\Ball_{d_\infty}(\nu,\varepsilon)\cap r) \subseteq N$, 
    and such that for any shrinking matrix $Q$ so that there exists $\delta>0$ with $r-\delta Q\neq \emptyset$, we have $\exists \delta'>0$, $N\subseteq (r-\delta' Q)$.
\end{lemma}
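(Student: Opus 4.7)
The plan is to build $N$ from the canonical DBM of $r$ by leaving its equality entries intact and tightening every strict entry by a fixed small amount $\eta\in(0,1/2)$. The crucial observation is that any shrinking matrix $Q$ with $r-\delta Q\neq\emptyset$ must act as zero on the entries encoding equality constraints of $r$ (otherwise the equality would become infeasible), so $Q$ can only tighten the strict entries, and a sufficiently small $\delta'$ absorbs this tightening inside the $\eta$-margin built into $N$.

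Concretely, write the canonical DBM of $r$ as $M_{i,j}=(\preceq_{i,j},m_{i,j})$ and call $(i,j)$ an \emph{equality entry} if both $M_{i,j}$ and $M_{j,i}$ are non-strict with $m_{i,j}+m_{j,i}=0$, and a \emph{strict entry} otherwise. A direct inspection of the definition of a region shows that every strict pair $(i,j)/(j,i)$ has both entries strict with $m_{i,j}+m_{j,i}=1$ (constraints of the form $k<x<k{+}1$ or $k<x-y<k{+}1$). Fix $\eta=1/4$ and define $N$ by $N_{i,j}=M_{i,j}$ on equality entries and $N_{i,j}=(\leq,m_{i,j}-\eta)$ on strict entries. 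Then $N_{i,j}+N_{j,i}=(\leq,1/2)>(\leq,0)$ on strict pairs, so $N$ is a non-empty DBM with $N\subseteq r$.

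For the first property, let $\nu=\frac{1}{m+1}\sum_{i=0}^m c_i$ be the barycenter of the corners of $r$. As a convex combination with all positive weights, $\nu$ lies in the relative interior of $r$: it satisfies every equality of $r$, and satisfies every strict inequality of $r$ with a common positive margin $\alpha>0$ (the minimum over finitely many strict constraints). Pick $\varepsilon<\min(\alpha,\eta)/2$; then any $\nu'\in\Ball_{d_\infty}(\nu,\varepsilon)\cap r$ satisfies the equalities of $r$ (being a point of $r$) and satisfies the $\eta$-tightened strict constraints of $N$ (since each clock-difference is perturbed by less than $2\varepsilon<\eta$), so $\Ball_{d_\infty}(\nu,\varepsilon)\cap r\subseteq N$.

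For the shrinking property, let $Q=(q_{i,j})$ be a shrinking matrix with $r-\delta Q\neq\emptyset$ for some $\delta>0$. On each equality entry, non-emptiness of $r-\delta Q$ forces $\delta(q_{i,j}+q_{j,i})\leq m_{i,j}+m_{j,i}=0$; since $q_{i,j},q_{j,i}\in\mathbb{N}$ and $\delta>0$, both vanish. Set $q^\star=\max_{i,j} q_{i,j}$ over strict entries and choose $\delta'>0$ with $\delta' q^\star<\eta$ (any $\delta'$ works if $q^\star=0$). On equality entries, $(r-\delta' Q)_{i,j}=M_{i,j}=N_{i,j}$; on strict entries, $N_{i,j}=(\leq,m_{i,j}-\eta)$ is strictly tighter than $(r-\delta' Q)_{i,j}=(<,m_{i,j}-\delta' q_{i,j})$, because $m_{i,j}-\eta<m_{i,j}-\delta' q_{i,j}$. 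Hence $N\subseteq r-\delta' Q$. The only delicate point in the whole argument is the integrality claim that strict pairs in a region DBM sum to exactly $1$: it is what lets $\eta=1/4$ simultaneously keep $N$ non-empty inside $r$ and dominate every legal shrinking $\delta' Q$.
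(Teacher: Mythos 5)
Your overall strategy is the natural one (the paper itself does not reprove this lemma but imports it from \cite[Lemma~8.6.4]{Sankur13}): keep the equality entries of the canonical region DBM, tighten the strict entries into closed ones, and observe that any shrinking matrix $Q$ with $r-\delta Q\neq\emptyset$ must vanish on the equality entries. However, the fixed margin $\eta=1/4$ creates a genuine gap. Non-emptiness of a DBM cannot be certified by inspecting only the two-entry cycles: you must exclude negative cycles of every length. In the canonical DBM of a region the weight of a cycle equals the number of steps at which the fractional-part class strictly decreases, so a cycle may have weight exactly $1$ while containing up to $|\Clocks|+1$ strict entries, and subtracting $\eta$ from each of them can make it negative. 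Concretely, for the region $0<x_1<x_2<x_3<x_4<1$ your $N$ contains $x_1\geq 1/4$, $x_{i+1}-x_i\geq 1/4$ and $x_4\leq 3/4$, hence $N=\emptyset$; for $0<x_1<x_2<x_3<1$ your $N$ is the single point $(1/4,1/2,3/4)$, so no set $\Ball_{d_\infty}(\nu,\varepsilon)\cap r$ with $\nu\in r$, $\varepsilon>0$ (an infinite, relatively open subset of $r$) fits inside it. The ball step has the same flaw: a point perturbed by less than $2\varepsilon$ lands in $N$ only if it satisfies the strict constraints of $r$ with margin at least $\eta+2\varepsilon$, and the barycenter of the corners only has margin $1/(m+1)$, which drops below $1/4$ as soon as $m\geq 3$; the inequality ``$2\varepsilon<\eta$'' you invoke does not deliver this.

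The construction is repaired by letting the margin scale with the number of clocks, e.g.\ $\eta=\tfrac{1}{2(|\Clocks|+1)}$. Then the barycenter $\nu=\tfrac{1}{m+1}\sum_{i}c_i$ satisfies every strict constraint of $r$ with margin $1/(m+1)>\eta$, so $\nu\in N$ (in particular $N\neq\emptyset$, without any cycle analysis) and $\Ball_{d_\infty}(\nu,\varepsilon)\cap r\subseteq N$ for any $\varepsilon<\tfrac12\bigl(\tfrac{1}{m+1}-\eta\bigr)$; alternatively, every cycle of the tightened DBM has weight at least $1-(|\Clocks|+1)\eta>0$. With this change the remainder of your argument goes through verbatim: non-emptiness of $r-\delta Q$ forces $q_{i,j}=q_{j,i}=0$ on equality entries because their bounds sum to $0$, and choosing $\delta'\max_{i,j}q_{i,j}<\eta$ gives entrywise domination, hence $N\subseteq r-\delta' Q$. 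So the missing idea is precisely that the tightening amount must be dimension-dependent; the integrality of strict pairs, which you flag as the delicate point, is correct but is not what makes the construction work.
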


Valuations that can reach each other must belong to the same slice:
\begin{lemma}\label{lm:slicedynamics}
Let $\pi$ be a robustly iterable region cycle. 
Let $\vecw$ and $\vecw'$ be weight vectors, and $\nu,\nu'$ be valuations of $r$
so that $\nu\in\slice_\pi(\vecw)$ and $\nu'\in\slice_\pi(\vecw')$.
If there is a run from $\nu$ to $\nu'$ that follows $\pi$, then $\vecw=\vecw'$.
\end{lemma}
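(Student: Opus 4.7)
The plan is to invoke Corollary~\ref{cor:fog-distribution} on the run from $\nu$ to $\nu'$ following $\pi$: writing $\vecl=(\lambda_0,\dots,\lambda_m)$ and $\vecl'=(\lambda_0',\dots,\lambda_m')$ for the corner weights of $\nu$ and $\nu'$, this provides, for each corner $c_i$, a probability distribution $\vec{p}_i$ on the set $\mathsf{post}_i\subseteq\mathcal C$ of outgoing neighbours of $c_i$ in $\Gamma(\pi)$, such that for each $0\leq j\leq m$,
\[
\lambda_j' \;=\; \sum_{c_i\in\mathsf{pre}_j} \lambda_i\,\vec{p}_i(c_j)\,.
\]

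The key structural observation is that, since $\pi$ is robustly iterable, $\Gamma(\pi)$ is a cluster graph: its weakly connected components are exactly its complete SCCs, and these coincide with the colour classes $C_0\uplus\cdots\uplus C_k$ of the cluster corner partition. Consequently, for every $c_i\in C_l$ we have $\mathsf{post}_i = C_l$, and for every $c_j\in C_l$ we have $\mathsf{pre}_j \subseteq C_l$. In other words, in the distribution equality above, all probability mass out of a corner in $C_l$ lands within $C_l$, and all contributions to a corner in $C_l$ come from $C_l$.

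Fixing a colour $0\leq l\leq k$ and summing the distribution equation over $c_j\in C_l$, we obtain
\[
\sum_{c_j\in C_l}\lambda_j' \;=\; \sum_{c_j\in C_l}\sum_{c_i\in\mathsf{pre}_j}\lambda_i\,\vec{p}_i(c_j)
\;=\; \sum_{c_i\in C_l}\lambda_i\sum_{c_j\in C_l}\vec{p}_i(c_j)
\;=\; \sum_{c_i\in C_l}\lambda_i,
\]
where the last equality uses that $\vec{p}_i$ is a probability distribution on $\mathsf{post}_i=C_l$. By the definition of $\slice_\pi$, the left-hand side equals $w_l'$ and the right-hand side equals $w_l$, so $w_l=w_l'$ for every $0\leq l\leq k$, hence $\vecw=\vecw'$.

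The argument is essentially bookkeeping on top of Corollary~\ref{cor:fog-distribution}; the only conceptual point is recognising that the cluster-graph hypothesis is exactly what guarantees that probability mass cannot leak between colour classes, which makes the colour-wise sums of corner weights conserved along any run following $\pi$. No delicate estimate or combinatorial choice is required, so I do not anticipate a significant obstacle beyond making the index manipulation precise.
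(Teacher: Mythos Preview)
Your proof is correct and follows essentially the same approach as the paper: invoke Corollary~\ref{cor:fog-distribution}, use the cluster-graph hypothesis to identify $\mathsf{post}_i$ and $\mathsf{pre}_j$ with the colour class $C_l$, and sum the distribution identity over each colour to obtain $w_l=w_l'$. The only cosmetic difference is that you write $\mathsf{pre}_j\subseteq C_l$ where the paper uses equality (which does hold, since each $C_l$ is complete), but your weaker statement already suffices for the swap of summations.
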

\begin{proof}
By Corollary~\ref{cor:fog-distribution}, since there is a run from $\nu$ to $\nu'$ there is for each $c_i\in\mathcal C$ a probability distribution $\vec{p}_i:\mathsf{post}_i\to [0,1]$ so that 
for each $0\leq j\leq m$, $\lambda_j' = \sum_{c_i\in\mathsf{pre}_j} \lambda_i \vec{p}_i(c_j)$.
Since $\Gamma(\pi)$ is a cluster graph, the set $\mathsf{pre}_j$ (the immediate predecessors of $c_j$) is equal to the set of every corner that has the same color as $c_j$, and similarly for $\mathsf{post}_i$. This means that for every color $k$,
\[\sum_{c_j\in C_k} \lambda_j'=\sum_{c_j\in C_k} \sum_{c_i\in C_k} \lambda_i \vec{p}_i(c_j) = \sum_{c_i\in C_k} \lambda_i \left(\sum_{c_j\in C_k} \vec{p}_i(c_j)\right)=\sum_{c_i\in C_k} \lambda_i\,.\]
Then, as $\nu\in\slice_\pi(\vecw)$ and $\nu'\in\slice_\pi(\vecw')$ imply that for every color $k$, $\sum_{c_j\in C_k} \lambda_j' = w_k$ and $\sum_{c_i\in C_k} \lambda_i=w_k'$, respectively, we conclude that $\vecw=\vecw'$.
\end{proof}

On the other hand, the reachability relation is full on any given slice.
\begin{lemma}\label{lm:partComplete}
    Let $\pi$ be a robustly iterable region cycle around a region $r$.
    Let $\vecw$ be a weight vector.
    Then, for any pair $\nu,\nu'\in\slice_\pi(\vecw)$, there exists a run from $\nu$ to $\nu'$ following $\pi$.
\end{lemma}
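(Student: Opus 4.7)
The plan is to invoke Corollary~\ref{cor:fog-distribution}, which reduces the existence of a run from $\nu$ to $\nu'$ along $\pi$ to the existence, for each corner $c_i$ of $r$, of a probability distribution $\vec{p}_i : \mathsf{post}_i \to [0,1]$ such that for every $0 \leq j \leq m$, $\lambda_j' = \sum_{c_i \in \mathsf{pre}_j} \lambda_i \, \vec{p}_i(c_j)$, where $\vecl = (\lambda_0,\dots,\lambda_m)$ and $\vecl' = (\lambda_0',\dots,\lambda_m')$ are the corner weights of $\nu$ and $\nu'$ respectively. The main observation I will exploit is that since $\Gamma(\pi)$ is a cluster graph, the sets $\mathsf{post}_i$ and $\mathsf{pre}_j$ coincide exactly with the complete SCC (i.e., color class) of the cluster corner partition that contains $c_i$ and $c_j$ respectively. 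In particular, the constraints to be satisfied decouple across color classes.

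Fix a color class $C_k$ and consider the restriction of the problem to $C_k$. Since $\nu, \nu' \in \slice_\pi(\vecw)$, the definition of a slice yields $\sum_{c_i \in C_k} \lambda_i = w_k = \sum_{c_j \in C_k} \lambda_j'$, and by the definition of corner weights every $\lambda_i$ and $\lambda_j'$ is strictly positive, as is $w_k > 0$. I will then propose the explicit choice $\vec{p}_i(c_j) = \lambda_j'/w_k$ for every $c_i, c_j \in C_k$; this is a well-defined probability distribution on $C_k = \mathsf{post}_i$, since all values are positive and they sum to $1$. A direct computation gives
\[
\sum_{c_i \in \mathsf{pre}_j} \lambda_i \, \vec{p}_i(c_j) \;=\; \sum_{c_i \in C_k} \lambda_i \cdot \frac{\lambda_j'}{w_k} \;=\; \frac{\lambda_j'}{w_k} \sum_{c_i \in C_k} \lambda_i \;=\; \lambda_j',
\]
for every $c_j \in C_k$, as required.

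Assembling these distributions over all color classes yields a family $(\vec{p}_i)_{c_i \in \mathcal{C}}$ satisfying the hypothesis of Corollary~\ref{cor:fog-distribution}, from which the existence of a run from $\nu$ to $\nu'$ following $\pi$ follows immediately. There is no real obstacle here: the hard content lies in the cluster-graph structure of $\Gamma(\pi)$ (established earlier), which ensures that $\mathsf{pre}_j$ and $\mathsf{post}_i$ are exactly the color classes, making the transport problem within each slice entirely decoupled and trivially solvable via the uniform-in-$\lambda'$ rescaling above.
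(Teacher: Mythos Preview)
Your proposal is correct and follows essentially the same approach as the paper: invoke Corollary~\ref{cor:fog-distribution}, use the cluster-graph structure of $\Gamma(\pi)$ to identify $\mathsf{post}_i$ and $\mathsf{pre}_j$ with color classes, and exploit the slice condition to match total weights per color. The paper's proof is in fact less explicit than yours---it defers the construction of the distributions $\vec{p}_i$ to \cite{AB-formats11}---whereas you give the concrete formula $\vec{p}_i(c_j)=\lambda_j'/w_k$ and verify it directly.
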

\begin{proof}
Let $\nu$ and $\nu'$ be two valuations in the same slice.
By Corollary~\ref{cor:fog-distribution}, it is possible to re-distribute the corner weights of $\nu$ on the edges of the folded orbit graph in order to match the corner weights of $\nu'$: indeed, they both have the same total weight on the corners of each individual color, and the folded orbit graph is complete on each color.
The way to build the probability distributions $\vec{p}_i$ for corners of a complete SCC is the same as in \cite{AB-formats11} %
for corners in a complete folded orbit graph.
\end{proof}

We note that whenever Lemma~\ref{lm:partComplete} applies, we have by Lemma~\ref{lm:slicedynamics} that for all $\nu,\nu'\in r$, there is a run from $\nu$ to $\nu'$ following $\pi$ if and only if there is a weight vector $\vecw$ so that $\nu,\nu'\in\slice_\pi(\vecw)$.

\begin{lemma}\label{lm:sliceShrink}
    Let $\pi$ be a region cycle around a region $r$.
    Let $\vecw$ be a weight vector.
    For any shrinking matrix $Q$ so that there exists $\delta>0$ such that
    $\slice_\pi(\vecw)-\delta Q\neq \emptyset$, we have for every $\delta>0$ that $\slice_\pi(\vecw)-\delta Q = (r-\delta Q)\cap\slice_\pi(\vecw)$.
\end{lemma}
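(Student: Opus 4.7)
The plan is to leverage Proposition~\ref{prop:slicezones} to decompose $\slice_\pi(\vecw) = r \cap E$, where $E$ is a conjunction of equality constraints of the shape $x_i = c_i$ or $x_i - x_j = c_{ij}$, and then to argue that any shrinking matrix $Q$ that keeps the shrunk slice non-empty must leave $E$ intact, so that the shrinking acts only on the constraints of $r$. In the DBM representation of $\slice_\pi(\vecw)$, each equality of $E$ encodes as a pair of tight non-strict entries $M_{i,j}$ and $M_{j,i}$ of opposite signs summing to zero.

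The key step is to observe that if $\slice_\pi(\vecw) - \delta Q$ is non-empty for some $\delta > 0$, then $Q$ must vanish on every entry encoding an equality of $E$. Indeed, shrinking such a tight pair by $\delta Q$ yields constraints $x_i - x_j \leq c_{ij} - \delta Q_{i,j}$ and $x_j - x_i \leq -c_{ij} - \delta Q_{j,i}$, whose sum is $0 \leq -\delta(Q_{i,j} + Q_{j,i})$, infeasible for $\delta > 0$ unless $Q_{i,j} = Q_{j,i} = 0$.

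Since $Q$ vanishes on all entries encoding $E$, shrinking the slice by $\delta Q$ only touches entries inherited from $r$, giving $\slice_\pi(\vecw) - \delta Q = (r - \delta Q) \cap E$ for every $\delta > 0$. Combining this with $\slice_\pi(\vecw) = r \cap E$ and $r - \delta Q \subseteq r$ yields
\[
(r - \delta Q) \cap \slice_\pi(\vecw) = (r - \delta Q) \cap (r \cap E) = (r - \delta Q) \cap E = \slice_\pi(\vecw) - \delta Q.
\]

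The main obstacle is the DBM bookkeeping required to justify that shrinking the slice's DBM by $\delta Q$ cleanly splits into shrinking $r$'s DBM and leaving $E$'s equalities untouched. This relies on working with the explicit (non-canonicalized) DBM representation of $\slice_\pi(\vecw)$, in which the constraints from $r$ and the equalities from $E$ occupy distinct, independently addressable entries; this is consistent with the paper's treatment of shrunk DBMs, as in Lemma~\ref{lm:cpredeltaslice}.
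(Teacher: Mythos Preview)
Your proposal is correct and follows essentially the same approach as the paper: decompose the slice via Proposition~\ref{prop:slicezones} as $r$ intersected with equality constraints, observe that equality constraints cannot be shrunk without emptying the zone (so $Q$ must vanish on those entries), and conclude that the shrinking acts only on the constraints of $r$. The paper's proof is terser but makes the identical argument, and your extra DBM bookkeeping paragraph merely spells out what the paper leaves implicit.
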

\begin{proof}
    First, since $\slice_\pi(\vecw)\subseteq r$ we have $\slice_\pi(\vecw)-\delta Q = (r\cap\slice_\pi(\vecw))-\delta Q$.
    Then, by Proposition~\ref{prop:slicezones}, every slice can be described as the intersection of $r$ and a set of linear equalities of the shape $x-y=c$ or $x=c$. These constraints cannot be shrunk without making $\slice_\pi(\vecw)-\delta Q$ empty. Therefore, any non-zero entry in the shrinking matrix $Q$ must happen on a constraint of $r$, so that
    $(r\cap\slice_\pi(\vecw))-\delta Q = (r-\delta Q)\cap\slice_\pi(\vecw)$, wich lets us conclude.
\end{proof}

\newpage

\contWins*

\begin{proof}%
Given a robust region lasso $\pi_0\pi^\omega$,
so that $\pi_0$ starts from $(\ell_0,\{\mathbf{0}\})$, $\pi$ is a cycle around a region state $(\ell,r)$
with $\ell$ a winning state for the Büchi objective, $\pi_0$ is robust and $\pi$ is robustly iterable,
we argue that:
\begin{enumerate}
    \item Fix $N,\nu_N,\varepsilon$ by Lemma~\ref{lm:initslice} so that $\Ball_{d_\infty}(\nu_N,\varepsilon)\cap r \subseteq N$
    and for any shrinking matrix $Q$ so that there exists $\delta>0$ with $r-\delta Q\neq \emptyset$, we have $\exists \delta'>0$, $N\subseteq r-\delta' Q$.
    \item Let us show that $N$ is winning for $\cont$. It is enough to show that for all $\nu\in N$, there is a set $\mathcal N_\nu\subseteq r$ so that $\nu\in\mathcal N_\nu$ and $\mathcal N_\nu \subseteq\CPre^\delta_\pi(\mathcal N_\nu)$, \textit{i.e.}~$\cont$ has a strategy that remains in $\mathcal N_\nu$ when starting from an valuation in $\mathcal N_\nu$, so that it can iterate $\pi$ forever from $\nu$.
   \item Fix any valuation $\nu\in N$. Let $\vecw$ be the weight vector so that $\nu\in\slice_\pi(\vecw)$ by Lemma~\ref{lm:slicepartition}.
   Let $\mathcal N_\nu=N\cap\slice_\pi(\vecw)$. It is a DBM by Proposition~\ref{prop:slicezones}.
   \item We show that for small enough $\delta>0$, $\mathcal N_\nu \subseteq\CPre^\delta_\pi(\mathcal N_\nu)$.
   \begin{enumerate}
        \item Use Lemma~\ref{lm:partComplete}, $\nu\in\slice_\pi(\vecw)$ and $\nu\in\mathcal N_\nu$ to show that: \[\slice_\pi(\vecw)=\CPre_\pi^0(\mathcal N_\nu)\]
        \item Use Lemma~\ref{lm:cpredeltaslice} to show that there exists a shrinking matrix $Q$ 
        so that:
        \[\forall\delta,\CPre^\delta_\pi(\mathcal N_\nu)=\slice_\pi(\vecw)-\delta Q\]
        \item Use Lemma~\ref{lm:ballslice} to show $\slice_\pi(\vecw)-\delta Q\neq \emptyset$
        \item By Lemma~\ref{lm:sliceShrink}, we have $\slice_\pi(\vecw)-\delta Q = (r-\delta Q)\cap\slice_\pi(\vecw)$.
        \item By definition of $N$, for small enough $\delta>0$, $N\subseteq r-\delta Q$, so that  \[N\cap\slice_\pi(\vecw)\subseteq (r-\delta Q)\cap\slice_\pi(\vecw)\]
        \item Conclude that there exists $\delta>0$ such that
        \[\mathcal N_\nu\subseteq \CPre^\delta_{\pi}(\mathcal N_\nu)\]
    \end{enumerate}
\end{enumerate}
This means that starting from $\nu\in\mathcal N_\nu$, Controller has a strategy that always remain inside $\mathcal N_\nu$, at each iteration of $\pi$.
Therefore, controller can robustly follow $\pi^\omega$ when starting from any valuation $\nu$ in $N$.
Moreover, by Lemma~\ref{lm:ballslice}, $\CPre^\delta_{\pi_0}(N)$ contains the only valuation in the initial region $\mathbf{0}$. Hence, Controller has a winning strategy that robustly follows the lasso $\pi_0\pi^\omega$.
\end{proof}

\newpage
\section{Proof of Proposition~\ref{prop:pertWins}}
\label{app:pertWins}
Let $\nu$ be a valuation in $r$ of dimension $m$ such that $\nu$ has corner weights $\vecl=(\lambda_0,\dots,\lambda_m)$ in $r$.
Let $I$ be a subset of corner of $r$.
We define the function 
\begin{align*}
    L_I :~ &\overline{r} \to \Realnn,\\
    &\nu \mapsto \sum_{c_i \in I} \lambda_i c_i.
\end{align*}
We will study the dynamics of $L_I$ for region cycles whose folded orbit graph is not a cluster graph. 

For a fixed $\delta > 0$, we define the strategy $\sigma_\delta^p$ for $\pert$ as follows: 
After each delay $\nu \xrightarrow{d} \nu'$ played by \cont, consider a region $r$ such that $\nu' + [\alpha, \beta]\subseteq r$ for some $0<\alpha<\beta<\delta$ where $\beta - \alpha \geq \frac{\delta}{|\Clocks| +1}$. Such a region must exist by definition of the game, otherwise $\cont$ played an illegal move. $\pert$ then applies a perturbation of $\frac{(\beta-\alpha)}{2}$. 
This strategy guarantees a time progress of at least $\epsilon = \frac{\delta}{2(|\Clocks| + 1)}$.

A close inspection of the proof of Lemma~8.5.15 of \cite{Sankur13} shows that both these observations hold in our case. We restate this lemma in our setting in the following lemma:

\begin{lemma}\cite{SBMR-concur13}
\label{lm:LiapBound}
Let $\rho$ be a prefix of a run in $\outcome(-, \sigma_\delta^P)$ such that:
\begin{itemize}
    \item $\pi$ the projection of $\rho$ over regions is a cycle with at least one non-punctual edge,
    \item $\Gamma(\pi)$ contains at least two weakly connected SCCs $I$ and $J$,
    \item $\rho$ starts from $\nu$ and ends in $\nu'$,
\end{itemize}  
then the following equation holds true:
\[
  L_I(\nu') \leq L_I(\nu) - \frac{\epsilon^2}{2}
  \enspace,
\]
\end{lemma}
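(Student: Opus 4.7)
The plan is to treat $L_I$ as the total corner-weight mass sitting on the corners of $I$ (reading the definition as $L_I(\nu)=\sum_{c_i\in I}\lambda_i$, the only interpretation consistent with the scalar inequality in the conclusion) and to show that this Lyapunov-like quantity drops by at least $\epsilon^2/2$ along $\rho$ under $\sigma_\delta^P$. The argument rests on two complementary facts: a conservation principle stating that unperturbed pieces of $\pi$ preserve $L_I$, and a quantitative leakage estimate for the non-punctual delay at which $\sigma_\delta^P$ enforces a genuine perturbation.

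First I would track the evolution of the corner-weight vector along each atomic transition of $\rho$ via Lemma~\ref{lm:orbit-graph-distribution}, which represents the reachability along $\pi$ as a family of probability distributions on edges of $\gamma(\pi)$. Because $I$ is a weakly connected component of $\Gamma(\pi)$, no edge of $\gamma(\pi)$ crosses between the weakly connected set hosting $I$ and its complement; iterating $\pi$ without perturbation therefore maps corner-weight mass on $I$ to corner-weight mass on $I$, so that $L_I$ is invariant. Consequently every decrease of $L_I$ can be attributed to a perturbed delay transition, and it suffices to bound from below the leak produced at one such delay.

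Next I would quantify that leak at the distinguished non-punctual edge guaranteed by the hypothesis. By construction of $\sigma_\delta^P$, the actual delay played differs from the one $\cont$ announced by at least $\epsilon=\frac{\delta}{2(|\Clocks|+1)}$, and moreover $\cont$'s move is legal only if the corner-weight coordinates stay at distance at least $\epsilon$ from the boundary of the target region. Interpreting the perturbation geometrically inside the simplex spanned by the corners, the extra $\epsilon$ of time corresponds to activating an orbit-graph transition that sends mass from a corner $c$ lying in the weakly connected component of $I$ to a corner $c'$ outside that component — if no such cross-component transition existed, then $I$ and $J$ would have been connected in $\Gamma(\pi)$, contradicting the assumption. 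Combining an $\epsilon$-sized amount of available mass on $c$ with an $\epsilon$-sized displacement produces a transfer whose magnitude is proportional to $\epsilon^2$, the factor $1/2$ coming from the triangular shape of the region swept inside the simplex.

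The main obstacle I anticipate is the careful bookkeeping that identifies which corner of $I$ is pushed out and verifies that its weight is at least $\epsilon$ at the precise moment of the perturbation; this is really an argument about the geometry of a right triangle of side $\epsilon$ that is moved out of $I$ when a non-punctual delay is perturbed. Once this local estimate is rigorously established, combining it with the conservation principle on the rest of $\rho$ yields $L_I(\nu')\leq L_I(\nu)-\epsilon^2/2$, as required. If $\rho$ traverses several perturbed non-punctual edges, the bound is only strengthened, since each contributes an independent $\epsilon^2/2$ drop and $L_I$ is monotone along the unperturbed portions.
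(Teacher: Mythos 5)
There is a genuine gap, and it starts with a misreading of the hypothesis. In this paper (and in \cite{SBMR-concur13,Sankur13}), ``two weakly connected SCCs $I$ and $J$'' means two distinct SCCs lying in the \emph{same} weakly connected component of $\Gamma(\pi)$, with edges from $I$ towards $J$ but no path back (cf.\ the sketch of Proposition~\ref{prop:pertWins}, where $I=\{c_0,c_2\}$, $J=\{c_1\}$ and $c_0\rightarrow c_1$ is an edge); it does \emph{not} mean that $I$ is itself a weakly connected component. Your argument hinges on the latter reading: you first claim that no edge of $\gamma(\pi)$ leaves the ``weakly connected set hosting $I$'', hence $L_I$ is conserved along unperturbed steps, and then claim that the perturbed delay ``activates'' a transition from that set to its complement, whose existence you deduce from the hypothesis. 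These two claims contradict each other, and the second is structurally impossible: by Lemma~\ref{lm:orbit-graph-distribution} the corner-weight flow must follow edges that are already present in $\gamma(\pi)$, and by definition no edge crosses between distinct weakly connected components, perturbation or not. The actual mechanism is the opposite of your conservation principle: because $I$ and $J$ are in the same weak component, $\Gamma(\pi)$ \emph{does} contain edges carrying mass out of $I$ (and, taking $I$ initial as in Corollary~\ref{cor:not-robust}, none bringing mass back), so $L_I$ is merely non-increasing along the cycle, and the role of $\sigma_\delta^P$ is to force a quantified amount of mass through those existing outgoing edges at the perturbed step.

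The second gap is that the quantitative bound is never established. The statement that $\cont$'s move is legal only if the corner weights stay at distance at least $\epsilon$ from the boundary is asserted without proof, and the ``right triangle of side $\epsilon$'' picture is a heuristic, not an argument: one must show that \emph{every} legal $\cont$ choice, combined with a perturbation of magnitude at least $\epsilon=\frac{\delta}{2(|\Clocks|+1)}$, forces at least $\frac{\epsilon^2}{2}$ of weight to leave $I$, by analysing how time elapse and resets act on the weight vector (this is exactly the content of Lemma~8.5.15 of \cite{Sankur13}). Note that the paper itself does not reprove this estimate: it invokes the proof of \cite{SBMR-concur13,Sankur13} and only argues that it still applies here because the cycle contains at least one non-punctual edge, so $\pert$ gets to apply a perturbation (hence a time progress of $\epsilon$) at least once per traversal, while punctual edges simply contribute no perturbation. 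Your instinct to localise the leak at the non-punctual edge is the right adaptation idea, but without the corrected structural picture and the actual $\epsilon^2/2$ computation the proof does not go through.
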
 
The above lemma entails the intuition that if $\cont$ tries to iterate over $\pi$ to enforce $\rho$, then they will eventually be forced to propose delays smaller that $\delta$ which is not allowed by definition of our game semantics. This is formalized in the following corollary.

\begin{corollary}\label{cor:not-robust}
  Let $\rho$ be a run such that its projection over regions is a lasso $\pi_0\pi^\omega$ and assume that for any $k>0$, $\Gamma(\pi^k)$ is not robustly iterable, then $\rho$ cannot be in $\outcome(-,\sigma_\delta^P)$.
\end{corollary}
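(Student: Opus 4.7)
The plan is to iterate Lemma~\ref{lm:LiapBound} along the infinite suffix of $\rho$ and derive a contradiction from the fact that the potential $L_I$ is bounded below. Before doing so, I dispose of the degenerate case where $\pi$ is not a robust path: then $\pi$ contains an atomic sub-path $(\ell,r)\xrightarrow{\mathsf{delay}}(\ell,r')\xrightarrow{g,R}(\ell',r'')$ with $r'$ punctual and $g$ non-punctual, so no choice of $d\geq\delta$ by $\cont$ can satisfy $\nu+[d-\delta,d+\delta]\subseteq r'$. Thus $\rho$ cannot appear in $\outcome(\cdot,\sigma_\delta^P)$ for any $\cont$ strategy. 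In what follows, assume $\pi$ is robust; by stability of robustness under concatenation, $\pi^k$ is robust for every $k\geq 1$.

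Since $\pi^k$ is never robustly iterable, Lemma~\ref{lm:robCycle} lands us in its second case: there exists some $1\leq k_0\leq(m+1)!$ (with $m$ the dimension of $r$) such that $\Gamma(\pi^{k_0})$ contains a weakly connected component that is not strongly connected. The condensation of this component furnishes two SCCs $I$ and $J$ inside it with $J$ reachable from $I$ but not vice versa, which is exactly the configuration invoked in the sketch of Proposition~\ref{prop:pertWins}. Moreover, $\pi^{k_0}$ must contain at least one non-punctual edge, since a cycle composed exclusively of punctual transitions has a cluster-graph folded orbit graph (as noted in the discussion following Proposition~\ref{prop:contWins}), contradicting the choice of $k_0$.

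Since $\rho$ projects onto the lasso $\pi_0\pi^\omega$, after its prefix along $\pi_0$ it traverses $\pi^{k_0}$ infinitely often; let $\nu^{(0)},\nu^{(1)},\dots$ denote the valuations at $(\ell,r)$ at the start of successive $\pi^{k_0}$-traversals. Each such traversal is a prefix of a run in $\outcome(\cdot,\sigma_\delta^P)$ whose region projection is the cycle $\pi^{k_0}$ (containing a non-punctual edge), and whose folded orbit graph contains the weakly connected SCCs $I,J$. Lemma~\ref{lm:LiapBound} therefore yields
\[
L_I(\nu^{(n+1)})\;\leq\;L_I(\nu^{(n)})-\frac{\epsilon^2}{2}
\qquad\text{for every } n\geq 0,
\]
with $\epsilon=\delta/(2(|\Clocks|+1))$. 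But $L_I$ is bounded below by $0$ (being a sum of nonnegative corner weights), so this strict monotone decrease by a fixed positive amount cannot continue indefinitely, yielding the desired contradiction.

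The main obstacle is purely book-keeping: one must verify that the SCCs $I$ and $J$ supplied by the second case of Lemma~\ref{lm:robCycle} genuinely fit the "two weakly connected SCCs" phrasing of Lemma~\ref{lm:LiapBound}, namely that they lie in the same weakly connected component with asymmetric reachability, and that the hypothesis of containing a non-punctual edge is automatic in this regime. Once these syntactic checks are in place, the rest is a one-line Lyapunov argument.
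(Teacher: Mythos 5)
Your argument is essentially the paper's own proof: use Lemma~\ref{lm:robCycle} to obtain a power $\pi^{k_0}$ whose folded orbit graph has a weakly connected component that is not strongly connected, then iterate Lemma~\ref{lm:LiapBound} along the successive traversals of that cycle and contradict the nonnegativity of $L_I$ (the paper phrases the same contradiction by choosing $n$ with $n\epsilon^2/2>1$). Your two extra book-keeping checks --- discarding the case where $\pi$ is not a robust path before applying Lemma~\ref{lm:robCycle}, and verifying the non-punctual-edge hypothesis of Lemma~\ref{lm:LiapBound} --- are left implicit in the paper but are correct and harmless.
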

\begin{proof}
  Assume towards a contradiction that $\rho$ is in $\outcome(-,\sigma_\delta^P)$.
  By Lemma~\ref{lm:robCycle}, there exists $k > 0$ such that 
  $\Gamma(\pi^k)$ contains at least two weakly connected SCCs.
  Let $I$ be an initial SCCs in $\Gamma(\pi^k)$, and let $(\ell, \nu)$ be a state visited by $\rho$ such that $\nu$ is in the region spanned by the corners of $\Gamma(\pi)$.
  Now denote $\nu_n$ the valuation along $\rho$ after $n$ iteration of the cycle described by $\Gamma(\pi^k)$
  by Lemma~\ref{lm:LiapBound} we have:
  \begin{align*}
      0 \le L_I(\nu_n) < L_I(\nu) - n\frac{\epsilon^2}{2}
      \enspace.
  \end{align*}
  By definition of $\sigma_\delta^P$, 
  $n$ can be chosen such that: 
  \[
  n\frac{\epsilon^2}{2} > 1
  \enspace.
  \]
  This contradicts the non negativity of $L_I(\nu_n)$.
\end{proof}

In order to conclude we still need to argue that $\cont$ cannot win by 
switching between different non robustly iterable cycles. 
To see that, take any infinite run $\rho$. 
Since there exists a finite number of folded orbit graphs, using Ramsey like arguments
(c.f. Theorem~4.5.2 of~\cite{Sankur13}) we get that the projection over regions of the run $\rho$ can be decomposed as $\pi_0\pi_1\pi_2 \ldots \pi_n \ldots$ such that
$\Gamma(\pi_1) = \Gamma(\pi_2) = \Gamma(\pi_n) = \ldots$.
By assumption, $\Gamma(\pi_1)$ cannot be robustly iterable, hence Corollary \ref{cor:not-robust} implies that $\rho$ cannot be in $\outcome(-,\sigma_\delta^P)$.
In other words, for any $\cont$ strategy $\sigmacont$, $\outcome(\sigmacont, \sigma_\delta^P)$ cannot be at the same time an infinite run and a run that visits locations in $\mathsf{Buchi}$ infinitely often,
hence $\sigmacont$ is not winning for the Büchi objective.

\end{document}